%% file: ex_article.tex
\documentclass[
review,
onefignum,
onetabnum]{siamart171218}

\input{ex_shared}

\ifpdf
\hypersetup{
  pdftitle={Convex Analysis of Relaxation Dynamics in Chemical Reaction Networks and Generalized Gradient Flows},
  pdfauthor={K. Sugie, D. Loutchko, T. J. Kobayashi}
}
\fi

\myexternaldocument{ex_supplement}

\begin{document}
\nolinenumbers
\maketitle

% REQUIRED
\begin{abstract}
    We obtain bounds on the Kullback--Leibler divergence to equilibrium for mass-action chemical reaction networks (CRNs) with equilibrium. The associated decay rates are characterized in terms of the singular values of the stoichiometric matrix, convexity parameters, and time-integrated activities via deformed-exponential-type functions. We further extend these bounds within a generalized gradient flow framework. We highlight the biological relevance of this framework: the resulting bounds apply to quasi-steady-state regimes, where long transients and plateau-like behavior are common and functionally important. We illustrate the framework using a catalytic CRN exhibiting plateaus, where the bounds capture slow relaxation induced by local convexity and provide a bound-based approach to quantifying relaxation in CRNs.
    % \DL{[Maybe make the first sentence more general or move to the end, aka ``We demonstrate how the developed framework has important biological applications/implication.'']}Relaxation dynamics of chemical reaction networks (CRNs) can show long transients and plateaus that are relevant to biological quasi-steady \DL{state} regimes. We develop an analytical framework for such relaxation using generalized gradient flows, derive bounds on Bregman divergences along the flow, and obtain deformed exponential type bounds on the Kullback--Leibler divergence in equilibrium mass-action CRNs. The rates in these bounds are expressed through singular values of the stoichiometric matrix, convexity parameters, and time-integrated activities. We illustrate the framework on a catalytic CRN with plateau behavior, where the bounds capture slow relaxation induced by local convexity, providing a bound-based approach for quantifying relaxation in CRNs.
    \end{abstract}

% REQUIRED
\begin{keywords}
  Chemical Reaction Networks, Mass-Action Kinetics, Slow Relaxation, Gradient Flows, Convex Analysis, Non-equilibrium Thermodynamics
\end{keywords}

% REQUIRED
\begin{AMS}
  53B12, 80A30, 90C25, 90C90, 92C45
\end{AMS}
% 53B12  Differential geometric aspects of statistical manifolds and information geometry
% 80A30  Chemical kinetics in thermodynamics and heat transfer
% 92C45  Kinetics in biochemical problems (pharmacokinetics, enzyme kinetics, etc.)
% 90C25  Convex programming
% 90C90  Applications of mathematical programming

% \begin{lemma}[test]
%     \begin{sublems}
%         \item \label[assumption]{ass:U} upupup
%         \begin{align}
%             f(x) = g(x)
%         \end{align}
%         \item \label[assumption]{ass:L} lowlowlow
%         \item \label[assumption]{ass:U2} up2
%     \end{sublems}
% \end{lemma}
% \Cref{ass:U}
% \Cref{ass:L}
% \Cref{ass:U2}

\section{Introduction}
This work aims to establish an analytical description of relaxation in chemical reaction networks (CRNs) by fully exploiting their underlying mathematics, beyond numerical approaches. 

CRNs provide a mathematically rich framework for describing interacting particle systems with constraints arising from stoichiometry and thermodynamics. 
They have been extensively studied in both stochastic and deterministic formulations \cite{Horn1972,Feinberg_2019,Kurtz1972,Agazzi2018}, including the chemical master equation, its deterministic limit, and related large-deviation principles.

Within this broad class, equilibrium CRNs form a particularly tractable subclass: their dynamics with mass-action kinetics admit the Kullback–Leibler (KL) divergence as a Lyapunov function, guaranteeing convergence to equilibrium.
Equilibrium CRNs are also reversible, which makes them a natural setting for nonequilibrium thermodynamics.
They underpin fluctuation theorems \cite{Schmiedl2007,Rao2018,Korbel2021}, thermodynamic inequalities \cite{Yoshimura2021,Nagayama2025,Yoshimura2021thermodynamic,Chun2023}, and decompositions of dissipation \cite{Kobayashi2022,Dechant2022,Yoshimura2023}.
More recently, these systems have been reformulated using the language of gradient flows and information geometry for more general kinetics \cite{Kobayashi2023,Ito2018,Mizohata2024,Yoshimura2021,loutchko2025}, revealing a deep geometric structure behind their relaxation dynamics.

Despite this favorable mathematical structure, analytical studies of relaxation in CRNs remain limited.
Existing work on slow or anomalous relaxation \cite{Awazu2009,Awazu2010,himeoka2024} has relied primarily on numerical simulations and phenomenological observations, with comparatively little use of the available gradient-flow or thermodynamic formulations.
In particular, \cite{Awazu2009} examined a special class of equilibrium CRNs whose dynamics admit such structures in principle, but without fully exploiting them to derive analytical bounds or quantitative characterizations of relaxation.

In this work, we establish an analytical description of relaxation in equilibrium CRNs by systematically exploiting their underlying generalized gradient flow structure.
We quantify relaxation through rigorous bounds on the Bregman divergence, thereby providing a framework that goes beyond numerical observation and connects relaxation behavior directly to geometric and thermodynamic properties of the network.

From an applied perspective, such bounds for slow or quasi-steady relaxation is of broad interest in biology and biophysics.
Living systems are typically maintained far from their ultimate terminal state (death); for example, in resource-limited environments, cells will eventually die but can remain in a quasi-steady dormant state for extended periods.
CRNs exhibiting slow relaxation toward steady states therefore offer a minimal theoretical model for such situations, and the factors controlling this behavior have been explored in several biological and chemical contexts \cite{Awazu2009,Awazu2010,himeoka2024,Himeoka2025}.
The analytical framework developed here provides a mathematical basis for understanding these phenomena.

The contributions of this work are twofold:

The first contribution is to provide theoretical bounds on the Bregman divergence along generalized gradient flows and, as a special case, on the KL divergence for equilibrium CRNs.
% We introduce a corollary that yields an upper bound on the KL divergence.
Informally, one of our results, \Cref{cor:bounds in quadratic mass action CRNs}, states \DL{ the following:}
If the positive orthant $\X = \R^N_{>0}$ is positively invariant in an equilibrium CRN, then the KL divergence $D_\mathrm{KL}(\bm x_T \| \bm x_{\mathrm{eq}})$ satisfies the following upper bound:
\begin{align}
    \label{eq:main result for CRNs}
    D_\mathrm{KL}(\bm x_T \| \bm x_{\mathrm{eq}}) \leq 
    D_\mathrm{KL}(\bm x_0 \| \bm x_{\mathrm{eq}}) \cdot \exp \qty[  - 2\sigma_{\rk(S)}^2 \low{\rho^{\phi_\mathrm{KL}}_{\bm x_\mathrm{eq}}} \low{\Omega}_{\mathrm{log}}^\mathrm{KL}(\bm{x}_{0:T}) ],
\end{align}
which \DL{is} determined by three components:
(i) a smallest non-zero singular value $\sigma_{\rk(S)}$ of the stoichiometric matrix,
(ii) a global convexity parameter $\low{\rho^{\phi_\mathrm{KL}}_{\bm x_\mathrm{eq}}}$ of the KL divergence, and 
(iii) a time-integrated minimal activity $\low{\Omega}_{\mathrm{log}}^\mathrm{KL}(\bm{x}_{0:T})$.
The \DL{parameter $\low{\rho^{\phi_\mathrm{KL}}_{\bm x_\mathrm{eq}}}$} characterizes the global convexity along the solution orbit\KS{; we also introduce its state-dependent analogue $\rho^{\phi_\mathrm{KL}}_{\bm x_\mathrm{eq}}(\bm x_t)$, describing the local convexity.} 
% \DL{[Is extended the right word? Maybe rather specified (from global to local..)]}
% to a state-dependent parameter $\rho^{\phi_\mathrm{KL}}_{\bm x_\mathrm{eq}}(\bm x_t)$ describing the local convexity. 
These results follow from \Cref{thm:bounds of Bregman divergence in generalized equilibirum flow,thm:bounds under locally convexity assumption}, formulated \DL{ for} generalized gradient flows \cite{Kobayashi2023}.

The \DL{ other} contribution concerns the behavior of our bounds for CRNs exhibiting plateaus.
For the example in \cite{Awazu2009}, we numerically confirm that our bound can reproduce the plateau behavior.
Interestingly, while the bound based on local convexity reproduces the plateaus, the global one fails;
this suggests that plateaus in our bounds can be controlled by the presence or absence of local convexity.
To the best of our knowledge, this bound-based relaxation analysis is novel itself.

% In addition to the main theorems, we make ancillary contributions: first, we introduce a novel convex-analytic approach to generalized gradient flows endowed with reaction fluxes and forces, 
% % which live in spaces of dimension different from that of the state space and are 
% absent in standard gradient flows 
In addition to the main theorems, we make ancillary contributions: \KS{first, we extend convex-analytic techniques for standard gradient flows of convex functions to generalized gradient flows endowed with reaction fluxes and forces, whose variables live in spaces of dimension $M$, typically different from the state dimension $N$.}
% \DL{[Do you mean that we don't work on the tangent/contangent space?]}.
Second, by exploiting their information geometry and introducing an \emph{effective concentration} that avoids rank-deficiency of stoichiometric matrices, we establish a bound on the force norm.

% \subsection{Organizations of This Paper}
This work is organized as follows: 
In \Cref{sec:mathematical preliminaries}, we review preliminaries on CRNs and generalized gradient flows in \cite{Kobayashi2023}. 
In \Cref{sec:convergence analysis of relaxation in generalized gradient flows}, we formulate convexity conditions for the divergence and state our main results on bounds and their corollaries for CRNs.
In \Cref{sec:convergence analysis and characterization of slow relaxation in mass-action reaction systems}, we numerically confirm our bounds for a plateau-exhibiting CRN studied in \cite{Awazu2009}.
Finally, we conclude this paper and discuss the remaining topics in \Cref{sec:Conclusion and Discussion}.

\section{Mathematical Preliminaries}
\label{sec:mathematical preliminaries}
The basic notations are deferred to \ref{sec:standard notations}.
% \DL{Should we refer to SM for the basic notations here?}
\subsection{Chemical Reaction Networks: Structure and Dynamics}
This section presents the mathematical foundations of CRNs. We begin with the structural object, called the stoichiometric matrix.
We then introduce the corresponding ODE, known as the chemical rate equation, and state an assumption on its solutions.
\subsubsection{Structural Properties of CRNs}
\label{subsubsec:strucural properties of CRNs}
Let $X_1, \cdots, X_N$ denote $N$ \textit{chemical species}, and $R_1, \cdots, R_M$ denote $M$ \textit{chemical reactions}.
The (reverisble) chemical equation of the reaction $R_r\, (1\leq r \leq M)$ is formally given as
\begin{align}
   R_r:  \sum_{i \in \Spc_N} s_{ir}^+ \cdot X_i  \rightleftharpoons \sum_{i \in \Spc_N} s_{ir}^- \cdot X_i,
\end{align}
where $s_{ir}^+ \in \Z_{\geq 0}$ is the reactant coefficient of the species $i$, and $s_{ir}^- \in \Z$ is $i$-th product coefficient.
The changes in molecular numbers are represented by the \textit{stoichiometric matrix} $S \in \Z^{N \times M}$, whose $(i,r)$-element is defined as follows:
\begin{align}
    \label{eq:stoichiometric matrix}
    S_{ir} := s_{ir}^+ - s_{ir}^-.
\end{align}

The singular value decomposition (SVD) of the stoichiometric matrix determines systemic chemical reactions, or eigen-reactions, and provides a basis for comparative structural analysis of different metabolic networks \cite{FAMILI200387}.
% \cite{FAMILI200387, PALESE2012151}.
The stoichiometric matrix $S$ is decomposed as $S = \hat U \Sigma \hat V^\top$, where $\hat V = (\bm{\hat v}_1, \cdots, \bm{\hat v}_N) \in \R^{N \times N}, \hat U = (\bm{\hat u}_1, \cdots, \bm{\hat u}_M) \in \R^{M \times M}$ are orthonormal matrices (which means $\hat V^\top \hat V = I_N$ and $\hat U^\top \hat U = I_M$), and the diagonal matrix
$\Sigma = \mathrm{diag}\,(\sigma_1, \cdots, \sigma_{\rk(S)}, 0,\cdots, 0) \in \R^{N \times M}$ is the singular matrix of $S^\top$ with the singular values $\sigma_1 \geq \sigma_2 \geq \cdots \geq \sigma_{\rk(S)} > 0$.
Here, $\rk(S) (\leq \min(N,M))$ is the rank of the matrix $S$, and $\sigma_{\rk(S)}$ is the smallest positive eigenvalue of $S$.
The sets of vectors $\qty{\bm{\hat v}_i \mid i= 1, \cdots, N}$ and $\qty{\bm{\hat u}_r \mid r= 1, \cdots, M}$ are orthonormal bases of $\R^N$ and $\R^M$, respectively.
Each basis vector $\bm{\hat v}_i \in \R^N$ is transformed by $S^\top$ as 
\begin{align}
    \label{eq:basis by stoichiometric matrix}
    S^\top \bm{\hat v}_i = 
    \begin{cases}
        \sigma_i \bm{\hat u}_i & i \leq \rk(S), \\
        \bm 0 & i > \rk(S).
    \end{cases}
\end{align}
The bases $\qty{\bm{\hat v}_i}$ and $\qty{\bm{\hat u}_r}$ associated % with the SVD 
can be used to represent the four linear subspaces $\Im S, \Ker S^\top \subseteq \R^N, \Im S^\top,\Ker S \subseteq \R^M$ as
$\Im S = \mathrm{span}(\qty{\bm{\hat v}_i \mid 1\leq  i\leq \rk(S)})$,
    $\Ker S^\top = \mathrm{span}(\qty{\bm{\hat v}_i  \mid \rk(S) < i \leq n})$,
    $\Im S^\top = \mathrm{span}(\qty{\bm{\hat u}_r \mid 1 \leq  r\leq \rk(S)})$,
    $\Ker S = \mathrm{span}(\qty{\bm{\hat u}_r \mid \rk(S) < r \leq m})$.

\subsubsection{Dynamics on Thermodynamic CRNs}
\label{subsubsec:dynamics on CRNs}
% CRN dynamics
% \DL{[We never use the polynomials from mass action kinetics but the gradient flow formulation with dissipation functions.
% Would it be better to introduce the gradient flow formulation here right away? We can follow [Cosh gradient systems and tilting by
% MA Peletier, A Schlichting] or our recent submission to information geometry, except that we use $S$ from the previous paragraph instead of div and grad.]}
We introduce the concentration dynamics associated with the thermodynamic CRN.
 The $i$-th element $x_i$ of $\bm x \in \R_{\geq 0}^N$ represents the \textit{concentration} of the chemical $X_i$.
The vector $\bm x$ is called the \textit{concentration vector} or just \textit{state}.
Two differentiable functions $j_r^\pm : \R_{\geq 0}^N \to \R_{\geq 0}$ such that 
\begin{align}
    \label{eq:compatibility of rates and stoichiometry}
    j^\pm_r(\bm x) > 0\Longleftrightarrow \mathrm{supp}\,(\bm s^\pm_r) \subseteq \mathrm{supp}\,(\bm x)
\end{align}
are called the forward and backward \textit{reaction rates} of the reaction $r$, respectively.
For thermodynamic consistency, all reactions are reversible, i.e, all forward reactions are paired with their backward ones.
We also denote the reaction rate vector as $\bm j^\pm : \R_{\geq 0}^N \to \R_{\geq 0}^M$.
In particular, the reaction rates are called \textit{mass-action reaction rates} if, for any $1\leq r \leq M$, the $c_r^\pm$ have the form
\begin{align}
    c_r^\pm(\bm x) = k_r^\pm \cdot \prod_{i=1}^N x_i^{s_{ir}^\pm}.
\end{align}
Here, $k_j\pm \in \R_{>0}$ are a certain $\bm x$-independent parameters called rate constants. \Cref{eq:compatibility of rates and stoichiometry} holds for mass-action reaction rates.

The state $\bm x_t$ is assumed to follow the deterministic \textit{chemical rate equation}:
\begin{align}
    \label{eq:CRE}
    \dv{\bm x_t}{t} = -S (\bm j^+(\bm x_t) - \bm j^-(\bm x_t) ).
\end{align}
The concentration $\bm x_t$ of \cref{eq:CRE} is always restricted in the linear subspace $\bm{x}_0 + \Im S \subseteq \R^N$ at any time $t$ because 
% \begin{align}
    % \label{eq:concentration is restricted to the subspace}
    $\bm{x}_t = \bm x_0 - S \int_0^t \qty[ \bm j^+(\bm x_\tau) - \bm j^-(\bm x_\tau) ] \dd  \tau \in \bm{x}_0 + \Im S$.
% \end{align}
The \textit{stoichiometric subspace} of $\R_{>0}^N$ with respect to $\bm{\tilde x}$ is defined as $\Stoich(\bm{\tilde x}) := ( \bm{\tilde x} + \Im S ) \cap \R_{>0}^N$.
Under the positivity assumption, which is introduced in \Cref{subsubsec:Assumptions on Solutions of Chemical Rate Equation}, the concentration $\bm x_t$ is in $\Stoich(\bm x_0)$ at any time $t$ because $\bm{x}_t\in \bm{x}_0 + \Im S$ and the assumption $\bm x_t \in \R_{>0}^N$.
If restricted to the positive orthant $\R_{>0}^N$, the reaction rate vectors $\bm j^\pm|_{\R_{>0}^N}$ are regarded as maps from $\R_{>0}^N$ to $\R_{>0}^M$.

\subsubsection{Assumption on Solutions of the Chemical Rate Equation}
\label{subsubsec:Assumptions on Solutions of Chemical Rate Equation}
The {positive invariance of the positive orthant} is summarized in this section.
% The following three paragraphs discuss the non-negativity and positivity of concentrations, respectively, and the stronger assumptions on the orbits  in \cref{eq:CRE}.
% We adopt the strongest assumption \cref{eq:closed solution space}, which is stronger than the positivity of concentrations \Cref{asm:invariance of positive orthant}.

% non-negativity 
For a autonomous dynamical system $\dot {\bm x}_t = \bm f(\bm x_t)$ on $\R^N$, define the \textit{finite orbit} of a dynamical system up to $T \geq  0$ as 
\begin{align}
    \label{eq:finite orbit}
    \mathcal{O}_T(\bm x_0) := \qty{\bm x_t = \bm x_0 + \int_0^t f(\bm x_\tau) \dd \tau \mid 0\leq t \leq T}.
\end{align}
A subset $A \subseteq \R^N$ is called \textit{positively invariant} if and only if the following holds:
% , for any initial point $\bm x_0 \in A$ and $t\geq 0$, the state at time $t$ is also in $A$; i.e., $\bm x_t \in A$.
% Using the finite orbit of a dynamical system up to $T \geq  0$, $\mathcal{O}_T(\bm x_0) := \qty{\bm x_t = \bm x_0 + \int_0^t f(\bm x_\tau) \dd \tau \mid 0\leq t \leq T}$, the positive invariance of $A$ can be stated equivalently as 
\begin{align}
    \forall \bm x_0 \in A,\,\forall  T\geq 0,\, \mathcal{O}_T(\bm x_0) \subseteq A.
\end{align}
% The condition on the elements, 
% % \begin{align}
% $
%     \label{eq:invariance of nonnegative orthant}
%     x_i = 0 \implies \dv{x_i}{t} \geq 0
% $
% hold for all $i = 1\leq i \leq N$,
% \end{align}
The following condition is equivalent to the positive invariance of the non-negative orthant $\R^N_{\geq 0}$ \cite[Theorem 5.1]{Bernstein1999}: 
for all $1 \leq i \leq N$, $
    \label{eq:invariance of nonnegative orthant}
    x_i = 0 \implies \dv{x_i}{t} \geq 0
$.
The condition \cref{eq:compatibility of rates and stoichiometry} on $\bm j^\pm$ is sufficient for the positive invariance of $\R_{\geq 0}^N$ \cite[Lemma 3.7.1]{Feinberg_2019}.
% \cref{eq:invariance of nonnegative orthant} to hold \cite[Lemma 3.7.1]{Feinberg_2019}.
\KS{Therefore, the non-negative orthant $\R_{\geq 0}^N$ is positively invariant in the system \cref{eq:CRE} whenever the reaction rates satisfy \cref{eq:compatibility of rates and stoichiometry}.}
% In terms of orbits, one can write the positive invariance of $\R_{\geq 0}^N$ as
% \begin{align}
%     \forall \bm x_0 \in \R_{\geq 0}^N,\, \forall T\geq 0,\, \mathcal{O}_T(\bm x_0) \subseteq \R_{\geq 0}^N.
% \end{align}

% positivity
% positivityの仮定を置く
We denote the positive orthant $\X := \R^N_{>0}$ and refer to it as the \textit{concentration space}.
% In other words, the space $\X$ is assumed to be positive invariant, i.e.
% \begin{align}
%     \label{asm:invariance of positive orthant}
%     \forall \bm x_0 \in \X,\, \forall T\geq 0,\, \mathcal{O}_T(\bm x_0) \subseteq \X.
% \end{align}
We assume the positive invariance of the positive orthant in the systems:
\begin{assumption}[Positive Invariance of Concentration Space]\\
    \label{asm:invariance of positive orthant}
    The concentration space $\X$ is positively invariant in \cref{eq:CRE}.
\end{assumption}
Following other works on the thermodynamic and geometric structure of CRNs \cite{Kobayashi2023,Mizohata2024,Sughiyama2022growing,Sughiyama2022}, we simply assume \Cref{asm:invariance of positive orthant} for systems of interest and thereby avoid intricate problems\footnote{
\Cref{asm:invariance of positive orthant} is also known as persistence. Even for systems equipped with a Lyapunov function (e.g., the equilibrium systems introduced later in \cref{eq:generalized gradient flow}), persistence is generally not satisfied. Determining whether a given class of dynamical systems is persistent is a hard problem; in particular, for a certain class of mass-action CRNs, including equilibrium CRNs, this problem is known as the Global Attractor Conjecture \cite{Horn1972,Craciun2013} and remains open. }.
Indeed, this single assumption is sufficient to obtain explicit bounds in equilibrium mass-action  CRNs (\Cref{cor:bounds in hyperbolic mass action CRNs,cor:bounds in quadratic mass action CRNs}).

\subsection{Geometric and Thermodynamic Aspects of CRN Dynamics}
When mass-action CRE satisfies a parameter condition known as the equilibrium condition, the system can be rewritten in the form of a generalized gradient flow \cref{eq:generalized gradient flow} \cite{Mielke2017,Mielke2014,Renger2018,loutchko2025}.
This formulation includes a broader class of dynamics beyond mass-action CRNs and the associated Lyapunov function, Kullback-Leibler divergence. 
The main results of this paper are stated within this generalized framework. 

In the following, we present the theoretical components underlying the generalized gradient flow formulation. 
The next two sections introduce the information-geometric relationship between the concentration and potential spaces, as well as the corresponding generalized gradient flows. 
We then describe the thermodynamic structure of equilibrium mass-action CRNs as a special case of this formulation.

\subsubsection{Information-Geometry of Concentration and Potential Spaces}
\label{subsubsec:information-geometric structure between concentration and potential spaces}
We introduce the affine subspaces and their Legendre-transformed images that arise in CRNs.
Although these manifolds are no longer orthogonal in the usual Euclidean sense, they satisfy a generalized information-geometric orthogonality characterized by Bregman divergence.
The definitions in this section are from \cite{Kobayashi2023, Mizohata2024}.

A strictly convex differentiable function 
$\phi: \X \to \R$ is called the \textit{thermodynamic function} if its gradient $\partial \phi: \X \to \R^N, \bm x \mapsto \partial\phi(\bm x) = (\partial_{x_1} \phi(\bm x), \cdots, \partial_{x_n} \phi(\bm x) )^\top$ is injective\footnote{In the language of differential geometry \cite{Tu2011}, the differential of a function $\phi$ at $\bm x \in \X$ naturally lives in the cotangent space $T^\ast_{\bm x} \X$. For simplicity, we equip $\X$ with a global coordinate $(x_1, \cdots, x_N)$ and identify every cotangent space $T_{\bm x}^\ast \X$ with $\R^N$.}.
Denote the image of $\partial \phi$ as $\M := \partial \phi(\X) = \R^N$, which is called the \textit{potential space}.
% \KS{bijective would imply that its image is automatically all of $R^N$. So injective is sufficient.?? to do}
For the thermodynamic function $\phi$, the \textit{convex conjugate} of $\phi$ is defined as 
\begin{align}
    \label{eq:convex conjugate}
    \phi^\ast(\bm \mu) := \max_{\bm x \in \X} [\expval{\bm x, \bm \mu} - \phi(\bm x)],
\end{align}
where $\expval{\cdot, \cdot}$ denotes the standard inner product under: $\expval{\bm x, \bm \mu} := \sum_{i=1}^N x_i \mu_i$.
The maximum in \cref{eq:convex conjugate} is attained iff $\bm \mu = \partial \phi(\bm x)$ holds. Thus, the equality
\begin{align}
    \label{eq:Young-Fenchel identity}
    \phi(\bm x) +  \phi^\ast(\partial \phi(\bm x)) = \expval{\bm x, \partial \phi(\bm x)} 
\end{align}
holds for any $\bm x \in \X$, which is known as the Young-Fenchel identity.
The gradient of the conjugate $\partial\phi^\ast: \M \to \X, \bm x \mapsto \partial\phi^\ast(\bm \mu) = (\partial_{\mu_1} \phi^\ast(\bm \mu), \cdots, \partial_{\mu_n} \phi^\ast(\bm \mu) )^\top$ is also a bijection, where $\partial\phi^\ast = (\partial\phi)^{-1}$ holds.
The two bijective maps $\partial \phi :\X \to \M$ and $\partial \phi^\ast: \M \to \X$ are called the \textit{Legendre transformations} between $\X$ and $\M$.

In CRNs, the stoichiometric matrix $S$ \cref{eq:stoichiometric matrix} determines the two affine subspaces  on $\X$ and $\M$, respectively.
\begin{itemize}
    \item The \textit{stoichiometric subspace} of $\X$ for $\bm{\tilde x}$: $\Stoich(\bm{\tilde x}) := ( \bm{\tilde x} + \Im S ) \cap \X$
    \item The \textit{equilibrium subspace} of $\M$ for $\bm{\tilde \mu}$: $\Equib(\bm{\tilde \mu} ) := (\bm{\tilde \mu} + \mathrm{Ker\,} S^\top) \cap \M$
\end{itemize}
The stoichiometric subspace is the same as that defined in \cref{subsubsec:dynamics on CRNs}.
The two spaces $\X$ and $\M$ are orthogonal complement to each other in the sense that
$\expval{\bm x- \bm{\tilde x}, \bm \mu - \bm{\tilde \mu}} = 0$ for any $\bm x \in \Stoich(\bm{\tilde x})$ and $\bm \mu \in \Equib(\bm{\tilde \mu})$.
We introduce the images of these affine subspaces via $\partial \phi$ and $\partial \phi^\ast$, respectively.
\begin{itemize}
    \item The \textit{equilibrium submanifold} of $\X$ for $\bm{\tilde \mu}$: $\partial \phi^\ast[\Equib(\bm{\tilde \mu})] \subset \X$
    \item The \textit{stoichiometric submanifold} of $\M$ for $\bm{\tilde x}$: $\partial \phi[\Stoich(\bm{\tilde x})] \subseteq \M$
\end{itemize}

The \textit{Bregman divergence} for $\phi$ between $\bm x, \bm y \in \X$ is defined as 
\begin{align}
    \label{eq:Bregman divergence}
    D_\phi(\bm x \| \bm y) := \phi(\bm x) - \phi(\bm y) - \expval{\partial \phi(\bm y) , \bm x - \bm y}.
\end{align}
The non-negativity of the Bregman divergence follows from the convexity of the thermodynamic function $\phi$.
Moreover, from the strict convexity of $\phi$, it follows that $D_\phi(\bm x\| \bm y) = 0$ holds if and only if $\bm x = \bm y$.
The Bregman divergences for $\phi$ and its convex conjugate $\phi^\ast$ are related as $D_\phi(\bm x \| \bm y) = D_{\phi^\ast}(\partial\phi(\bm y) \| \partial \phi(\bm x))$, for any $\bm x, \bm y \in \X$. 

There is a general information-geometric orthogonality between the stoichiometric subspace $\Stoich(\bm x_1)$ and the equilibrium submanifold $\partial \phi^\ast[\Equib(\partial \phi(\bm x_2))]$ as follows: for any states $\bm{x}_1$ and $\bm x_2$, there exists an unique point $\bm x^\dagger(\bm x_1, \bm x_2) \in \X$ such that $\qty{\bm x^\dagger(\bm x_1, \bm x_2)} = \Stoich(\bm{x}_1) \cap \partial \phi^\ast[\Equib(\partial \phi(\bm{x}_2))]$ \cite[Lemma 1]{Kobayashi2023}. 
The orthogonal complementarity of $\Stoich(\bm{x}_1)$ and $\Equib(\partial \phi(\bm{x}_2))$ implies $\expval{ \partial \phi(\bm x^\dagger(\bm x_1, \bm x_2))  - \partial \phi(\bm x_2), \bm x^\dagger(\bm x_1, \bm x_2)  - \bm x_1} = 0$, which is equivalent to the generalized Pythagorean relation of Bregman divergences:
\begin{align}
    D_\phi(\bm x_1 \| \bm x_2) = D_\phi(\bm x_1 \| \bm x^\dagger(\bm x_1, \bm x_2)) + D_\phi(\bm x^\dagger(\bm x_1, \bm x_2) \| \bm x_2).
\end{align}
% In information-geometric terms, this equality can also be interpreted as the orthogonality of the $m$-geodesic and $e$-geodesic at the point $\bm x^\dagger$ with respect to the metric $\partial^2 \phi$ on $\X$ \cite[Section 1.6]{Amari2016} 
% \KS{Confirm}.

% Furthermore, by Legendre duality, analogous relationship holds for the stoichiometric submanifold $\partial \phi[\Stoich(\partial \phi^\ast(\bm \mu_1))]$ and the equilibrium subspace  $\Equib(\bm{\mu_2} )$: for any $\bm{\mu}_1$ and $\bm \mu_2$, there exists an unique point $\bm \mu^\dagger(\bm \mu_1, \bm \mu_2) \in \M$ such that $\qty{\bm \mu^\dagger(\bm \mu_1, \bm \mu_2)} = \partial \phi[\Stoich(\partial \phi^\ast(\bm \mu_1))]  \cap \Equib(\bm{\mu_2} )$. The generalized Pythagorean relation also holds in $\M$: $    D_{\phi^\ast}(\bm \mu_1 \| \bm \mu_2) = D_{\phi^\ast}(\bm \mu_1 \| \bm \mu^\dagger(\bm \mu_1, \bm \mu_2)) + D_{\phi^\ast}(\bm \mu^\dagger(\bm \mu_1, \bm \mu_2) \| \bm \mu_2)$, which follows directly from the equality of $D_\phi$ and $D_{\phi^\ast}$.

% \KS{skipped the unused notations on orthogonality in dual coordinates }

\subsubsection{Dissipative Structure and Generalized Gradient Flow}
\label{subsubsec:dissipative structure and generalized gradient flow}
This section outlines generalized gradient flows \cite{Kobayashi2023}, a framework consistent with macroscopic fluctuation and thermodynamics, which is a generalization of concentration dynamics in CRNs.
% (\Cref{subsubsec:dynamics on CRNs}). 
In this picture, the gradient of the thermodynamic potential gives forces which drive the system.
A dissipation function gives a Legendre transformation from forces to fluxes \cite{Mielke2017,Mielke2014}, which includes Onsager’s flux–force relation \cite{Onsager19311,Onsager19312}. 

We now dicuss the mathematical details.
Let $\F := \R^M$ be the \textit{force space}\footnote{Formally, each state $\bm x \in \X$ has its own force space $\F_{\bm x}$ \cite{Kobayashi2023}; here we identify them all with $\F$.}.
A \textit{dissipation function} at $\bm x \in \X$ is a strictly convex differentiable function $\Psi_{\bm x}^\ast : \F \to \R$ that satisfies the following conditions:
\begin{align}
    \label{eq:dissipation function}
    \frac{\Psi_{\bm x}^\ast(\bm f)}{\norm{\bm f}_2} \to \infty \text{ as } \norm{\bm f}_2 \to \infty,\,
    \Psi_{\bm x}^\ast(\bm f) = \Psi_{\bm x}^\ast(-\bm f),\,
    \Psi_{\bm x}^\ast(\bm 0) = 0.
\end{align}
The gradient of ${\Psi_{\bm x}^\ast(\bm f)}$, $\partial_{\bm f} \Psi_{\bm x}^\ast: \F \to \R^M, \bm f \mapsto (\partial_{f_1} \Psi_{\bm x}^\ast(\bm f), \cdots, \partial_{f_M} \Psi_{\bm x}^\ast(\bm f))^\top$, is bijective. 
The convex conjugate of $\Psi^\ast_{\bm x}$, denoted by $\Psi_{\bm x}$, is also a dissipation function \cite[Proposition 1]{Kobayashi2023}.
For the pair of dissipation functions $\Psi_{\bm x}$ and $\Psi_{\bm x}^\ast$  the Young-Fenchel relation \cref{eq:Young-Fenchel identity} also holds.
% The image of $\partial_{\bm f} \Psi_{\bm x}^\ast$ is called the flux space: $\J := \partial_{\bm f} \Psi_{\bm x}^\ast(\F) = \R^M_{>0}$.
A dissipation function $\Psi_{\bm x}^\ast$ is called \textit{separable} if and only if, it can be represented as 
\begin{align}
    \label{eq:separable}
    \Psi_{\bm x}^\ast(\bm f) = \sum_{r = 1}^M \omega_r( \bm {x}) \psi^\ast(f_r)
\end{align}
with a positive vector-valued function $\bm \omega: \X \to \R_{>0}^M$ and a scalar dissipation function $\psi^\ast: \R \to \R$. $\bm \omega$ is called the \textit{activity} of the separable dissipation function.
The gradient of a separable $\Psi_{\bm x}^\ast(\bm f)$ with respect to \KS{$f_r$} is explicitly written as 
\begin{align}
    \label{eq:gradient of separable}
    \partial_{\KS{f_r}}\Psi_{\bm x}^\ast(\bm f) = \omega_r( \bm {x}) (\psi^\ast)'(f_r),
\end{align}
where $(\psi^\ast)'(f)$ is the derivative of the scalar dissipation function.

We introduce the concentration dynamics, and connect the force-flux structure with the concentration-potential dual structure.
Let $\F^\mathrm{eq} := \Im S^\top $ be the \textit{equilibrium subspace} of the force space $\F$.
Consider the $\bm x$-dependent force generated by the gradient of Bregman divergence associated with a thermodynamic function $\phi$, which is called the (\textit{equilibrium}) \textit{force}:
\begin{align}
    \label{eq:force}
    \bm f (\bm x) := S^\top \partial_{\bm x} D_\phi(\bm x \| \bm{\hat x}) \in \F^\mathrm{eq} \subseteq \F,
\end{align}
where $\bm{\hat x} \in \X$ is a fixed parameter. 
The equilibrium submanifold at $\partial \phi(\bm{\hat x})$ is also represented as 
$\partial \phi^\ast (\Equib(\partial \phi(\bm{\hat{x}})) )= \qty{\bm x \in \X \mid \bm f(\bm x) = S^\top \partial_{\bm x} D_\phi(\bm x \| \bm{\hat x}) = \bm 0}$. 
The flux at $\bm x$ is given by the Legendre transformation of $\bm f(\bm x)$ as: $\bm j(\bm x) = \partial_{\bm f} \Psi^\ast_{\bm x} (\bm f(\bm x))$.

The \textit{generalized gradient flow} \cite[Definition 23]{Kobayashi2023} associated with a stoichiometric matrix $S$, a thermodynamic function $\phi$, a dissipation function $\Psi^{\ast}$, and a parameter $\bm{\hat x} \in \X$ is the following ODE:
\begin{align}
    \label{eq:generalized gradient flow}
    \dot{\bm x}_t = - S \bm j(\bm x) = -S \partial_{\bm f} \Psi_{\bm x_t}^\ast (\bm f(\bm x_t)) = -S \partial_{\bm f} \Psi_{\bm x_t}^\ast [S^\top \partial_{\bm x} D_\phi(\bm x_t \| \bm{\hat x})].
\end{align}
The equilibrium submanifold $\partial \phi^\ast (\Equib(\partial \phi(\bm{\hat{x}})) )$ is also the set of the steady states of the system \cref{eq:generalized gradient flow} \cite[Proposition 3]{Kobayashi2023}: for any $\bm x\in \X$,
$
    S \bm j(\bm x) = \bm 0 \Longleftrightarrow \bm x \in \partial \phi^\ast (\Equib(\partial \phi(\bm{\hat{x}})) ).
$
Henceforth, we refer to any steady state of the system \cref{eq:generalized gradient flow} as an \textit{equilibrium state}.

In the generalized gradient flow, the divergence between $\bm x_t$ and $\bm{\hat x}$ is non-increasing over time as follows:
\begin{align}
    % \begin{aligned}
        \dot{D}_{\phi}(\bm x_t \| \bm{\hat x}) &\overset{\cref{eq:generalized gradient flow}}{=}
        -\expval{S \partial_{\bm f} \Psi_{\bm x_t}^\ast (\bm f(\bm x_t)), \partial_{\bm x_t} D_\phi(\bm x_t \| \bm{\hat x})}\\
        &\overset{\phantom{\cref{eq:generalized gradient flow}}}{=} -\expval{\partial_{\bm f} \Psi_{\bm x_t}^\ast (\bm f(\bm x_t)), S^\top\partial_{\bm x_t} D_\phi(\bm x_t \| \bm{\hat x})}\\
        \label{eq:3rd eq.}
        &\overset{\cref{eq:force}}{=} - \expval{\partial_{\bm f} \Psi_{\bm x_t}^\ast (\bm f(\bm x_t)), \bm f(\bm x_t)}\\
        \label{eq:4th eq.}
        &\overset{\phantom{\cref{eq:generalized gradient flow}}}{=} - (\Psi_{\bm x}^\ast[\bm f(\bm x_t)] + \Psi_{\bm x}[\partial_{\bm f} \Psi_{\bm x_t}^\ast (\bm f(\bm x_t))]) \leq 0.
    % \end{aligned}
\end{align}
The transformation from \cref{eq:3rd eq.} to \cref{eq:4th eq.} follows from the Legendre-Fenchel identity \cref{eq:Young-Fenchel identity} and the last inequality is from the non-negativity of dissipation functions \cite{Kobayashi2023, Mielke2017, Mielke2014}.
The equality holds iff $\bm f(\bm x_t) = \bm 0$, i.e., $\bm x_t \in \partial \phi^\ast (\Equib(\partial \phi(\bm{\hat{x}})) )$.
Therefore, the divergence $D_{\phi}(\bm x_t \| \bm{\hat x})$ is always decreasing except at the points in $\partial \phi^\ast (\Equib(\partial \phi(\bm{\hat{x}})) )$ \cite[Proposition 3]{Kobayashi2023}.
For an initial condition $\bm x_0 \in \X$, the trajectory is in the stoichiometric subspace of $\bm x_0$, i.e., $\bm x_t \in \Stoich (\bm x_0)$, and the unique equilibrium state $\bm x_\mathrm{eq}(\bm x_0, \bm{\hat x})$ is in $\Stoich (\bm x_0) \cap \partial \phi^\ast (\Equib(\partial \phi(\bm{\hat{x}})) )$.
The point $\bm x_\mathrm{eq}(\bm x_0, \bm{\hat x})$ is simply denoted by $\bm x_\mathrm{eq}$ in this article.
In this case, the divergence is decreasing and the concentration $\bm x_t$ is approaching $\bm x_\mathrm{eq}$.
Since $\partial \phi(\bm{x}_\mathrm{eq}) - \partial \phi(\bm{\hat x}) \in \Ker S^\top$, the force can also be represented by $\bm x_\text{eq}$:
\begin{align}
    \label{equilibirum force with x_eq}
    \bm f(\bm x) = S^\top \partial_{\bm x} D_\phi(\bm x \| \bm{x}_\text{eq}),
\end{align}
which goes to $\bm{0}$ for $t \to \infty$.
The expression $\dot \Sigma_t := - \dot D_\phi(\bm x_t \| \bm{\hat x}) \geq 0$ is called the entropy production rate (EPR).

% The concentration space $\X$, the potential space $\M$, and the force space $\F^\mathrm{eq}$ relate as
% \begin{align}
%     \label{eq:sequence from X to F_eq}
%     \X \overset{\partial \phi}{\longleftrightarrow} \M \overset{-\partial \phi(\bm{\hat x})}{\longleftrightarrow} \M - \partial \phi(\bm{\hat x}) \overset{S^\top}{\longrightarrow} \F^{\mathrm{eq}},
% \end{align}
% where $\overset{g}{\longleftrightarrow}$ means that the map $g$ is bijective.
% The same relation hold for the restriction to the stoichiometric subspace of $\qty{\bm x_\mathrm{eq}}$ with $\qty{\bm x_\mathrm{eq}} = \Stoich(\bm x_\mathrm{0}) \cap \partial \phi^\ast[\Stoich(\partial \phi (\bm{\hat x}))]$:
% \begin{align}
%     \label{eq:sequence from S(x_0) to F_eq}
%     \Stoich(\bm x_\mathrm{eq}) \overset{\partial \phi}{\longleftrightarrow} \partial\phi[\Stoich(\bm x_\mathrm{eq}) ] \overset{-\partial \phi(\bm{x}_\mathrm{eq})}{\longleftrightarrow} \partial_{\bm x} D_\phi[\Stoich(\bm x_\mathrm{eq}) \| \bm x_\mathrm{eq}] \overset{S^\top}{\longrightarrow} \F^{\mathrm{eq}}.
% \end{align}

\subsubsection{Thermodynamics of Equilibrium Mass-Action CRNs}
\label{subsubsec:equilibrium thermodynamics on CRNs}
In CRNs with mass action kinetics, the following thermodynamic function\footnote{Physically, $\phi_\mathrm{KL}$ is identified with the partial grand potential density of the ideal gas systems \cite[Appendix A]{Sughiyama2022}.} is consistent with generalized gradient flow structure:
\begin{align}
    \phi_\mathrm{KL}(\bm x) := \sum_{i = 1}^n \qty(x_i \ln x_i - x_i ).
\end{align}
The gradient of $\phi_\mathrm{KL}$ is $\partial \phi_\mathrm{KL}(\bm x) = \ln \bm x$, which is the Ledendre transform between $\X$ and $\M$.
The Bregman divergence associated with $\phi_\mathrm{KL}$ is given by
\begin{align}
    D_{\mathrm{KL}}(\bm x \| \bm y) := D_{\phi_\mathrm{KL}}(\bm x \| \bm y) = \sum_{i = 1}^n \qty(x_i \ln \frac{x_i}{y_i} - x_i + y_i),
\end{align}
for $\bm x, \bm y \in \X$.
$D_\mathrm{KL}$ is called the \textit{Kullback-Leibler} (\textit{KL}) \textit{divergence}.
Mass-action CRNs are called \textit{equilibrium mass-action CRNs} if the systems satisfy the following \textit{Wegscheider equilibrium condition},
$\ln ({\bm k^+}/{\bm k^-}) \in \Im S^\top$,
referred to in \cite{Schuster1989} and \cite[Theorem 1]{Sughiyama2022}.
The Wegscheider condition is equivalent to the existence of $\bm{\hat x} \in \X$ such that $\ln({\bm k^+}/{\bm k^-}) = -S^\top \ln \bm{\hat x}$.
The flux $\bm j(\bm x)$ is defined as $\bm j(\bm x) = \bm j^+(\bm x) - \bm j^-(\bm x)$.
The thermodynamic force $\bm f(\bm x)$ at $\bm x \in \X$ of the mass action CRN is given as 
\begin{align}
    \bm f(\bm x) = \ln \frac{\bm j^+(\bm x)}{\bm j^-(\bm x)} = S^\top \ln \frac{\bm x}{\bm{\hat x}} = S^\top \partial_{\bm x} D_{\mathrm{KL}}(\bm x \|  \bm{\hat x}).
\end{align}

The dynamics of the equilibrium mass action CRNs \cref{eq:CRE} are special cases of the generalized gradient flows \cref{eq:generalized gradient flow}.
However, the choice of the dissipation function to map the chemical rate equation \cref{eq:CRE} to \cref{eq:generalized gradient flow} is not unique; formally, there exists an infinite variety of possible choices \cite{Nagayama2025}.
Here, we present two commonly studied dissipation functions.
The \textit{quadratic dissipation function}  of the mass action systems
are defined as the separable dissipation function \cref{eq:separable} with the logarithmic mean activity $\bm \omega_\mathrm{log}$ \cite{VanVu2023}
% \footnote{The logarithmic activity is known as the dynamical mobility in stochastic thermodynamics } 
and the quadratic scalar dissipation function $\psi^{\ast}_\mathrm{quad}$, which are given as
\begin{align}
    \label{eq:quadratic dissipation}
    \bm \omega_\mathrm{log}(\bm x) := \frac{\bm j^+(\bm x) - \bm j^- (\bm x)}{ \ln \bm j^+(\bm x) - \ln \bm j^- (\bm x)}, \,
    \psi^{\ast}_\mathrm{quad}(f) := \frac{1}{2}f^2.
\end{align}
The \textit{hyperbolic dissipation function} is the separable dissipation function with the geometric mean activity $\bm \omega_\mathrm{geo}$ and the cosh scalar dissipation function $\psi^\ast_\mathrm{hyp}$,
\begin{align}
    \label{eq:hyperbolic dissipation}
    \bm \omega_\mathrm{geo}(\bm x) := \sqrt{\bm j^+(\bm x) \cdot \bm j^-(\bm x)}, \,
    \psi^{\ast}_\mathrm{hyp}(f) := 4 \qty[\cosh\qty(\frac{f}{2}) - 1].
\end{align}
The quadratic dissipation functions induce the associated norm 
$\| \cdot \|_{\bm \omega_\mathrm{log}(\bm x)}$, defined as
% \begin{align}
$
    \| \bm f \|_{\bm \omega_\mathrm{log}(\bm x)}^2 := \sum_{j = 1}^m (\omega_{\mathrm{log}}(\bm x))_{j} \cdot f_r^2
$,
% \end{align}
into the force spaces \cite{Yoshimura2023}, while the hyperbolic ones are related to large deviations and macroscopic fluctuation theory \cite{Patterson2019}.
% They directly induce a norm on the force space.
% A Riemannian metric essentially is a norm on the tangent spaces of the force space, so should not be mentioned here.

\section{Convergence Analysis of Relaxation in Generalized Gradient Flows}
\label{sec:convergence analysis of relaxation in generalized gradient flows}
% Our objective in this section is to state the main theorems for generalized gradient flows, present its corollaries for CRNs, and provide their proofs. 
% We first formalize convexity conditions of Bregman divergence (\Cref{convexity conditions of divergences}) and bounds on EPR (\Cref{subsec:assumptions on factorized epr bounds}) required to state the main theorems precisely. The following \Cref{subsec:main results} presents the main theorems, which gives bounds on Bregman divergence in generalized gradient flows. We then show, as corollaries, the explicit bounds that the main theorems yields for CRNs (\Cref{sec:corollaries for equilibrium CRNs}). The final \Cref{sec:proof of main results} provides the proofs of the main theorems.
In this section we develop our main results on generalized gradient flows and their corollaries for CRNs. 
We begin by formalizing the convexity conditions on the Bregman divergence and the factorized bounds on EPR, detailed in \Cref{convexity conditions of divergences,subsec:assumptions on factorized epr bounds} respectively, which we will use throughout. 
\Cref{subsec:main results} then states our main theorems, which provide bounds on the Bregman divergence along generalized gradient flows. 
In \Cref{sec:corollaries for equilibrium CRNs} we derive explicit bounds for equilibrium CRNs as corollaries of these theorems. 
Finally, \Cref{sec:proof of main results} collects selected proofs of the theorems.
% \KS{revised for readability}

\subsection{Convexity Conditions on Divergences}
\label{convexity conditions of divergences}
This section introduces convexity conditions for the Bregman divergence $D_\phi(\bm x\| \bm x_\mathrm{eq})$ in a generalized gradient flow \cref{eq:generalized gradient flow}.
We first define global and local convexity conditions in \Cref{subsubsec:global convexity conditions,subsubsec:local convexity conditions}, inspired by convex optimization \cite{Nesterov2018,Polyak2021,Boyd_Vandenberghe_2004}.
In \Cref{subsubsec:convexity on solution orbit}, we describe convexity on solution orbits of \cref{eq:generalized gradient flow}.

% comment out the redefinition of divergence 
\begin{comment}
    Let $\phi: \X \to \R$ be  a thermodynamic function as defined in \Cref{subsubsec:information-geometric structure between concentration and potential spaces}.
    % and $\X'$ be a non-empty convex subset of $\X$.
    With the second argument fixed at $\bm x^\ast \in \X$, we define the Bregman divergence for $\phi$ as
    \begin{align}
        D_\phi(\cdot \| \bm x^\ast): \X \to \R_{\geq 0}, \bm x \mapsto D_\phi(\bm x \| \bm x^\ast).
    \end{align}
    % We simply refer to $D_\phi(\cdot \| \bm x^\ast)$ as a Bregman divergence if the fixed point $\bm x^\ast$ is clear from the context. \KS{Even if one argument is fixed, it is still called Divergence. }
    Note that $\bm x^\ast$ is the minimizer of $D_\phi(\cdot \| \bm x^\ast)$ on $\X$. 
    The divergence $D_\phi(\cdot \| \bm x^\ast)$ is also strictly convex because it shares the same Hessian as $\phi$.
    % Let $\X' \subset \X$ be a bounded closed subset of $\R^N$  
\end{comment}

\subsubsection{Global Convexity Conditions}
\label{subsubsec:global convexity conditions}
We start with two usual definitions of convexity conditions  for $D_\phi(\cdot \| \bm x^\ast)$: the Polyak–Łojasiewicz (PL) condition and the Lipschitz‑continuity (LC) condition.

For a constant $m_{\bm x^\ast} \in \R_{>0}$, $D_\phi(\cdot \| \bm x^\ast)$ is called \textit{$m_{\bm x^\ast}$-Polyak-Lojasiewicz} (\textit{$m_{\bm x^\ast}$-PL}) on $\X' \subseteq \X$ if and only if, for any $\bm x\in \X'$, the following inequality holds: 
\begin{align}
    \label{eq:Polyak-Lojasiewicz}
    D_\phi(\bm x \| \bm x^\ast) \leq \frac{1}{2m_{\bm x^\ast}} \norm{\partial_{\bm x} D_\phi(\bm x \| \bm x^\ast)}^2_2.
\end{align}
The equality of \cref{eq:Polyak-Lojasiewicz} holds iff $\bm x= \bm x^\ast \in \X'$.
A Bregman divergence $D_\phi(\cdot \| \bm x^\ast)$ is called \textit{$L_{\bm x^\ast}$-Lipschitz continuous} (\textit{$L_{\bm x^\ast}$-LC}) on $\X' \subseteq \X$ for $L_{\bm x^\ast} \in \R_{>0}$ if and only if, for any $\bm x\in \X'$, the following inequality holds:
\begin{align}
    \label{eq:Lipschitz continuity}
    D_\phi(\bm x \| \bm x^\ast) \geq \frac{1}{2L_{\bm x^\ast}} \norm{ \partial_{\bm x} D_\phi(\bm x\| \bm x^\ast)}_2^2.
\end{align}
The equality of \cref{eq:Lipschitz continuity} also holds iff $\bm x= \bm x^\ast \in \X'$.
In this paper, the conditions of PL and LC are referred to as ''\textit{global convexity assumptions}''\footnote{We refer to them as ''convexity conditions'' because, especially when $\X'$ is convex, they are related to conditions stronger than convexity. Informally, \textit{strong convexity}, the condition that $f(\bm x) - \frac{\mu}{2}\norm{\bm x}^{2}$ be convex, implies convexity of $f$, and moreover implies the $\mu$-PL inequality of $f$. In addition, \textit{LC} is equivalent to the convexity of $\frac{L}{2}\norm{\bm x}^{2}-f(\bm x)$. For informal statements, see, e.g., \cite[Section 2.1.3, 3.2.6]{Nesterov2018}, \cite[Section 2.]{Polyak2021}, or \cite{Boyd_Vandenberghe_2004}. 
% \KS{[TO DO: cites on local convexity??]}
}. 
% connection to strong convexity on convex subset

\subsubsection{Local Convexity Conditions}
\label{subsubsec:local convexity conditions}
We extend the global assumptions on $D_\phi(\cdot\| \bm x^\ast)$ to the local ones, whose convexity parameters are dependent on the state $\bm x \in \X$.
For a continuous positive function on $\X' \subseteq \X$, $m_{\bm x^\ast} : \X' \to \R_{>0}$, $D_\phi(\cdot \| \bm x^\ast)$ is called \textit{$m_{\bm x^\ast}(\cdot)$-locally Polyak-Lojasiewicz} (\textit{$m_{\bm x^\ast}(\cdot)$-locally PL}) on $\X'$ if, for any $\bm x \in \X'$, the following condition holds:
\begin{align}
    \label{eq:local Polyak-Lojasiewicz}
    D_\phi(\bm x \| \bm x^\ast) \leq \frac{1}{2m_{\bm x^\ast}(\bm x)} \norm{\partial_{\bm x} D_\phi(\bm x\| \bm x^\ast)}_2^2.
\end{align}
Similarly, for a continuous positive function on $\X' \subseteq \X$, $L_{\bm x^\ast} : \X' \to \R_{>0}$,  $D_\phi(\cdot \| \bm x^\ast)$ is called \textit{$L_{\bm x^\ast}(\cdot)$-locally Lipchitz continuous} (\textit{$L_{\bm x^\ast}(\cdot)$-locally LC}) on $\X'$ if, for any $\bm x \in \X'$, the following condition holds:
\begin{align}
    \label{eq:local Lipschitz continuity}
    D_\phi(\bm x \| \bm x^\ast) \geq \frac{1}{2L_{\bm x^\ast}(\bm x)} \norm{\partial_{\bm x} D_\phi(\bm x\| \bm x^\ast)}_2^2.
\end{align}
The two conditions \cref{eq:local Polyak-Lojasiewicz,eq:local Lipschitz continuity} are referred to as "\textit{local convexity assumptions}".

\begin{remark}[Local Convexity implies Global Convexity]
For a $m_{\bm x^\ast}(\cdot)$-locally PL, $L_{\bm x^\ast}(\cdot)$-locally LC $D_\phi(\cdot \| \bm x^\ast)$ on $\X'$, suppose $\low{\mu}{}_{\bm x^\ast}\leq  m_{\bm x^\ast}(\bm x)$ and $\up{L}_{\bm x^\ast} \geq L_{\bm x^\ast}(\bm x)$ hold for $\bm x \in \X'$.  Then, $D_\phi(\cdot \| \bm x^\ast)$ is $\low{\mu}{}_{\bm x^\ast}$-PL and $\up{L}{}_{\bm x^\ast}$-LC because, for any $\bm x \in \X'$, the following inequalities hold:
\begin{align}
    \label{eq:locally PL is PL}
    D_\phi(\bm x \| \bm x^\ast)
    \leq \frac{1}{2m_{\bm x^\ast}(\bm x)} \norm{\partial_{\bm x} D_\phi(\bm x\| \bm x^\ast)}_2^2
    \leq \frac{1}{2\low{\mu}{}_{\bm x^\ast}} \norm{\partial_{\bm x} D_\phi(\bm x\| \bm x^\ast)}_2^2,\\
    \label{eq:locally LC is LC}
    D_\phi(\bm x \| \bm x^\ast) 
    \geq \frac{1}{2L_{\bm x^\ast}(\bm x)} \norm{\partial_{\bm x} D_\phi(\bm x\| \bm x^\ast)}_2^2
    \geq \frac{1}{2\up{L}{}_{\bm x^\ast}} \norm{\partial_{\bm x} D_\phi(\bm x\| \bm x^\ast)}_2^2.
\end{align}
For $\X'$ that does not include the fixed point $\bm x^\ast$, the divergence is locally PL and locally LC with respect to $m_{\bm x^\ast}(\bm x) = L_{\bm x^\ast}(\bm x) = \rho_{\bm x^\ast}^\phi(\bm x):= \frac{\norm{\partial_{\bm x} D_\phi(\bm x\| \bm x^\ast)}_2^2}{2 D_\phi(\bm x\| \bm x^\ast)}$.
The first equalities in \cref{eq:locally PL is PL,eq:locally LC is LC} hold iff $m_{\bm x^\ast}(\bm x) = L_{\bm x^\ast}(\bm x) = \rho_{\bm x^\ast}^\phi(\bm x)$.
For a bounded and closed $\X'$, there exists $\min_{\bm x \in \X'} m_{\bm x^\ast}(\bm x) \in \X'$ and $\max_{\bm x \in \X'} L_{\bm y}(\bm x) \in \X'$, therefore {$D_\phi(\cdot \| \bm x^\ast)$} is PL and LC for the minimum and maximum, respectively.
\end{remark}

\subsubsection{Convexity on Solution Orbits}
\label{subsubsec:convexity on solution orbit}
% topology of trajectory set
% local convexity on finite orbits, and problems
% towards T-dependent parameters for \rho
% lemma
Here, we define solution orbits in generalized gradient flows and recall their properties. We then describe the convexity of the Bregman divergence on the solution orbits.

% note the topology of trajectory
For a generalized gradient flow, we have already defined the finite orbit up to $T \geq 0$ as $\mathcal{O}_T(\bm x_0)$ \cref{eq:finite orbit}.
% \KS{ already defined in the previous section}
We further define the infinite-time orbit including the equilibrium state $\bm x_\mathrm{eq}$ of \cref{eq:generalized gradient flow} by 
\begin{align}
    \mathcal{O}_{\infty}(\bm x_0) := \overline{\bigcup_{T\geq 0} \mathcal{O}_{T}(\bm x_0)} = \qty{\bm x_t = \bm x_0 -S \int_0^t \bm j(\bm x_\tau)\dd \tau \mid 0\leq t} \cup \qty{\bm x_\mathrm{eq}},
\end{align}
where $\up{\cdot}$ denotes the closure in $\R^N$ equipped with the standard Euclidean topology.
\begin{remark}[Boundedness of orbits]
    The finite orbits $\mathcal{O}_{T}(\bm x_0)$ are bounded and closed in $\R^N$ because $\mathcal{O}_{T}(\bm x_0)$ can be regarded as the image of $[0,T]$ by a continuous map, and $\mathcal{O}_{\infty}(\bm x_0)$ is as well.
\Cref{asm:invariance of positive orthant} implies that $\mathcal{O}_{T}(\bm x_0), \mathcal{O}_{\infty}(\bm x_0) \subseteq \X$, and thus the closedness of the orbits in $\R^N$ coincides with that in $\X$ with the relative topology.
In other words, \Cref{asm:invariance of positive orthant} implies that there exists a positive constant $R > 0$ such that $R^{-1} \leq x_i \leq  R$ for $i = 1, \cdots, M$ and $\bm x \in \mathcal{O}_{\infty}(\bm x_0)$.
Similar assumptions on orbits can be found in the literature on convergence to equilibrium in reaction-diffusion systems \cite[Theorem 5.1]{Glitzky1997}.
% Similar assumptions on orbits can be found in the literature on convergence to equilibrium in reaction-diffusion systems \cite[Theorem 5.1]{Glitzky1997}.
\end{remark}

We focus on the most stringent parameter $\rho_{\bm x_\mathrm{eq}}^\phi(\cdot)$. When $\bm x_0 \neq \bm x_\mathrm{eq}$, the Bregman divergence $D_\phi(\cdot \| \bm x_\mathrm{eq})$ is $\rho^\phi_{\bm x_\mathrm{eq}}(\cdot)$-locally PL and $\rho^\phi_{\bm x_\mathrm{eq}}(\cdot)$-locally LC on the finite orbit $\mathcal{O}_{T}(\bm x_0)$. 
On bounded and closed $\mathcal{O}_{T}(\bm x_0)$, the function $\rho_{\bm x_\mathrm{eq}}(\cdot)$ attains its maximum and minimum, although they depend on the time $T$. 
A naive remedy is to take the maximum and minimum over $\mathcal{O}_{\infty}(\bm x_0)$; however, this can fail, because $\rho_{\bm x_\mathrm{eq}}(\bm x_t)$ may not converge to a constant as $\bm x_t \to \bm x_\mathrm{eq}$ and be extendable to a continuous function $\rho_{\bm x_\mathrm{eq}}(\cdot)$ on $\mathcal{O}_{\infty}(\bm x_0)$. 
Nevertheless, for $\rho^\phi_{\bm x_\mathrm{eq}}(\cdot)$ one can choose global convexity parameters that are independent of $T$ but depend on the initial condition $\bm x_0$:

% lemma: there exists global convexity constants for D_\phi on O(\bm x_0) (from 0 to infty)
\begin{lemma}[Global Convexity of Bregman Divergence on Solution Orbits]
    \label{lemma:global convexity of Bregman divergence on solution orbit}
    Suppose that \Cref{asm:invariance of positive orthant} holds for a generalized gradient flow \cref{eq:generalized gradient flow}. Then there exist positive constants $\low{\rho_{\bm x_\mathrm{eq}}}(\bm x_0), \up{\rho_{\bm x_\mathrm{eq}}}(\bm x_0) > 0$ such that, $\low{\rho_{\bm x_\mathrm{eq}}}(\bm x_0) \leq {\rho_{\bm x_\mathrm{eq}}^\phi}(\bm x) \leq \up{\rho_{\bm x_\mathrm{eq}}}(\bm x_0)$ in $\mathcal{O}_{\infty}(\bm x_0) \setminus \qty{\bm x_\mathrm{eq}}$, and $D_\phi(\cdot \| \bm x_\mathrm{eq})$ is $\low{\rho_{\bm x_\mathrm{eq}}}(\bm x_0)$-PL and $\up{\rho_{\bm x_\mathrm{eq}}}(\bm x_0)$-LC  on $\mathcal{O}_{\infty}(\bm x_0)$.
    Particularly, $D_\phi(\cdot \| \bm x_\mathrm{eq})$ is $\low{\rho_{\bm x_\mathrm{eq}}}(\bm x_0)$-PL and $\up{\rho_{\bm x_\mathrm{eq}}}(\bm x_0)$-LC  on $\mathcal{O}_{T}(\bm x_0)$ for $T\geq 0$.
\end{lemma}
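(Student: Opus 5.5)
The plan is to show that $\rho^\phi_{\bm x_\mathrm{eq}}(\bm x)=\norm{\partial_{\bm x}D_\phi(\bm x\|\bm x_\mathrm{eq})}_2^2/(2D_\phi(\bm x\|\bm x_\mathrm{eq}))$ is bounded above and below by positive constants on $\mathcal{O}_\infty(\bm x_0)\setminus\qty{\bm x_\mathrm{eq}}$. The PL and LC statements then follow directly from the definitions \cref{eq:Polyak-Lojasiewicz,eq:Lipschitz continuity}, with the inequalities trivially satisfied at $\bm x_\mathrm{eq}$, where both sides vanish. The restriction to $\mathcal{O}_T(\bm x_0)\subseteq\mathcal{O}_\infty(\bm x_0)$ is then immediate. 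By the boundedness remark, \Cref{asm:invariance of positive orthant} gives a compact set $K\subset\X$ containing $\mathcal{O}_\infty(\bm x_0)$. We may take $K$ to be convex, for instance the box $[R^{-1},R]^N$, intersected with $\bm x_\mathrm{eq}+\Im S$ because every point of the orbit lies in $\Stoich(\bm x_0)$ and $\bm x_\mathrm{eq}\in\Stoich(\bm x_0)$.

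Away from $\bm x_\mathrm{eq}$, the ratio is continuous and positive. The denominator is positive by strict convexity. The numerator is positive because $\partial\phi$ is injective, so $\partial\phi(\bm x)\neq\partial\phi(\bm x_\mathrm{eq})$. On the compact set $K\setminus B_\varepsilon(\bm x_\mathrm{eq})$, the ratio therefore attains a positive minimum and a finite maximum for each fixed $\varepsilon$. The only remaining issue is the behaviour near $\bm x_\mathrm{eq}$, and this is the main obstacle. As the text before the lemma warns, $\rho$ need not extend continuously to $\bm x_\mathrm{eq}$, so we aim only for two-sided bounds, not a limit.

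For the neighbourhood of $\bm x_\mathrm{eq}$ I would assume $\phi\in C^2$ with positive definite Hessian on $\X$, as is the case for $\phi_\mathrm{KL}$ and is implicit in the paper's information-geometric setting. By continuity and compactness there are constants $0<\lambda\le\Lambda$ with $\lambda I\preceq\partial^2\phi\preceq\Lambda I$ on the convex set $K$. For $\bm x\in K$, write $\bm g(\bm x)=\partial_{\bm x}D_\phi(\bm x\|\bm x_\mathrm{eq})=\partial\phi(\bm x)-\partial\phi(\bm x_\mathrm{eq})$. Taylor's formula with integral remainder along the segment $[\bm x_\mathrm{eq},\bm x]\subset K$ gives
\begin{align}
\tfrac{\lambda}{2}\norm{\bm x-\bm x_\mathrm{eq}}_2^2\le D_\phi(\bm x\|\bm x_\mathrm{eq})\le\tfrac{\Lambda}{2}\norm{\bm x-\bm x_\mathrm{eq}}_2^2,\quad \lambda\norm{\bm x-\bm x_\mathrm{eq}}_2\le\norm{\bm g(\bm x)}_2\le\Lambda\norm{\bm x-\bm x_\mathrm{eq}}_2 .
\end{align}
The second pair of bounds comes from the averaged Hessian $\int_0^1\partial^2\phi(\bm x_\mathrm{eq}+s(\bm x-\bm x_\mathrm{eq}))\dd s$, whose spectrum lies in $[\lambda,\Lambda]$. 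Combining the two pairs yields
\begin{align}
\lambda^2/\Lambda\le\rho^\phi_{\bm x_\mathrm{eq}}(\bm x)\le\Lambda^2/\lambda
\end{align}
on all of $K\setminus\qty{\bm x_\mathrm{eq}}$. This bound is uniform, so the $\varepsilon$-splitting becomes unnecessary.

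The constants $\low{\rho_{\bm x_\mathrm{eq}}}(\bm x_0)=\lambda^2/\Lambda$ and $\up{\rho_{\bm x_\mathrm{eq}}}(\bm x_0)=\Lambda^2/\lambda$ depend on $\bm x_0$ only through $R$. Alternatively, one can take the infimum and supremum of $\rho$ over the orbit, which are sandwiched between these values. The delicate points are, first, that $K$ must be convex so that the Taylor segments stay inside a region where the Hessian bounds hold, which is why I use a box rather than the orbit itself; and second, the $C^2$ and positive-definiteness regularity. If only strict convexity is available, I would instead argue by contradiction with a sequence $\bm x_k\to\bm x_\mathrm{eq}$, but that would require some nondegeneracy, so I would state the Hessian assumption explicitly.
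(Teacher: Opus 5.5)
Your proof is correct. The regularity you add ($\phi\in C^2$ with positive definite Hessian) is also tacitly used by the paper. Its argument needs $\phi^\ast\in C^2$ and asserts positivity of the eigenvalues of $\partial^2\phi^\ast$ ``by strict convexity'', which alone would not suffice. The two hypotheses are equivalent via $\partial^2\phi^\ast(\partial\phi(\bm x))=(\partial^2\phi(\bm x))^{-1}$, so your caution is justified.

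The routes differ in the choice of coordinates. The paper rewrites $D_\phi(\bm x\|\bm x_\mathrm{eq})=D_{\phi^\ast}(\bm\mu_\mathrm{eq}\|\bm\mu)$ and uses that $\partial_{\bm x}D_\phi(\bm x\|\bm x_\mathrm{eq})=\bm\mu-\bm\mu_\mathrm{eq}$ is exactly the displacement in $\M$. The numerator of $\rho^\phi_{\bm x_\mathrm{eq}}$ therefore needs no estimate. A single Taylor bound on $D_{\phi^\ast}$ gives $\bigl(\max\lambda_{\max}(\partial^2\phi^\ast)\bigr)^{-1}\le\rho^\phi_{\bm x_\mathrm{eq}}(\bm x)\le\bigl(\min\lambda_{\min}(\partial^2\phi^\ast)\bigr)^{-1}$. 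The extrema are taken over the $\bm\mu$-segments joining $\partial\phi(\bm x)$ to $\partial\phi(\bm x_\mathrm{eq})$. That set is compact for free, as the continuous image of $\mathcal{O}_\infty(\bm x_0)\times[0,1]$, and it lies in $\M=\R^N$.

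Working in $\X$, you must estimate both $D_\phi$ and $\bm g$, and you must build a convex compact set (the box) to contain the primal segments. You pay a condition-number factor on each side: your constants are $\lambda^2/\Lambda$ and $\Lambda^2/\lambda$. The dual argument gives constants of the form $\lambda,\Lambda$, namely the extreme eigenvalues of $\partial^2\phi$ along the images under $\partial\phi^\ast$ of those segments. For $\phi_\mathrm{KL}$ with the orbit in $[R^{-1},R]^N$, the paper's route gives $\rho\in[R^{-1},R]$, while yours gives $[R^{-3},R^3]$.

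Since the lemma only claims the existence of positive constants, both routes suffice. Yours is more elementary and avoids Legendre duality. The dual route gives sharper and more natural constants.
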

\begin{proof}[Proof of \Cref{lemma:global convexity of Bregman divergence on solution orbit}]
    We have $D_\phi(\bm x \| \bm x_\mathrm{eq}) = D_{\phi^\ast}(\partial \phi(\bm x_\mathrm{eq}) \| \partial \phi(\bm x))$ for any $\bm x \in \X$ by the property of Bregman divergence. 
    Denote $\partial \phi(\bm x), \partial \phi(\bm x_\mathrm{eq})$ as $\bm \mu, \bm \mu_\mathrm{eq}$ for simplicity. We have the integral representation of Bregman divergence,
    \begin{align}
        D_{\phi^\ast}( \bm \mu_\mathrm{eq} \|  \bm \mu) = \int_0^1 \qty{s \cdot ( \bm \mu - \bm \mu_\mathrm{eq})^\top  \partial^2\phi^\ast\qty[\bm \mu_\mathrm{eq} + s( \bm \mu - \bm \mu_\mathrm{eq})] ( \bm \mu - \bm \mu_\mathrm{eq})}  \, \dd s,
    \end{align}
    by Taylor’s theorem for multi-variable functions \cite[Theorem 2.68]{Folland2001}, or by calculating the integral on the right-hand side.
    Using the minimum and maximum eigenvalues $\lambda_{\min}(\partial^2\phi^\ast\qty(\bm{\tilde \mu}))$ and $\lambda_{\max}(\partial^2\phi^\ast\qty(\bm{\tilde \mu}))$ of the Hessian $\partial^2\phi^\ast(\bm{\tilde \mu})$, we have $\lambda_{\min}(\partial^2\phi^\ast\qty(\bm{\tilde \mu})) \cdot I_N \preceq  \partial^2\phi^\ast\qty(\bm{\tilde \mu}) \preceq \lambda_{\max}(\partial^2\phi^\ast\qty(\bm{\tilde \mu})) \cdot I_N$ for any $\bm{\tilde \mu} \in \M$.
    Note that $\lambda_{\min}(\partial^2\phi^\ast\qty(\bm{\tilde \mu}))$ and $\lambda_{\max}(\partial^2\phi^\ast\qty(\bm{\tilde \mu}))$ are positive by the strictly convexity of $\phi^\ast$, respectively.

    % Define a map $\bm M_{\bm x_\mathrm{eq}}(\bm x, s)$ from $\X \times [0,1]$ to $\M$ as $\bm M_{\bm x_\mathrm{eq}}(\bm x, s) := \partial \phi(\bm x) + s \cdot (\partial \phi(\bm x_\mathrm{eq}) - \partial \phi(\bm x))$. The set $\mathcal{O}_{\infty}(\bm x_0) \times [0,1]$ is compact with the product topology, thus its image,
    % \begin{align}
    %     \M(\bm x_0, \bm x_\mathrm{eq}) :=&\, \bm M_{\bm x_\mathrm{eq}}(\mathcal{O}_{\infty}(\bm x_0) \times [0,1]) \\=&\, \qty{\partial \phi(\bm x) + s(\partial \phi(\bm x_\mathrm{eq}) - \partial \phi(\bm x)) \mid \bm x\in \mathcal{O}_{\infty}(\bm x_0), s\in [0,1]},
    % \end{align} is also compact (bounded and closed) in $\M$.

    Since the map $\X \times [0,1] \to \M$, 
    $
        (\bm x,s) \mapsto 
        \partial \phi(\bm x) + s\bigl(\partial \phi(\bm x_\mathrm{eq}) - \partial \phi(\bm x)\bigr)
    $
    is continuous and a set $\mathcal{O}_{\infty}(\bm x_0) \times [0,1]$ is compact with the product topology, the image     $
       \M(\bm x_0, \bm x_\mathrm{eq}) := \qty{(1-s)\partial_{\bm x} \phi(\bm x) + s \partial_{\bm x} \phi(\bm x_\mathrm{eq})  \mid \bm x\in \mathcal{O}_{\infty}(\bm x_0), s\in [0,1]}
        \subseteq \M
    $ is also compact (hence bounded and closed) in $\M$.    
    Taking the minimum and maximum over $\bm{\tilde \mu} \in \M(\bm x_0, \bm x_\mathrm{eq})$, we obtain the following inequality for all $\bm x \in \mathcal{O}_{\infty}(\bm x_0)$: 
    \begin{align}
        \frac{\underset{\bm{\tilde\mu} }{\min}\, \lambda_{\min}(\partial^2\phi^\ast(\bm{\tilde\mu}))}{2} \norm{\bm \mu - \bm \mu_\mathrm{eq}}^2_2 \leq D_{\phi^\ast}( \bm \mu_\mathrm{eq} \|  \bm \mu) \leq \frac{\underset{\bm{\tilde\mu}}{\max}\, \lambda_{\max}(\partial^2\phi^\ast(\bm{\tilde\mu}))}{2} \norm{\bm \mu - \bm \mu_\mathrm{eq}}^2_2.
    \end{align}
    % Here, we consider the minimization of $\bm{\tilde \mu}$ over $\M(\bm x_0, \bm x_\mathrm{eq})$. \KS{to rewrite}
    If $\bm x \neq \bm x_\mathrm{eq}$ and $\bm x\in \mathcal{O}_{\infty}(\bm x_0)$, as $\rho_{\bm x_\mathrm{eq}}^\phi(\bm x) = \frac{\norm{\bm \mu - \bm \mu_\mathrm{eq}}^2_2}{2D_{\phi^\ast}( \bm \mu_\mathrm{eq} \|  \bm \mu)}$, we have 
    \begin{align}
        \label{eq:global convexity parameters of strict case}
        \qty(\underset{\bm{\tilde\mu}}{\max}\, \lambda_{\max} (\partial^2\phi^\ast(\bm{\tilde\mu})))^{-1}\leq \rho_{\bm x_\mathrm{eq}}^\phi(\bm x) \leq \qty( \underset{\bm{\tilde\mu}}{\min}\, \lambda_{\min}(\partial^2\phi^\ast(\bm{\tilde\mu})) )^{-1}.
    \end{align}
    For the leftmost and rightmost constants in \cref{eq:global convexity parameters of strict case}, $D_\phi(\cdot \| \bm x_\mathrm{eq})$ is PL and LC on $\mathcal{O}_{\infty}(\bm x_0)$, respectively; therefore it is PL and LC on $\mathcal{O}_{T}(\bm x_0) \subseteq \mathcal{O}_{\infty}(\bm x_0)$ as well.
\end{proof}

\subsection{Assumptions on Factorized EPR Bounds}
\label{subsec:assumptions on factorized epr bounds}
We formulate assumptions on bounding EPRs, which is required to state the main theorems. 
Informally, the assumptions say that the EPR can be bounded by the product of two factors: one depending only on concentrations and \KS{the other} on the force norm.
We state the assumptions abstractly to cover the wider class of systems, generalized gradient flows.

% properties of EPRs
We note that the EPR function $\dot \Sigma(\bm x, \bm f) := \expval{\partial_{\bm f} \Psi_{\bm x}^\ast(\bm f), \bm f}$ for separable $\Psi_{\bm x}^\ast$ only depends on $\bm x$ and $\abs{\bm f}$ as $\expval{\partial_{\bm f} \Psi_{\bm x}^\ast(\bm f), \bm f} = \expval{\partial_{\bm f} \Psi_{\bm x}^\ast(\abs{\bm f}), \abs{\bm f}}$. 
One can obtain this equality from the definition of the dissipation function \cref{eq:dissipation function}.
For the equilibrium subspace $\F^\mathrm{eq}$, \KS{the space of force‑norm values} is defined as $\norm{\F^\mathrm{eq}}:= \qty{\norm{\bm f}_2 \mid \bm f \in \F^\mathrm{eq}} = \R_{\geq 0}$.

% assupmtions on EPR bounds
We state assumptions on factorized bounds of EPRs in generalized gradient flows:
% \begin{assumption}[Factorized Lower Bound of EPR]
%     \label{asm: activity-force decomposability of EPR lower bound}
%     There exists a nonnegative function $\low{\B} :\norm{\F^\mathrm{eq}} \to \mathbb{R}_{\geq 0}$, which is positive except for at $0$, $\low{\B}(0) = 0$, and a positive function $\low{\omega}: \X \to \R_{>0}$ such that
%     $
%         \low{\omega}(\bm x) \low{\B}(\norm{\bm f(\bm x)}_2) \leq 
%         % \expval{ \partial_{\bm{f}} \Psi_{\bm{x}}^\ast(\bm{f}(\bm x)), \bm f(\bm x)} 
%         \dot \Sigma(\bm x, \bm f) \textrm{ for any } \bm x \in \X$.
% \end{assumption}
% \begin{assumption}[Factorized Upper Bound of EPR]
%     \label{asm: activity-force decomposability of EPR upper bound}
%     There exists a nonnegative function $\up{\B} :\norm{\F^\mathrm{eq}} \to \mathbb{R}_{\geq 0}$, which is positive except for at $0$, $\up{\B}(0) = 0$, and a positive function $\up{\omega}: \X \to \R_{>0}$ such that
%         % \expval{ \partial_{\bm{f}} \Psi_{\bm{x}}^\ast(\bm{f}(\bm x)), \bm{f}(\bm x)} 
%         $\dot \Sigma(\bm x, \bm f) 
%         \leq \up{\omega}(\bm x) \up{\B}(\norm{\bm f(\bm x)}_2) \textrm{ for any } \bm x \in \X$.
% \end{assumption}
\begin{assumption}[Factorized Bounds of EPR]
    \label{asm: activity-force decomposability of EPR bounds}
    \begin{subasms}
        \item
        \label[assumption]{asm: activity-force decomposability of EPR lower bound}
        % \label{asm: activity-force decomposability of EPR lower bound}
        There exists a nonnegative function $\low{\B} :\norm{\F^\mathrm{eq}} \to \mathbb{R}_{\geq 0}$, which is positive except for at $0$, $\low{\B}(0) = 0$, and a positive function $\low{\omega}: \X \to \R_{>0}$ such that
        $\dot \Sigma(\bm x, \bm f) \geq \low{\omega}(\bm x) \low{\B}(\norm{\bm f(\bm x)}_2)$ for any $\bm x \in \X$,
        \item
        \label[assumption]{asm: activity-force decomposability of EPR upper bound}
        % \label{asm: activity-force decomposability of EPR upper bound}
        There exists a nonnegative function $\up{\B} :\norm{\F^\mathrm{eq}} \to \mathbb{R}_{\geq 0}$, which is positive except for at $0$, $\up{\B}(0) = 0$, and a positive function $\up{\omega}: \X \to \R_{>0}$ such that
        $\dot \Sigma(\bm x, \bm f) \leq \up{\omega}(\bm x) \up{\B}(\norm{\bm f(\bm x)}_2)$ for any $\bm x \in \X$. 
    \end{subasms}
\end{assumption}
% \KS{Make lemmas 3.2 and 3.3 into one (U/L)}
% For a separable dissipation function \cref{eq:separable}, \Cref{asm: activity-force decomposability of EPR upper bound} is satisfied as follows
% Furthermore, if the derivative of the scalar dissipation function, $(\psi^\ast)'$, is convex, then \Cref{asm: activity-force decomposability of EPR lower bound} is also satisfied:
For a separable dissipation function \cref{eq:separable}, \Cref{asm: activity-force decomposability of EPR upper bound} is satisfied; moreover, if the derivative of the scalar dissipation function, $(\psi^\ast)'$, is convex, then \Cref{asm: activity-force decomposability of EPR lower bound} is also satisfied:
% lemmas for separable dissipation functions
\begin{lemma}[Bounds of separable EPR by force norm]
    \label{lemma:bounds of separable EPR}
    \begin{sublems}
        \item \label[lemma]{lemma:upper bounds of separable EPR}
        For a separable dissipation function \cref{eq:dissipation function}, the EPR at $\bm x \in \X, \bm f \in \F^\mathrm{eq}$ is bounded from above as
        \begin{align}
            \label{eq:upper bounds of separable EPR}
            \dot\Sigma(\bm x, \bm f) 
            % = \expval{\bm \omega (\bm x) \circ (\psi^\ast)' (\bm f), \bm f} 
            \leq  \| \bm \omega (\bm x) \|_{2}\norm{\bm f}_2 (\psi^\ast)' (\norm{\bm f}_2).
        \end{align}
        For $M\geq 2$, the equality holds iff $\bm f = \bm 0$. For $M=1$, the equality always holds.
        That is, for a separable dissipation function, \Cref{asm: activity-force decomposability of EPR upper bound} is satisfied by $\up{\omega}(\bm x) = \| \bm \omega (\bm x) \|_{2}$ and $\up{\B}(\abs{f}) = \up{\B}_{\psi^\ast}(\abs{f}) := \abs{f} \cdot \psi^\ast(\abs{f})$.
        % for $\abs{f} \in \norm{\F^\mathrm{eq}}$.
        \item \label[lemma]{lemma:lower bounds of separable EPR}
        For a separable dissipation function \cref{eq:dissipation function} with convex $(\psi^\ast)'$, the EPR at $\bm x \in \X, \bm f \in \F^\mathrm{eq}$ is bounded from below as
        \begin{align}
            \label{eq:lower bounds of separable EPR}
            \norm{\bm \omega (\bm x) }_{-\infty} \norm{\bm f}_2 (\psi^\ast)' \qty(\frac{\norm{\bm f}_2}{\sqrt{M}}) \leq 
            % \expval{\bm \omega (\bm x) \circ (\psi^\ast)' (\bm f), \bm f} = 
            \dot\Sigma(\bm x, \bm f).
        \end{align}
        For $M\geq 2$, the equality holds iff $\bm f = \bm 0$. For $M=1$, the equality always holds.
        That is, for a separable dissipation function with convex $(\psi^\ast)'$, \Cref{asm: activity-force decomposability of EPR lower bound} is satisfied by $\low{\omega}(\bm x) = \| \bm \omega (\bm x) \|_{-\infty}$ and $\low{\B}(\abs{f}) = \low{\B}_{\psi^\ast}(\abs{f}) := \abs{f} \cdot \psi^\ast(\abs{f}/\sqrt{M})$.
        % for $\abs{f} \in \norm{\F^\mathrm{eq}}$.
    \end{sublems}
\end{lemma}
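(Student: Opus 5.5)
The plan is to reduce both bounds to the explicit separable form of the EPR. By \cref{eq:gradient of separable},
\begin{align}
\dot\Sigma(\bm x,\bm f)=\sum_{r=1}^M \omega_r(\bm x)\,f_r(\psi^\ast)'(f_r)=\sum_{r=1}^M \omega_r(\bm x)\,h(\abs{f_r}),\qquad h(u):=u\,(\psi^\ast)'(u).
\end{align}
The second equality uses that $\psi^\ast$ is even, so $(\psi^\ast)'$ is odd. I will first record three elementary facts about the scalar dissipation function. Strict convexity of $\psi^\ast$ makes $(\psi^\ast)'$ strictly increasing. Evenness gives $(\psi^\ast)'(0)=0$, hence $(\psi^\ast)'\ge 0$ on $[0,\infty)$. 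Consequently $h\ge0$ and $h$ is strictly increasing on $[0,\infty)$. I write $u_r:=\abs{f_r}$ and $s:=\norm{\bm f}_2$.

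For \Cref{lemma:upper bounds of separable EPR}, I will apply Cauchy--Schwarz to get $\dot\Sigma\le\norm{\bm\omega(\bm x)}_2\bigl(\sum_r u_r^2(\psi^\ast)'(u_r)^2\bigr)^{1/2}$. Then I use monotonicity, $(\psi^\ast)'(u_r)\le(\psi^\ast)'(s)$ since $u_r\le s$, to bound the last factor by $s\,(\psi^\ast)'(s)$. This yields \cref{eq:upper bounds of separable EPR}. (The factor in the statement should read $(\psi^\ast)'$; I will prove the inequality in that form.)

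For the equality case with $M\ge2$ and $\bm f\ne\bm0$, I split on the support of $\bm f$. If some $u_r=0$, then Cauchy--Schwarz is strict, because $\bm\omega$ has all entries positive and so cannot be parallel to a vector with a zero entry. If all $u_r>0$, then each $u_r<s$, and strict monotonicity makes the second step strict. For $M=1$, both steps are identities.

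For \Cref{lemma:lower bounds of separable EPR}, I first pull out the minimum activity: $\dot\Sigma\ge\norm{\bm\omega(\bm x)}_{-\infty}\sum_r h(u_r)$. The remaining task is to minimize $\sum_r h(u_r)$ under the constraint $\sum_r u_r^2=s^2$. I will pass to the squared variables $v_r=u_r^2$ and set $H(v):=h(\sqrt v)=\sqrt v\,(\psi^\ast)'(\sqrt v)$. If $H$ is convex on $[0,\infty)$, Jensen's inequality gives
\begin{align}
\sum_r H(v_r)\ge M\,H\!\left(\frac{s^2}{M}\right)=\sqrt M\,s\,(\psi^\ast)'\!\left(\frac{s}{\sqrt M}\right)\ge s\,(\psi^\ast)'\!\left(\frac{s}{\sqrt M}\right),
\end{align}
which is \cref{eq:lower bounds of separable EPR}, even with room to spare for $M\ge2$. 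That slack, together with $\norm{\bm f}_2>0$ and $(\psi^\ast)'(s/\sqrt M)>0$, makes the inequality strict whenever $\bm f\neq\bm 0$ and $M\ge2$. For $M=1$, every step is an identity.

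The main obstacle is proving convexity of $H$ from convexity of $(\psi^\ast)'$. I will first write $(\psi^\ast)'(u)=u\,\varphi(u)$, where $\varphi(u):=(\psi^\ast)'(u)/u$ is nondecreasing because $(\psi^\ast)'$ is convex with $(\psi^\ast)'(0)=0$. Then $H(v)=v\,\varphi(\sqrt v)$, and I would try to show that $v\mapsto v\,\varphi(\sqrt v)$ is convex by differentiating twice and using convexity of $(\psi^\ast)'$. This works cleanly for both examples: in the quadratic case $H$ is linear, and in the hyperbolic case $H(v)=2\sqrt v\sinh(\sqrt v/2)$ is a power series in $v$ with positive coefficients.

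If a general convexity argument fails, my fallback does not need the full $\sqrt M$ gain from Jensen. Since $\max_r u_r\ge s/\sqrt M$, it suffices to combine the nondecreasing property of $\varphi$ on the components with $u_r\ge s/\sqrt M$ with a Chebyshev-type rearrangement over the remaining components, so as to recover the single factor $s\,(\psi^\ast)'(s/\sqrt M)$.
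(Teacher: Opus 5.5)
Your part (U) is the paper's argument with its two steps swapped: the paper first bounds $(\psi^\ast)'(\abs{f_r})\le(\psi^\ast)'(\norm{\bm f}_2)$ and then applies Cauchy--Schwarz to $\sum_r\omega_r\abs{f_r}$. Your equality analysis via the support of $\bm f$ is more careful than the paper's, and you are right that $\up{\B}_{\psi^\ast}$ should read $\abs{f}\,(\psi^\ast)'(\abs{f})$. Part (L) is a genuinely different route. The paper applies Jensen to $(\psi^\ast)'$ itself with weights $\abs{f_r}/\norm{\bm f}_1$, which gives $\norm{\bm f}_1(\psi^\ast)'(\norm{\bm f}_2^2/\norm{\bm f}_1)$. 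It then uses $\norm{\bm f}_2\le\norm{\bm f}_1\le\sqrt M\norm{\bm f}_2$ together with monotonicity. This needs nothing beyond the hypothesis. However, its final bound is weaker than yours by a factor $\sqrt M$, and strictness for $M\ge2$ has to be read off from which inequalities in the chain can be tight. You instead apply equal-weight Jensen to $H(v)=\sqrt v\,(\psi^\ast)'(\sqrt v)$. This is precisely the function $\up{\B}_{\psi^\ast}(\sqrt{\cdot})$ whose convexity the paper states separately in \Cref{lemma:sufficient condition to convexity of bounds}. In exchange you get the sharper bound $\sqrt M\,\norm{\bm\omega(\bm x)}_{-\infty}\norm{\bm f}_2(\psi^\ast)'(\norm{\bm f}_2/\sqrt M)$. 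It recovers the sharp quadratic bound $\norm{\bm\omega_\mathrm{log}(\bm x)}_{-\infty}\norm{\bm f}_2^2$ of \Cref{lemma:bounds of quadratic EPRs} and makes strictness immediate. The step you flagged as the obstacle closes in one line, so the fallback is unnecessary. With $g=(\psi^\ast)'$ and $u=\sqrt v$, one has $H'(v)=\tfrac12\bigl(g(u)/u+g'(u)\bigr)$. Both terms are nondecreasing in $u$: the first because $g$ is convex with $g(0)=0$, the second because $g$ is convex (use right derivatives if $g$ is not differentiable). Hence $H$ is convex on $(0,\infty)$, and, being continuous at $0$, on $[0,\infty)$, with no second derivative needed.
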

% lemmas for separable dissipation functions
% \begin{lemma}[Upper Bound of separable EPR by force norm]
%     \label{lemma:upper bounds of separable EPR}
%     For a separable dissipation function \cref{eq:dissipation function}, the EPR at $\bm x \in \X, \bm f \in \F^\mathrm{eq}$ is bounded from above as
%     \begin{align}
%         \label{eq:upper bounds of separable EPR}
%         \dot\Sigma(\bm x, \bm f) = \expval{\bm \omega (\bm x) \circ (\psi^\ast)' (\bm f), \bm f} \leq  \| \bm \omega (\bm x) \|_{2}\norm{\bm f}_2 (\psi^\ast)' (\norm{\bm f}_2).
%     \end{align}
%     For $M\geq 2$, the equality holds iff $\bm f = \bm 0$. For $M=1$, the equality always holds.
%     That is, for a separable dissipation function, \Cref{asm: activity-force decomposability of EPR upper bound} is satisfied by $\up{\omega}(\bm x) = \| \bm \omega (\bm x) \|_{2}$ and $\up{\B}(\abs{f}) = \up{\B}_{\psi^\ast}(\abs{f}) := \abs{f} \cdot \psi^\ast(\abs{f})$ for $\abs{f} \in \norm{\F^\mathrm{eq}}$.
% \end{lemma}
\begin{proof}[Proof of \Cref{lemma:lower bounds of separable EPR}]
    For a scalar dissipation function $\psi^\ast$, by the definition of dissipation function \cref{eq:dissipation function}, $(\psi^\ast)'$ is increasing, odd, and $(\psi^\ast)'(0) = 0$. Thus, $f \cdot (\psi^\ast)'(f) = \abs{f} \cdot (\psi^\ast)'(\abs{f})$ holds for $f \in \R$.
    The upper bound is obtained as
    \begin{align}
        \expval{\bm{\omega}(\bm x) \circ(\psi^\ast)'(\abs{\bm f}), \abs{\bm f}}
        &= \sum_{r=1}^M \omega_r(\bm x) \cdot \abs{f_r} \cdot(\psi^\ast)'(\abs{f_r})\\
        \label{eq:first inequality of upper bounds of separable EPR}
        &\leq \sum_{r=1}^M \omega_r(\bm x)  \abs{f_r} (\psi^\ast)'(\norm{\bm f}_2)\\
        \label{eq:second inequality of upper bounds of separable EPR}
        &\leq \norm{\bm \omega (\bm x)}_2 \cdot \norm{\bm f}_2 (\psi^\ast)' (\norm{\bm f}_2).
    \end{align}
    The first inequality \cref{eq:first inequality of upper bounds of separable EPR} hold because $\abs{f_r} \leq \norm{\bm f}_2$ and the monotonicity of $(\psi^\ast)'$. The equality holds if and only if, for any $r$, $|f_r| = \norm{\bm f}_2$.
    For $M\geq 2$, the equality holds if and only if $\bm f = \bm 0$.
    For $M = 1$, the equality always holds.
    The second inequality \cref{eq:second inequality of upper bounds of separable EPR} follows from Cauchy-Schwartz inequality, and equality holds if and only if $\bm \omega (\bm x)$ and $\abs{\bm f}$ are parallel.
\end{proof}
% \begin{lemma}[Lower Bound of separable EPR by force norm] 
%     \label{lemma:lower bounds of separable EPR}
%     For a separable dissipation function \cref{eq:dissipation function} with convex $(\psi^\ast)'$, the EPR at $\bm x \in \X, \bm f \in \F^\mathrm{eq}$ is bounded from below as
%     \begin{align}
%         \label{eq:lower bounds of separable EPR}
%         \norm{\bm \omega (\bm x) }_{-\infty} \norm{\bm f}_2 (\psi^\ast)' \qty(\frac{\norm{\bm f}_2}{\sqrt{M}}) \leq \expval{\bm \omega (\bm x) \circ (\psi^\ast)' (\bm f), \bm f} = \dot\Sigma(\bm x, \bm f).
%     \end{align}
%     For $M\geq 2$, the equality holds iff $\bm f = \bm 0$. For $M=1$, the equality always holds.
%     That is, for a separable dissipation function with convex $(\psi^\ast)'$, \Cref{asm: activity-force decomposability of EPR lower bound} is satisfied by $\low{\omega}(\bm x) = \| \bm \omega (\bm x) \|_{-\infty}$ and $\low{\B}(\abs{f}) = \low{\B}_{\psi^\ast}(\abs{f}) := \abs{f} \cdot \psi^\ast(\abs{f}/\sqrt{M})$  for $\abs{f} \in \norm{\F^\mathrm{eq}}$.
% \end{lemma}
\begin{proof}[Proof of \Cref{lemma:upper bounds of separable EPR}] 
    We assume $\bm f \neq \bm 0$ since the equality in \cref{eq:lower bounds of separable EPR} holds for $\bm f = \bm 0$.
    The lower bound of the EPR is obtained as
    \begin{align}
        \expval{\bm \omega(\bm x) \circ (\psi^\ast)'(\bm f), \bm f} 
        &= \sum_{r=1}^M {\omega_r(\bm x)\abs{f_r}}(\psi^\ast)' (\abs{f_r})\\
        \label{eq:first inequality of lower bounds of separable EPR}
        &\geq \norm{\bm \omega (\bm x) }_{-\infty} \sum_{r=1}^M {\abs{f_r}}(\psi^\ast)' (\abs{f_r})\\
        &= \norm{\bm \omega (\bm x) }_{-\infty} \norm{\bm f}_1 \sum_{j=1}^M \frac{\abs{f_r}}{\norm{\bm f}_1} (\psi^\ast)' (\abs{f_r})\\
        \label{eq:second inequality of lower bounds of separable EPR}
        &\geq \norm{\bm \omega (\bm x) }_{-\infty}\norm{\bm f}_1 (\psi^\ast)' \qty(\frac{\sum_{j = 1}^M \abs{f_r}^2}{\norm{\bm f}_1})\\
        \label{eq:third inequality of lower bounds of separable EPR}
        &\geq \norm{\bm \omega (\bm x) }_{-\infty}\norm{\bm f}_2 (\psi^\ast)' \qty(\frac{\norm{\bm f}_2^2}{\norm{\bm f}_1})\\
        \label{eq:fourth inequality of lower bounds of separable EPR}
        &\geq \norm{\bm \omega (\bm x) }_{-\infty} \norm{\bm f}_2 (\psi^\ast)' \qty(\frac{\norm{\bm f}_2}{\sqrt{M}}).
    \end{align}
    The first inequality \cref{eq:first inequality of lower bounds of separable EPR} follows from $\omega_r(\bm x) \geq \norm{\bm \omega(\bm x)}_{-\infty}$.
    % , and equality hold iff $\omega_r(\bm x)$ takes the same value for every $r$.
    The inequality \cref{eq:second inequality of lower bounds of separable EPR} follows from the convexity of $(\psi^\ast)'$ and equality holds iff $\abs{f_r}$ takes the same value for every $r$.
    The third inequality \cref{eq:third inequality of lower bounds of separable EPR} holds iff $\norm{\bm f}_1 = \norm{\bm f}_2$, i.e. $\bm f$ has exactly one non-zero component.
    The fourth inequality \cref{eq:third inequality of lower bounds of separable EPR} holds by Cauchy-Schwartz inequality: $\norm{\bm f}_1 = \expval{\bm 1, \abs{\bm f}} \leq \sqrt{M} \norm{\bm f}_2$.
    For $M\geq 2$, These equalities \cref{eq:first inequality of lower bounds of separable EPR,eq:second inequality of lower bounds of separable EPR,eq:third inequality of lower bounds of separable EPR,eq:fourth inequality of lower bounds of separable EPR} can not hold simultaneously for any $\bm f\neq \bm 0$. For $M=1$ they always hold.
\end{proof}

% additional assumptions to prove theorem 2
We additionally formulate stronger assumptions on EPR bounds, used in the proofs of \Cref{thm:lower bound under locally LC assumption,thm:upper bound under locally PL assumption}.
\begin{assumption}[{Convexity of Bounding Functions}]
    \begin{subasms}
        \item
        \label[assumption]{asm: convexity of EPR lower bound}
        % \label{asm: activity-force decomposability of EPR lower bound}
        \Cref{asm: activity-force decomposability of EPR lower bound} is satisfied and $\low{\B} (\sqrt{\cdot})$ is convex,
        \item
        \label[assumption]{asm: convexity of EPR upper bound}
        % \label{asm: activity-force decomposability of EPR upper bound}
        \Cref{asm: activity-force decomposability of EPR upper bound} is satisfied and $\up{\B} (\sqrt{\cdot})$ is convex. 
    \end{subasms}
\end{assumption}

For a separable dissipation function, the convexity of $(\psi^\ast)'$ also implies \Cref{asm: convexity of EPR lower bound,asm: convexity of EPR upper bound}; the proof is deferred to \KS{\Cref{sec:proof for lemma:sufficient condition to convexity of bounds}}:
\begin{lemma}[Sufficient Condition for \Cref{asm: convexity of EPR lower bound,asm: convexity of EPR upper bound} under Separable Dissipation]
    \label{lemma:sufficient condition to convexity of bounds}
    For a separable dissipation function \cref{eq:dissipation function} with convex $(\psi^\ast)'$, and for functions $\up{\B}_{\psi^\ast}(\abs{f}) = \abs{f} \cdot \psi^\ast(\abs{f})$, $\low{\B}_{\psi^\ast}(\abs{f}) = \abs{f} \cdot \psi^\ast(\abs{f}/\sqrt{M})$, the compositions $\up{\B}_{\psi^\ast} (\sqrt{\cdot})$ and $\low{\B}_{\psi^\ast} (\sqrt{\cdot})$ are convex.
    In other words, $\up{\B} = \up{\B}_{\psi^\ast}$ and $\low{\B} = \low{\B}_{\psi^\ast}$ in \Cref{lemma:lower bounds of separable EPR,lemma:upper bounds of separable EPR} satisfy \Cref{asm: convexity of EPR lower bound,asm: convexity of EPR upper bound}, respectively.
\end{lemma}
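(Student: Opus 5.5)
The plan is a one-variable computation. Set $h(s) := s\,\varphi(s)$ for $s \geq 0$ and $g(u) := h(\sqrt{u})$ for $u \geq 0$. Here $\varphi$ is either $\psi^\ast$, as written in the statement, or $(\psi^\ast)'$, as in \Cref{lemma:bounds of separable EPR}; I will treat both at once. The factor $1/\sqrt{M}$ in $\low{\B}_{\psi^\ast}$ is harmless. Indeed, $\low{\B}_{\psi^\ast}(\sqrt{u}) = \sqrt{u}\,\varphi(\sqrt{u/M}) = \sqrt{M}\, g(u/M)$, and a positive multiple of a convex function precomposed with a linear map is convex. So it suffices to prove that $g$ is convex on $[0,\infty)$.

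The key reduction is this. On $(0,\infty)$ the function $g$ is continuous, and wherever it is differentiable,
\begin{align}
  g'(u) = \frac{h'(s)}{2s} = \frac{1}{2}\qty(\frac{\varphi(s)}{s} + \varphi'(s)), \qquad s = \sqrt{u}.
\end{align}
Since $u \mapsto \sqrt{u}$ is increasing, convexity of $g$ on $(0,\infty)$ follows once $s \mapsto \varphi(s)/s + \varphi'(s)$ is shown to be nondecreasing on $(0,\infty)$. I then check the two summands separately.

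\emph{Second summand.} The term $\varphi'$ is nondecreasing because $\varphi$ is convex. If $\varphi = \psi^\ast$, convexity holds by the definition \cref{eq:dissipation function}. If $\varphi = (\psi^\ast)'$, it is exactly the hypothesis.

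\emph{First summand.} The term $\varphi(s)/s = (\varphi(s) - \varphi(0))/(s - 0)$ is the slope of the chord from the origin, using $\varphi(0) = 0$. This holds since $\psi^\ast(0) = 0$, and $(\psi^\ast)'(0) = 0$ because $\psi^\ast$ is even and differentiable, as noted in the proof of \Cref{lemma:lower bounds of separable EPR}. For a convex function vanishing at $0$, this chord slope is nondecreasing in $s$.

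The main obstacle is a regularity issue. When $\varphi = (\psi^\ast)'$ is only assumed convex, $\varphi'$ need not exist everywhere. I would handle this by replacing derivatives with right derivatives $\varphi'_+$, which exist and are nondecreasing for convex $\varphi$. This shows that the right derivative $g'_+(u) = \tfrac{1}{2}\qty(\varphi(s)/s + \varphi'_+(s))$ is nondecreasing on $(0,\infty)$. A continuous function on an interval whose right derivative exists everywhere and is nondecreasing is convex, which gives convexity of $g$ on $(0,\infty)$.

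The endpoint $u = 0$ is the last step. Since $g(0) = 0$ and $g$ is continuous at $0$ (because $\varphi$ is continuous there), the convexity inequality on $(0,\infty)$ passes to the limit and extends to $[0,\infty)$. Finally, with $\varphi = \psi^\ast$ or $\varphi = (\psi^\ast)'$, this yields convexity of both $\up{\B}_{\psi^\ast}(\sqrt{\cdot})$ and $\low{\B}_{\psi^\ast}(\sqrt{\cdot})$.
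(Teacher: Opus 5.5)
Your proof is correct, and it appears to follow essentially the same route as the paper's deferred argument, though that proof is not reproduced in the text shown, so the comparison is inferred. The core is a one-variable computation: the derivative of $u\mapsto\sqrt{u}\,\varphi(\sqrt{u})$ equals $\tfrac12\bigl(\varphi(s)/s+\varphi'(s)\bigr)$ with $s=\sqrt{u}$, and it is nondecreasing because both the chord slope from the origin (using $\varphi(0)=0$) and $\varphi'$ are nondecreasing for convex $\varphi$. Your right-derivative treatment of a merely convex $(\psi^\ast)'$, and your handling of the $\psi^\ast$ versus $(\psi^\ast)'$ discrepancy in the statement, are useful extra care.

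One remark: the hypothesis ``convex $(\psi^\ast)'$'' must be read on $[0,\infty)$. An odd function that is convex on all of $\R$ is linear, which would exclude the hyperbolic case. This is precisely the range on which your argument uses it.
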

In equilibrium CRNs with quadratic and hyperbolic dissipation, the derivatives of the scalar dissipation functions, $\psi_\mathrm{quad}^\ast(f) = f$ and $\psi_\mathrm{hyp}^\ast(f) = 2 \sinh\qty(\frac{f}{2})$, are both convex. Therefore, by \Cref{lemma:sufficient condition to convexity of bounds}, both functions satisfy \Cref{asm: convexity of EPR lower bound,asm: convexity of EPR upper bound}.

\subsection{Main results}
\label{subsec:main results}
We state our main theorems in this section. 
Under a global convexity assumption, the first pair of theorems establishes upper and lower bounds for the Bregman divergence in generalized gradient flows, whereas the second pair provides bounds under stronger local convexity assumptions.
These results give the first quantitative estimates for Bregman divergences in generalized gradient flows.
The proofs of \Cref{thm:upper bound of Bregman divergence in generalized equilibirum flow} are given in \Cref{sec:proof of main results}, while those of \Cref{thm:lower bound of Bregman divergence in generalized equilibirum flow,thm:bounds under locally convexity assumption} are provided in \KS{\Cref{sec:proof for thm:lower bound of Bregman divergence in generalized equilibirum flow,sec:proof for thm:upper bound under locally PL assumption,sec:proof for thm:lower bound under locally LC assumption}}.

We prove that the bounds and convergence rates of the Bregman divergence $D_\phi(\bm x_T \|\bm x_\mathrm{eq})$ are represented by deformed exponential functions \cite{NAUDTS2002323,NAUDTS200432,Naudts2024}, 
% $\low{\B} (\sqrt{\cdot})$ and $\up{\B} (\sqrt{\cdot})$
whose properties are investigated in the field of generalized thermostatics.

Let $\psi: \R_{> 0} \to \R_{> 0}$ be a positive function.
The \textit{deformed logarithm} $\ln_\psi$ for $\psi$ is defined as 
\begin{align}
    \ln_\psi(u) := \int_1^u \frac{\dd u'}{\psi(u')},\, u> 0.
\end{align}
The \textit{deformed exponential function} $\exp_\psi$ is defined as the inverse of $\ln_\psi$.
The conventional logarithmic and exponential functions are recovered when $\psi(u) = u$.
Using the deformed exponential functions with respect to $\low{\B}$ and $\up{\B}$, we can quantify the upper and lower bounds of $D_\phi(\bm x_T \| \bm x_\mathrm{eq})$, respectively.

\subsubsection{Bounds of Bregman Divergence under Global Convexity Assumptions}
% Firstly, w
We state the main theorems on the upper and lower bound for the Bregman divergence under the global convexity assumptions (\Cref{thm:bounds of Bregman divergence in generalized equilibirum flow}). 
Our bounds are determined by the smallest and largest singular values $\sigma_1, \sigma_{\rk(S)}$ of the stoichiometric matrix, the global convexity parameters $m_{\bm x_\mathrm{eq}}, L_{\bm x_\mathrm{eq}}$ of the Bregman divergence $D_\phi(\cdot \| \bm x_\mathrm{eq})$, and the deformed exponentials specified by $\low{\B}$ and $\up{\B}$. 
\begin{theorem}[Bounds of Bregman Divergence in Generalized Gradient Flow under Global Convexity Assumption]
    \label{thm:bounds of Bregman divergence in generalized equilibirum flow}
    Let $\qty{\bm x_t}_{0\leq t\leq T}$ follow the generalized gradient flow \cref{eq:generalized gradient flow}. Let $\bm x_\mathrm{eq}$ be an equilibrium state such that $\qty{\bm x_\mathrm{eq}} = \Stoich(\bm x_0) \cap \partial \phi^\ast[\Equib(\partial \phi(\bm{\hat x}))]$.
    \begin{subthms}
        \item
        \label[theorem]{thm:upper bound of Bregman divergence in generalized equilibirum flow}
        If \Cref{asm: activity-force decomposability of EPR lower bound} holds for $\low{\B}$ and $\low{\omega}$, and the Bregman divergence $D_\phi(\cdot \| \bm x_\mathrm{eq})$ is $m_{\bm x_\mathrm{eq}}$-PL on $\mathcal{O}_{T}(\bm x_0)$, then $D_\phi(\bm x_T \| \bm x_{\mathrm{eq}})$ is bounded from above as
        \begin{align}
        \label{eq:upper bound of Bregman divergence in generalized equilibirum flow}
            \begin{aligned}
                &D_\phi(\bm x_T \| \bm x_{\mathrm{eq}})\\
                &\leq \frac{\exp_{\low{\B} (\sqrt{\cdot})} \qty[ \ln_{\low{\B} (\sqrt{\cdot})} (2\sigma_{\rk(S)}^2 m_{\bm x_\mathrm{eq}}  D_\phi(\bm x_0 \| \bm x_{\mathrm{eq}}) ) - 2\sigma_{\rk(S)}^2 m_{\bm x_\mathrm{eq}} \low{\Omega}(\bm{x}_{0:T}) ]}{2 \sigma_{\rk(S)}^2 m_{\bm x_\mathrm{eq}}},
            \end{aligned}
        \end{align}
        where $\low{\Omega}(\bm{x}_{0:T}):= \int_0^T \low{\omega}(\bm x_t) \dd t$.
        \item 
        \label[theorem]{thm:lower bound of Bregman divergence in generalized equilibirum flow}
        If \Cref{asm: activity-force decomposability of EPR upper bound} holds for $\up{\B}$ and $\up{\omega}$, and the Bregman divergence $D_\phi(\cdot \| \bm x_\mathrm{eq})$ is $L_{\bm x_\mathrm{eq}}$-LC on $\mathcal{O}_{T}(\bm x_0)$, then $D_\phi(\bm x_T \| \bm x_{\mathrm{eq}})$ is bounded from below as
        \begin{align}
            \label{eq:lower bound of Bregman divergence in generalized equilibirum flow}
            \begin{aligned}
                &D_\phi(\bm x_T \| \bm x_{\mathrm{eq}}) \\
                &\geq \frac{\exp_{\up{\B} (\sqrt{\cdot})} \qty[ \ln_{\up{\B} (\sqrt{\cdot})} (2 \sigma_1^2 L_{\bm x_\mathrm{eq}} D_\phi(\bm x_0 \| \bm x_{\mathrm{eq}}) ) - 2\sigma_1^2 L_{\bm x_\mathrm{eq}} \up{\Omega}(\bm{x}_{0:T}) ]}{2\sigma_1^2 L_{\bm x_\mathrm{eq}} },
            \end{aligned}
        \end{align}
        where $\up{\Omega}(\bm{x}_{0:T}):= \int_0^T \up{\omega}(\bm x_t) \dd t$.
    \end{subthms}
\end{theorem}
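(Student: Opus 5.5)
The plan is to derive a scalar differential inequality for $D_t := D_\phi(\bm x_t \| \bm x_\mathrm{eq})$ and then integrate it by separation of variables, using the deformed logarithm. I treat the upper bound (\Cref{thm:upper bound of Bregman divergence in generalized equilibirum flow}) in detail; the lower bound follows by reversing every inequality.

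\emph{Step 1: express $\dot D_t$ through the EPR.} By \cref{equilibirum force with x_eq}, the force can be written as $\bm f(\bm x_t) = S^\top \partial_{\bm x} D_\phi(\bm x_t\|\bm x_\mathrm{eq})$. The computation leading to \cref{eq:3rd eq.} then gives
\[
\dot D_t = -\dot\Sigma(\bm x_t,\bm f(\bm x_t)).
\]
This computation goes through with $\bm{\hat x}$ replaced by $\bm x_\mathrm{eq}$, because the two gradients differ by an element of $\Ker S^\top$, which is orthogonal to $\Im S \ni \dot{\bm x}_t$. \Cref{asm: activity-force decomposability of EPR lower bound} then yields
\[
\dot D_t \le -\low{\omega}(\bm x_t)\,\low{\B}(\norm{\bm f(\bm x_t)}_2).
\]

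\emph{Step 2: bound the force norm from below by $D_t$.} This is the key step. Set $\bm g := \partial_{\bm x} D_\phi(\bm x_t\|\bm x_\mathrm{eq}) = \partial\phi(\bm x_t)-\partial\phi(\bm x_\mathrm{eq})$. The estimate $\norm{S^\top \bm g}_2 \ge \sigma_{\rk(S)}\norm{\bm g}_2$ is only valid when $\bm g \in \Im S = (\Ker S^\top)^\perp$, and $\bm g$ generally has a $\Ker S^\top$ component. I would handle this by splitting $\bm g = \bm g_\parallel + \bm g_\perp$ with $\bm g_\parallel \in \Im S$ and $\bm g_\perp\in\Ker S^\top$, and using the SVD relation \cref{eq:basis by stoichiometric matrix} to get $\norm{S^\top\bm g}_2 \ge \sigma_{\rk(S)}\norm{\bm g_\parallel}_2$.

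It then remains to show $\norm{\bm g_\parallel}_2^2 \ge 2 m_{\bm x_\mathrm{eq}} D_t$, i.e.\ that PL holds with the projected gradient. This is the main obstacle, since the PL condition \cref{eq:Polyak-Lojasiewicz} as stated controls the full gradient norm $\norm{\bm g}_2$. My first attempt would use that $\bm x_t - \bm x_\mathrm{eq} \in \Im S$, which gives
\[
D_t \le D_t + D_\phi(\bm x_\mathrm{eq}\|\bm x_t) = \expval{\bm g, \bm x_t-\bm x_\mathrm{eq}} = \expval{\bm g_\parallel,\bm x_t-\bm x_\mathrm{eq}}.
\]
One could then try to combine this with PL along the stoichiometric subspace. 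If that fails, I would read the PL hypothesis as applied to the restriction of $D_\phi(\cdot\|\bm x_\mathrm{eq})$ to $\Stoich(\bm x_0)$, whose intrinsic gradient is exactly $\bm g_\parallel$ (up to the projection), which is the natural interpretation on the orbit.

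Granting Step 2, and using that $\low{\B}$ is increasing (needed, and true for $\low{\B}_{\psi^\ast}$), we get $\norm{\bm f}_2^2 \ge 2\sigma_{\rk(S)}^2 m_{\bm x_\mathrm{eq}} D_t =: u_t$.

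\emph{Step 3: integrate.} Substituting into Step 1 gives
\[
\dot u_t \le -c\,\low{\omega}(\bm x_t)\,\low{\B}(\sqrt{u_t}), \qquad c := 2\sigma_{\rk(S)}^2 m_{\bm x_\mathrm{eq}}.
\]
Since $\frac{d}{dt}\ln_{\low{\B}(\sqrt{\cdot})}(u_t) = \dot u_t/\low{\B}(\sqrt{u_t})$, we obtain
\[
\frac{d}{dt}\ln_{\low{\B}(\sqrt{\cdot})}(u_t) \le -c\,\low{\omega}(\bm x_t).
\]
Integrating over $[0,T]$ and applying the increasing function $\exp_{\low{\B}(\sqrt{\cdot})}$ gives \cref{eq:upper bound of Bregman divergence in generalized equilibirum flow}. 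The case $u_t = 0$ is handled separately: equilibrium is then reached and the bound holds trivially.

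\emph{Lower bound.} For \Cref{thm:lower bound of Bregman divergence in generalized equilibirum flow} the argument is symmetric. Use $\norm{S^\top\bm g}_2\le\sigma_1\norm{\bm g}_2$ together with LC, $\norm{\bm g}_2^2\le 2L_{\bm x_\mathrm{eq}}D_t$, and monotonicity of $\up{\B}$. This gives
\[
\dot v_t \ge -2\sigma_1^2L_{\bm x_\mathrm{eq}}\,\up{\omega}(\bm x_t)\,\up{\B}(\sqrt{v_t}), \qquad v_t := 2\sigma_1^2 L_{\bm x_\mathrm{eq}} D_t,
\]
and integrating in the same way yields \cref{eq:lower bound of Bregman divergence in generalized equilibirum flow}. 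In this direction no projection subtlety arises.
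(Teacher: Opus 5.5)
Your Steps 1 and 3 and the lower bound on $D_\phi$ coincide with the paper's proof. The paper also proceeds by $\dot D_t=-\dot\Sigma\le-\low{\omega}\,\low{\B}(\norm{\bm f}_2)$, then a force-norm bound, then separation of variables with $\ln_{\low{\B}(\sqrt{\cdot})}$. You are also right that monotonicity of $\low{\B}$ and $\up{\B}$ is needed; the paper uses it without stating it.

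\textbf{Step 2 cannot be closed under the stated hypothesis.} Your diagnosis is correct, but the implication from full-gradient PL on $\mathcal{O}_T(\bm x_0)$ to $\norm{\bm f}_2^2\ge 2\sigma_{\rk(S)}^2 m D_t$ is false, and so is the upper bound itself. Here is a counterexample.
\begin{itemize}
\item Take $X_1\rightleftharpoons X_2$ with $k^+/k^-=100$, so $S=(1,-1)^\top$, $M=1$, $\sigma_{\rk(S)}=\sqrt2$. Use quadratic dissipation, $\bm x_\mathrm{eq}=(1,100)$ and $\bm x_0=(51,50)$.
\item Then $\bm g_0=\partial_{\bm x}D_\mathrm{KL}(\bm x_0\|\bm x_\mathrm{eq})=(\ln 51,-\ln 2)$, $\norm{\bm g_0}_2^2\approx15.94$ and $D_0\approx165.9$, so $\rho^\phi_{\bm x_\mathrm{eq}}(\bm x_0)\approx0.048$.
\item On the other hand $\norm{\bm f(\bm x_0)}_2^2=(\ln 102)^2\approx21.39$, which gives $\norm{\bm f(\bm x_0)}_2^2/(2\sigma_{\rk(S)}^2D_0)\approx0.032$.
\item Fix $m=0.04$. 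By continuity of $\rho^\phi_{\bm x_\mathrm{eq}}$, $D_\mathrm{KL}(\cdot\|\bm x_\mathrm{eq})$ is $m$-PL on $\mathcal{O}_T(\bm x_0)$ for small $T$. The EPR lower-bound assumption holds with equality for $\low{\omega}=\omega_\mathrm{log}$ and $\low{\B}(s)=s^2$.
\item The claimed bound is then $D_0e^{-4m\low{\Omega}(\bm x_{0:T})}$, with initial slope $\approx-26.5\,\omega_\mathrm{log}(\bm x_0)$. The true slope is $-\omega_\mathrm{log}(\bm x_0)\norm{\bm f(\bm x_0)}_2^2\approx-21.4\,\omega_\mathrm{log}(\bm x_0)$, so the inequality fails for small $T>0$.
\end{itemize}
The underlying reason is that with $m$ near $\rho^\phi_{\bm x_\mathrm{eq}}(\bm x_0)$ the required inequality at $t=0$ amounts to $\norm{\bm g_\parallel}_2\gtrsim\norm{\bm g}_2$. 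This fails whenever $\bm g\notin\Im S$. For the same reason, your first attempt via $D_t\le\expval{\bm g_\parallel,\bm x_t-\bm x_\mathrm{eq}}$ cannot succeed.

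\textbf{The paper's own bridge also fails.} It uses the effective concentration defined by $\partial\phi(\bm x^\mathrm{eff})=\partial\phi(\bm x_\mathrm{eq})+\bm g_\parallel$, so that $\bm g_\parallel=\partial_{\bm x}D_\phi(\bm x^\mathrm{eff}\|\bm x_\mathrm{eq})$. It applies PL at $\bm x^\mathrm{eff}$ and then claims $D_\phi(\bm x^\mathrm{eff}\|\bm x_\mathrm{eq})=D_\phi(\bm x^\mathrm{eff}\|\bm x)+D_\phi(\bm x\|\bm x_\mathrm{eq})\ge D_\phi(\bm x\|\bm x_\mathrm{eq})$. There are two problems.
\begin{itemize}
\item $\bm x^\mathrm{eff}\notin\mathcal{O}_T(\bm x_0)$ in general, so the stated hypothesis does not apply there.
\item More seriously, the orthogonality $\expval{\partial\phi(\bm x)-\partial\phi(\bm x^\mathrm{eff}),\bm x-\bm x_\mathrm{eq}}=0$ gives, by the three-point identity, $D_\phi(\bm x_\mathrm{eq}\|\bm x^\mathrm{eff})=D_\phi(\bm x_\mathrm{eq}\|\bm x)+D_\phi(\bm x\|\bm x^\mathrm{eff})$, with the arguments in the opposite order. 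In the example, $\bm x^\mathrm{eff}=(\sqrt{102},100/\sqrt{102})$ and $D_\mathrm{KL}(\bm x^\mathrm{eff}\|\bm x_\mathrm{eq})\approx81.5<165.9$. This holds even though PL at $\bm x^\mathrm{eff}$ is satisfied with $m=0.04$, since $\rho^\phi_{\bm x_\mathrm{eq}}(\bm x^\mathrm{eff})\approx0.066$.
\end{itemize}

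\textbf{Consequence for your proposal.} Your fallback is not a reinterpretation but a necessary strengthening of the hypothesis. Assume $D_\phi(\bm x\|\bm x_\mathrm{eq})\le\frac{1}{2m}\norm{P_{\Im S}\,\partial_{\bm x}D_\phi(\bm x\|\bm x_\mathrm{eq})}_2^2$ on $\mathcal{O}_T(\bm x_0)$, i.e.\ PL for the restriction to $\Stoich(\bm x_0)$. Then $\norm{\bm f}_2\ge\sigma_{\rk(S)}\norm{\bm g_\parallel}_2\ge\sigma_{\rk(S)}\sqrt{2mD_t}$, and your argument is complete. The lower bound on $D_\phi$ needs no change.
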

The proof of \Cref{thm:upper bound of Bregman divergence in generalized equilibirum flow} is given in \Cref{sec:proof of main results}. 
The complete proof of \Cref{thm:lower bound of Bregman divergence in generalized equilibirum flow} is placed in \KS{\Cref{sec:proof for thm:lower bound of Bregman divergence in generalized equilibirum flow}}, because its idea is almost identical to that of \Cref{thm:upper bound of Bregman divergence in generalized equilibirum flow}; the only essential difference is that the technically harder lower bound on the force norm is replaced by a simpler upper bound.

\begin{remark}[Asymptotic Behavior of Bounds]
We summarize the asymptotic behavior of the bounds as $T\to \infty$.
Loosely speaking, the Bregman divergence $D_{\phi}(\bm{x}_T \| \bm{x}_{\mathrm{eq}})$
asymptotically approaches a deformed exponential of the time-integrated activities.
We now state the asymptotic bounds formally using Vinogradov's notation.
Let $f_T, g_T$ be positive functions with respect to $T \geq  0$. Define $f_T \ll g_T$ iff 
$
    \lim_{T \to \infty} \frac{g_T}{f_T} < \infty
$ holds.
% Let $f_T, g_T$ be positive functions with respect to $T \geq  0$. Define $f_T = O(g_T)$ iff 
% $
%     \lim_{T \to \infty} \frac{g_T}{f_T} < \infty
% $ holds.
% Similarly, we define $f_T = \Omega(g_T)$ \KS{Omega: confusing notation} iff
% $
%     \lim_{T \to \infty} \frac{f_T}{g_T} < \infty
% $
% holds.
% \KS{exclude O and Omega notations, and introduce $\ll$}
The lower and upper activity bounds $\low{\omega}(\bm x_T), \up{\omega}(\bm x_T)$ converge to positive values $\low{\omega}(\bm x_\mathrm{eq}), \up{\omega}(\bm x_\mathrm{eq}) > 0$ as $t \to \infty$, respectively.
Therefore, in $\exp_{\up{\B} (\sqrt{\cdot})} [\cdot ]$ of \cref{eq:lower bound of Bregman divergence in generalized equilibirum flow}, the second term, $- 2\sigma_{\rk(S)}^2 m_{\bm x_\mathrm{eq}} \low{\Omega}(\bm{x}_{0:T})$, dominates the first term because $\low{\Omega}(\bm{x}_{0:T}) = \int_0^T \low{\omega}(\bm{x}_{t}) \dd t$ goes to $+\infty$ as $\bm x_T\to \bm x_\mathrm{eq}$.
Therefore, we asymptotically represent the upper and lower bound as follows, respectively:
% \begin{align}
%     D_\phi(\bm x_T \| \bm x_{\mathrm{eq}}) 
%     &= O\qty( \frac{1}{2\sigma_{\rk(S)}^2 m_{\bm x_\mathrm{eq}} }\exp_{\low{\B} (\sqrt{\cdot})} \qty[  - 2\sigma_{\rk(S)}^2 m_{\bm x_\mathrm{eq}} \low{\Omega}(\bm{x}_{0:T}) ] )\\
%     &= \Omega\qty( \frac{1}{2\sigma_1^2 L_{\bm x_\mathrm{eq}} }\exp_{\up{\B} (\sqrt{\cdot})} \qty[  - 2\sigma_1^2 L_{\bm x_\mathrm{eq}} \up{\Omega}(\bm{x}_{0:T}) ] ),
% \end{align}
\begin{align}
    \exp_{\up{\B} (\sqrt{\cdot})} \qty[  - 2\sigma_1^2 L_{\bm x_\mathrm{eq}} \up{\Omega}(\bm{x}_{0:T}) ] \ll 
    D_\phi(\bm x_T \| \bm x_{\mathrm{eq}}) 
    \ll  \exp_{\low{\B} (\sqrt{\cdot})} \qty[  - 2\sigma_{\rk(S)}^2 m_{\bm x_\mathrm{eq}} \low{\Omega}(\bm{x}_{0:T}) ].
\end{align}
\end{remark}

\subsubsection{Bounds of Bregman Divergence under Local Convexity Assumptions}
We state two additional theorems that provide bounds for Bregman divergence under local convexity assumptions, which are stricter than the global convexity assumptions.
% The crucial difference is that the time-integrated activities are replaced by those modified for local convexity. 
We make the assumptions on EPR bounds stronger to obtain the bounds.

\begin{theorem}[Bounds under Local Convexity Assumption]
    \label{thm:bounds under locally convexity assumption}
    Let $\qty{\bm x_t}_{0\leq t\leq T}$ follow the generalized gradient flow \cref{eq:generalized gradient flow}. Let $\bm x_\mathrm{eq}$ be an equilibrium state such that $\qty{\bm x_\mathrm{eq}} = \Stoich(\bm x_0) \cap \partial \phi^\ast[\Equib(\partial \phi(\bm{\hat x}))]$.
    \begin{subthms}
        \item 
        \label[theorem]{thm:upper bound under locally PL assumption}
        The \emph{effective concentration} of $\bm x$ with respect to an equilibrium state $\bm x_\mathrm{eq}$ is defined as $\bm x^\mathrm{eff}(\bm x, \bm x_\mathrm{eq}) := \partial_{\bm \mu} \phi^\ast [ \partial\phi(\bm x_\mathrm{eq}) + P_{\Equib(\bm{0})^\perp} (\partial_{\bm x} D_\phi (\bm{x} \| \bm x_\mathrm{eq}) ) ]$. 
        % \KS{[x eff is firstly defined]}
        
        If \Cref{asm: convexity of EPR lower bound} holds for $\low{\B}$ and $\low{\omega}$, and $\phi$ is $m_{\bm x_\mathrm{eq}}(\cdot)$-locally PL and $m_{\bm x_\mathrm{eq}}(\cdot) \geq \low{m_{\bm x_\mathrm{eq}}}$ on $\bm x^\mathrm{eff}(\mathcal{O}_{T}(\bm x_0), \bm x_\mathrm{eq}) := \qty{\bm x^\mathrm{eff}(\bm x, \bm x_\mathrm{eq}) \mid \bm x \in \mathcal{O}_{T}(\bm x_0)}$, then the Bregman divergence $D_\phi(\bm x_T \| \bm x_{\mathrm{eq}})$ is bounded from above as
        % \begin{align}
        %     \begin{aligned}
        %         &D_\phi(\bm x_T \| \bm x_{\mathrm{eq}}) \\&\leq \frac{\exp_{\low{\B} (\sqrt{\cdot})} \qty[ \ln_{\low{\B} (\sqrt{\cdot})} (2\sigma_{\rk(S)}^2 \low{m_{\bm x_\mathrm{eq}}}  D_\phi(\bm x_0 \| \bm x_{\mathrm{eq}}) ) - 2\sigma_{\rk(S)}^2 \low{m_{\bm x_\mathrm{eq}}}  \low{\Omega}^{m_{\bm x_\mathrm{eq}}}(\bm{x}_{0:T}) ]}{2 \sigma_{\rk(S)}^2 \low{m_{\bm x_\mathrm{eq}}}},            
        %     \end{aligned}
        % \end{align}
        \begin{multline}
            \shoveleft{D_\phi(\bm x_T \| \bm x_{\mathrm{eq}})} \\\leq \frac{\exp_{\low{\B} (\sqrt{\cdot})} \qty[ \ln_{\low{\B} (\sqrt{\cdot})} (2\sigma_{\rk(S)}^2 \low{m_{\bm x_\mathrm{eq}}}  D_\phi(\bm x_0 \| \bm x_{\mathrm{eq}}) ) - 2\sigma_{\rk(S)}^2 \low{m_{\bm x_\mathrm{eq}}}  \low{\Omega}^{m_{\bm x_\mathrm{eq}}}(\bm{x}_{0:T}) ]}{2 \sigma_{\rk(S)}^2 \low{m_{\bm x_\mathrm{eq}}}},               
        \end{multline}
        where 
        % $\low{m_{\bm x_\mathrm{eq}}} := \min_{\KS{\bm x \in \X(\bm x_0)}} m_{\bm x_\mathrm{eq}}(\bm x) > 0$ and  
        $\low{\Omega}^{m_{\bm x_\mathrm{eq}}}(\bm{x}_{0:T}):= \int_0^T \frac{m_{\bm x_\mathrm{eq}}(\bm x_t^\mathrm{eff})}{\low{m_{\bm x_\mathrm{eq}}}} \cdot \low{\omega}(\bm x_t) \dd t$.
        \item 
        \label[theorem]{thm:lower bound under locally LC assumption}
        If \Cref{asm: convexity of EPR upper bound} holds for $\up{\B}$ and $\up{\omega}$, and $D_\phi(\cdot \| \bm x_\mathrm{eq})$ is $L_{\bm x_\mathrm{eq}}(\cdot)$-locally LC and $L_{\bm x_\mathrm{eq}}(\cdot) \leq \up{L_{\bm x_\mathrm{eq}}}$ on $\mathcal{O}_{T}(\bm x_0)$, then the Bregman divergence $D_\phi(\bm x_t \| \bm x_{\mathrm{eq}})$ is bounded from below as
        % \begin{align}
        %     \begin{aligned}
        %         &D_\phi(\bm x_T \| \bm x_{\mathrm{eq}}) \\&\geq \frac{\exp_{\up{\B} (\sqrt{\cdot})} \qty[ \ln_{\up{\B} (\sqrt{\cdot})} (2\sigma_1^2 \up{L_{\bm x_\mathrm{eq}}}  D_\phi(\bm x_0 \| \bm x_{\mathrm{eq}}) ) - 2\sigma_1^2 \up{L_{\bm x_\mathrm{eq}}}  \up{\Omega}^{L_{\bm x_\mathrm{eq}}}(\bm{x}_{0:T}) ]}{2 \sigma_1^2 \up{L_{\bm x_\mathrm{eq}}}},            
        %     \end{aligned}
        % \end{align}
        \begin{multline}
            \shoveleft{D_\phi(\bm x_T \| \bm x_{\mathrm{eq}})} \\\geq \frac{\exp_{\up{\B} (\sqrt{\cdot})} \qty[ \ln_{\up{\B} (\sqrt{\cdot})} (2\sigma_1^2 \up{L_{\bm x_\mathrm{eq}}}  D_\phi(\bm x_0 \| \bm x_{\mathrm{eq}}) ) - 2\sigma_1^2 \up{L_{\bm x_\mathrm{eq}}}  \up{\Omega}^{L_{\bm x_\mathrm{eq}}}(\bm{x}_{0:T}) ]}{2 \sigma_1^2 \up{L_{\bm x_\mathrm{eq}}}},            
        \end{multline}
        where 
        % $\up{L_{\bm x_\mathrm{eq}}} := \max_{\KS{\bm x \in \X(\bm x_0)}} L_{\bm x_\mathrm{eq}}(\bm x) > 0$ and  
        $\up{\Omega}^{L_{\bm x_\mathrm{eq}}}(\bm{x}_{0:T}):= \int_0^T \frac{L_{\bm x_\mathrm{eq}}(\bm x_t)}{\up{L_{\bm x_\mathrm{eq}}}} \cdot \up{\omega}(\bm x_t) \dd t$.
    \end{subthms}

\end{theorem}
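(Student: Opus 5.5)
The argument is a differential inequality for $D_t := D_\phi(\bm x_t\|\bm x_\mathrm{eq})$ closed by a deformed-logarithm comparison, following the global case (\Cref{thm:upper bound of Bregman divergence in generalized equilibirum flow}). I treat (a); (b) is the mirror image with reversed inequalities.

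\emph{Force-norm bound.} By \cref{equilibirum force with x_eq} the force is $\bm f(\bm x_t) = S^\top \bm g_t$ with $\bm g_t := \partial_{\bm x} D_\phi(\bm x_t\|\bm x_\mathrm{eq})$. Since $\Equib(\bm 0) = \Ker S^\top$, only the component $\bm p_t := P_{\Equib(\bm 0)^\perp}\bm g_t \in \Im S$ contributes. The SVD \cref{eq:basis by stoichiometric matrix} then gives
\[
\sigma_{\rk(S)}^2\norm{\bm p_t}_2^2 \le \norm{\bm f(\bm x_t)}_2^2 \le \sigma_1^2 \norm{\bm g_t}_2^2 .
\]
For (a), set $\bm x^\mathrm{eff}_t := \bm x^\mathrm{eff}(\bm x_t,\bm x_\mathrm{eq})$. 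Then $\partial\phi(\bm x^\mathrm{eff}_t) = \partial\phi(\bm x_\mathrm{eq}) + \bm p_t$, so $\partial_{\bm x}D_\phi(\bm x^\mathrm{eff}_t\|\bm x_\mathrm{eq}) = \bm p_t$. The local PL hypothesis at $\bm x^\mathrm{eff}_t$ yields $\norm{\bm p_t}_2^2 \ge 2 m_{\bm x_\mathrm{eq}}(\bm x^\mathrm{eff}_t) D_\phi(\bm x^\mathrm{eff}_t\|\bm x_\mathrm{eq})$.

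It remains to compare $D_\phi(\bm x^\mathrm{eff}_t\|\bm x_\mathrm{eq})$ with $D_t$. The difference $\partial\phi(\bm x_t)-\partial\phi(\bm x^\mathrm{eff}_t)$ lies in $\Ker S^\top$, so $\bm x_t$ and $\bm x^\mathrm{eff}_t$ lie on a common equilibrium submanifold. The point $\bm x_\mathrm{eq}$ lies on the stoichiometric subspace of $\bm x_t$. I expect a Pythagorean-type argument, namely $D_\phi(\bm x^\mathrm{eff}_t\|\bm x_\mathrm{eq}) = D_{\phi^\ast}(\bm\mu_\mathrm{eq}\|\bm\mu_\mathrm{eq}+\bm p_t)$ compared with $D_{\phi^\ast}(\bm\mu_\mathrm{eq}\|\bm\mu_\mathrm{eq}+\bm g_t)$ with $\bm g_t-\bm p_t\in\Ker S^\top$, together with the orthogonality between $\Stoich$ and $\Equib$, to give $D_\phi(\bm x^\mathrm{eff}_t\|\bm x_\mathrm{eq}) \ge D_t$. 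This is the main obstacle. If an exact Pythagorean identity fails, I would try a variational characterization instead. The point $\bm x_\mathrm{eq}$ minimizes $D_\phi(\cdot\|\bm x^\mathrm{eff}_t)$ over the relevant affine set, and dual-coordinate minimality of the projection should give the needed monotone comparison. Granting this step,
\[
\norm{\bm f(\bm x_t)}_2^2 \ge 2\sigma_{\rk(S)}^2\, m_{\bm x_\mathrm{eq}}(\bm x^\mathrm{eff}_t)\, D_t .
\]

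\emph{Differential inequality.} Using \cref{eq:3rd eq.} and \Cref{asm: activity-force decomposability of EPR lower bound}, and noting that $\low{\B}$ is increasing (it is convex in the squared argument, vanishes at $0$, and is positive elsewhere),
\[
\dot D_t \le -\low{\omega}(\bm x_t)\,\low{\B}\bigl(\sqrt{2\sigma_{\rk(S)}^2 m_t D_t}\bigr), \qquad m_t := m_{\bm x_\mathrm{eq}}(\bm x^\mathrm{eff}_t).
\]
Write $h := \low{\B}(\sqrt{\cdot})$, which is convex with $h(0)=0$, and $c := 2\sigma_{\rk(S)}^2 \low{m_{\bm x_\mathrm{eq}}}$. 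Since $m_t/\low{m_{\bm x_\mathrm{eq}}} \ge 1$, convexity gives $h(\lambda u) \ge \lambda h(u)$ for $\lambda \ge 1$. Hence
\[
h\bigl(2\sigma_{\rk(S)}^2 m_t D_t\bigr) \ge \frac{m_t}{\low{m_{\bm x_\mathrm{eq}}}}\, h(c D_t).
\]
This is exactly where \Cref{asm: convexity of EPR lower bound} enters, and it produces the weight in $\low{\Omega}^{m_{\bm x_\mathrm{eq}}}$.

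\emph{Integration.} Set $u_t := c D_t$. Then $\dot u_t \le -c\,\tfrac{m_t}{\low{m_{\bm x_\mathrm{eq}}}}\,\low{\omega}(\bm x_t)\, h(u_t)$. Dividing by $h(u_t)>0$ (when $u_t>0$; otherwise the bound is trivial) and using $\frac{d}{dt}\ln_h(u_t) = \dot u_t/h(u_t)$, integration over $[0,T]$ gives
\[
\ln_h(u_T) \le \ln_h(u_0) - c\,\low{\Omega}^{m_{\bm x_\mathrm{eq}}}(\bm x_{0:T}).
\]
Applying the increasing map $\exp_h$ and dividing by $c$ yields the claimed bound.

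For (b), I would use $\norm{\bm f}_2^2 \le \sigma_1^2\norm{\bm g_t}_2^2 \le 2\sigma_1^2 L_{\bm x_\mathrm{eq}}(\bm x_t) D_t$ directly on the orbit, so no effective concentration is needed. Combining this with \Cref{asm: activity-force decomposability of EPR upper bound} gives $\dot D_t \ge -\up{\omega}\,\up{\B}(\cdots)$. For $\lambda\le1$, convexity gives $h(\lambda u)\le\lambda h(u)$ with $\lambda = L_{\bm x_\mathrm{eq}}(\bm x_t)/\up{L_{\bm x_\mathrm{eq}}}$. The same integration then applies with reversed inequalities.
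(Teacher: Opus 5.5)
\textbf{Gap in (a): the step you grant is false.} Your route is the paper's route. It uses the SVD estimate $\|\bm f(\bm x)\|_2\ge\sigma_{\rk(S)}\|P_{\Im S}\,\partial_{\bm x}D_\phi(\bm x\|\bm x_{\mathrm{eq}})\|_2$ and the identity $\partial_{\bm x}D_\phi(\bm x^{\mathrm{eff}}\|\bm x_{\mathrm{eq}})=P_{\Im S}\,\partial_{\bm x}D_\phi(\bm x\|\bm x_{\mathrm{eq}})$. It then applies local PL at $\bm x^{\mathrm{eff}}$, compares divergences, and finishes with $h(\lambda u)\ge\lambda h(u)$ and deformed-logarithm integration. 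Your part (b) is fine.

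The gap is the step you grant, $D_\phi(\bm x^{\mathrm{eff}}_t\|\bm x_{\mathrm{eq}})\ge D_\phi(\bm x_t\|\bm x_{\mathrm{eq}})$, and it cannot be filled. The paper obtains it from \cref{eq:Pythagorean relation of effective concentration}, which it derives from the orthogonality $\langle\partial\phi(\bm x)-\partial\phi(\bm x^{\mathrm{eff}}),\bm x-\bm x_{\mathrm{eq}}\rangle=0$. The three-point identity $D_\phi(\bm a\|\bm c)=D_\phi(\bm a\|\bm b)+D_\phi(\bm b\|\bm c)+\langle\partial\phi(\bm b)-\partial\phi(\bm c),\bm a-\bm b\rangle$ shows that this orthogonality only yields $D_\phi(\bm x_{\mathrm{eq}}\|\bm x^{\mathrm{eff}})=D_\phi(\bm x_{\mathrm{eq}}\|\bm x)+D_\phi(\bm x\|\bm x^{\mathrm{eff}})$. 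In other words $\bm x=\bm x^\dagger(\bm x_{\mathrm{eq}},\bm x^{\mathrm{eff}})$, with $\bm x_{\mathrm{eq}}$ in the first slot. For the ordering you need, the cross term is $\langle\partial\phi(\bm x)-\partial\phi(\bm x_{\mathrm{eq}}),\bm x^{\mathrm{eff}}-\bm x\rangle$, which has no sign. Your variational fallback also points the wrong way: over $\Stoich(\bm x_{\mathrm{eq}})$ the minimizer of $D_\phi(\cdot\|\bm x^{\mathrm{eff}}_t)$ is $\bm x_t$, not $\bm x_{\mathrm{eq}}$.

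\textbf{Counterexample.} Take $X_1\rightleftharpoons X_2$, so $S=(1,-1)^\top$ and $\sigma_1=\sqrt2$. Let $\phi=\phi_{\mathrm{KL}}$, $\bm x_{\mathrm{eq}}=(1,\tfrac1{10})$ and $\bm x=(\tfrac12,\tfrac35)\in\Stoich(\bm x_{\mathrm{eq}})$. Then $P_{\Im S}\ln(\bm x/\bm x_{\mathrm{eq}})=-\tfrac12\ln 12\,(1,-1)$ and $\bm x^{\mathrm{eff}}=(1/\sqrt{12},\sqrt{12}/10)$. This gives $D_{\mathrm{KL}}(\bm x^{\mathrm{eff}}\|\bm x_{\mathrm{eq}})\approx0.537<0.728\approx D_{\mathrm{KL}}(\bm x\|\bm x_{\mathrm{eq}})$.

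Here $\|\bm f\|_2=\sigma_1\|P_{\Im S}\,\partial_{\bm x}D_{\mathrm{KL}}(\bm x\|\bm x_{\mathrm{eq}})\|_2$ holds exactly. Take the admissible choice $m_{\bm x_{\mathrm{eq}}}=\rho^{\phi_{\mathrm{KL}}}_{\bm x_{\mathrm{eq}}}$, which is the one used in \Cref{cor:bounds in quadratic mass action CRNs}. It gives $2\sigma_1^2m_{\bm x_{\mathrm{eq}}}(\bm x^{\mathrm{eff}})D_{\mathrm{KL}}(\bm x\|\bm x_{\mathrm{eq}})\approx1.36\,\|\bm f(\bm x)\|_2^2$, so the force bound you need fails.

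With quadratic dissipation, $\dot D=-\omega_{\mathrm{log}}\|\bm f\|_2^2$ exactly and $\low{\B}(\sqrt{\cdot})$ is linear. The bound in (a) from this $\bm x_0$ is therefore $D_0\exp\bigl(-2\sigma_1^2\int_0^T m_{\bm x_{\mathrm{eq}}}(\bm x^{\mathrm{eff}}_t)\,\omega_{\mathrm{log}}(\bm x_t)\,\mathrm{d}t\bigr)$. Its slope at $T=0$ is $1.36$ times that of $D_{\mathrm{KL}}(\bm x_T\|\bm x_{\mathrm{eq}})$, so it is violated for small $T>0$. Part (a) as stated is not provable.

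\textbf{Possible repair.} Replace the hypothesis by a projected local PL condition on the orbit itself: $\|P_{\Im S}\,\partial_{\bm x}D_\phi(\bm x\|\bm x_{\mathrm{eq}})\|_2^2\ge2m_{\bm x_{\mathrm{eq}}}(\bm x)D_\phi(\bm x\|\bm x_{\mathrm{eq}})$ for $\bm x\in\mathcal{O}_T(\bm x_0)$. Under it, your remaining steps go through verbatim with $m_t=m_{\bm x_{\mathrm{eq}}}(\bm x_t)$, and no effective concentration is needed.
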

The complete proofs of \Cref{thm:bounds under locally convexity assumption} are placed in \KS{\Cref{sec:proof for thm:upper bound under locally PL assumption,sec:proof for thm:lower bound under locally LC assumption}} because the proofs are similar to that of \Cref{thm:upper bound of Bregman divergence in generalized equilibirum flow}.

\begin{remark}[Asymptotic behavior of bounds (continued)]
We represent the long-time asymptotic behavior of the local convexity bounds as
% \begin{align}
%     D_\phi(\bm x_T \| \bm x_{\mathrm{eq}}) &= O\qty( \frac{1}{2\sigma_{\rk(S)}^2 \low{m_{\bm x_\mathrm{eq}}} }\exp_{\low{\B} (\sqrt{\cdot})} \qty[  - 2\sigma_{\rk(S)}^2 \low{m_{\bm x_\mathrm{eq}}} \cdot \low{\Omega}^{m_{\bm x_\mathrm{eq}}}(\bm{x}_{0:T}) ] )\\&= \Omega\qty( \frac{1}{2\sigma_1^2 \up{L_{\bm x_\mathrm{eq}}} }\exp_{\up{\B} (\sqrt{\cdot})} \qty[  - 2\sigma_1^2 \up{L_{\bm x_\mathrm{eq}}} \cdot \up{\Omega}^{L_{\bm x_\mathrm{eq}}}(\bm{x}_{0:T}) ] ).
% \end{align}
\begin{align}
    \begin{aligned}
        \exp_{\up{\B} (\sqrt{\cdot})} \qty[ - 2\sigma_1^2 \up{L_{\bm x_\mathrm{eq}}} \cdot \up{\Omega}^{L_{\bm x_\mathrm{eq}}}(\bm{x}_{0:T})] &\ll 
    D_\phi(\bm x_T \| \bm x_{\mathrm{eq}}) \\&\ll   \exp_{\up{\B} (\sqrt{\cdot})} \qty[ - 2\sigma_{\rk(S)}^2 \low{m_{\bm x_\mathrm{eq}}} \cdot \low{\Omega}^{m_{\bm x_\mathrm{eq}}}(\bm{x}_{0:T}) ].        
    \end{aligned}
\end{align}
By the conditions in \Cref{thm:bounds under locally convexity assumption}, the resulting bounds are stricter than those in \Cref{thm:bounds of Bregman divergence in generalized equilibirum flow} for $\low{m_{\bm x_\mathrm{eq}}}$ and $\up{L_{\bm x_\mathrm{eq}}}$, respectively.
The global convexity bounds are also recovered by choosing the local convexity parameters to be constants.
Intuitively, the local convexity bounds exploit more information about the convexity than their global counterparts, and therefore yield stricter bounds.
% \KS{difference b/w global \& local}
\end{remark}

\subsection{Corollaries for Equilibrium Mass-Action CRNs}
\label{sec:corollaries for equilibrium CRNs}
Here, we present the corollaries of the main theorem for equilibrium mass-action CRNs. Specifically, we look at the explicit functional forms of the KL divergence bounds for CRNs with two dissipation functions: hyperbolic and quadratic.
% \KS{skip the detailed forms of bounds and integrated-activity, just state omega and B}

\subsubsection{Explicit KL-divergence Bounds in CRNs with Hyperbolic Dissipation}
First, \Cref{lemma:upper bounds of separable EPR} applies to the hyperbolic dissipation function, since it is separable.
Since $(\psi_\mathrm{hyp}^\ast)'(f) = 2 \sinh(f/2)$ is convex, \Cref{lemma:lower bounds of separable EPR} also applies.
Thus, we obtain the following corollary for EPR bounds in hyperbolic case:
\begin{corollary}[EPR Bounds for Hyperbolic Dissipation]
    \label{cor:EPR bounds for hyperbolic dissipation}
    For the hyperbolic dissipation function with the geometric mean activity $\bm\omega_\mathrm{geo}(\bm x)$ and the cosh scalar dissipation function $\psi_\mathrm{hyp}^\ast(f)$ \cref{eq:hyperbolic dissipation}, \Cref{asm: convexity of EPR upper bound} is satisfied by 
\begin{align}
    \up{\omega}_\mathrm{geo}(\bm x) :=& \| \bm \omega_{\log} (\bm x) \|_2 = \sqrt{\sum_{r = 1}^M {j^+_r(\bm x) \cdot  j^-_r(\bm x)}},\\
    \up{\B}_\mathrm{hyp}(\abs{f}) :=& \abs{f} \cdot \psi^\ast_\mathrm{hyp}(\abs{f}) = 2\abs{f} \cdot \sinh\qty(\frac{\abs{f}}{2}),
\end{align} and \Cref{asm: convexity of EPR lower bound} is satisfied by 
\begin{align}
    \low{\omega}_\mathrm{geo}(\bm x) :=& \| \bm \omega_{\log} (\bm x) \|_{-\infty} = \min_{1\leq r\leq M} \sqrt{ {j^+_r(\bm x) \cdot  j^-_r(\bm x)}},\\
    \low{\B}_\mathrm{hyp}(\abs{f}) :=& \abs{f} \cdot \psi^\ast_\mathrm{hyp}\qty(\frac{\abs{f}}{\sqrt{M}}) = 2\abs{f} \cdot \sinh\qty(\frac{\abs{f}}{2\sqrt{M}}).
\end{align}
    In other words, the inequalities $\low{\dot\Sigma}_\mathrm{hyp}(\bm x, \bm f) \leq \dot \Sigma(\bm x, \bm f) \leq \up{\dot\Sigma}_\mathrm{hyp}(\bm x, \bm f)$ hold for $\bm x \in \X$ and $\bm f \in \F^\mathrm{eq}$, where
    % \begin{align}
    %     \begin{aligned}
    %         \low{\omega}_\mathrm{geo}(\bm x) \cdot \low{\B}_\mathrm{hyp}(\norm{\bm f}_2)
    %         \leq \expval{\bm f, \bm \omega_\mathrm{geo}(\bm x)\circ (\psi_\mathrm{hyp}^\ast)'(\bm f)} 
    %         \leq \up{\omega}_\mathrm{geo}(\bm x) \cdot \up{\B}_\mathrm{hyp}(\norm{\bm f}_2),
    %     \end{aligned}
    % \end{align}
    \begin{align}
        \label{eq:hyperbolic EPR bounds}
        \low{\dot\Sigma}_\mathrm{hyp}(\bm x, \bm f) := \low{\omega}_\mathrm{geo}(\bm x) \cdot \low{\B}_\mathrm{hyp}(\norm{\bm f}_2),\
        \up{\dot\Sigma}_\mathrm{hyp}(\bm x, \bm f) := \up{\omega}_\mathrm{geo}(\bm x) \cdot \up{\B}_\mathrm{hyp}(\norm{\bm f}_2).
    \end{align}
    For $M\geq 2$, the equality hold iff $\bm f = \bm 0$. For $M=1$, the equality always hold.
\end{corollary}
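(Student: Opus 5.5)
This corollary is a direct specialization of \Cref{lemma:bounds of separable EPR} and \Cref{lemma:sufficient condition to convexity of bounds} to the separable dissipation function \cref{eq:hyperbolic dissipation}. The plan has three steps.

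\textbf{Step 1: check the hypotheses.} The hyperbolic dissipation $\Psi^\ast_{\bm x}(\bm f)=\sum_r (\omega_\mathrm{geo}(\bm x))_r\,\psi^\ast_\mathrm{hyp}(f_r)$ is separable with positive activity. This needs $(\omega_\mathrm{geo})_r=\sqrt{j_r^+ j_r^-}>0$ on $\X$, which follows from \cref{eq:compatibility of rates and stoichiometry} for $\bm x\in\X$. The scalar function $\psi^\ast_\mathrm{hyp}(f)=4[\cosh(f/2)-1]$ is even, vanishes at $0$, is strictly convex and superlinear, so it is a legitimate scalar dissipation function satisfying \cref{eq:dissipation function}. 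Its derivative is $(\psi^\ast_\mathrm{hyp})'(f)=2\sinh(f/2)$.

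\textbf{Step 2: apply the separable EPR bounds.} For the upper bound, invoke \Cref{lemma:upper bounds of separable EPR}, which needs only separability. It gives
\[
\dot\Sigma\le \|\bm\omega_\mathrm{geo}(\bm x)\|_2\,\|\bm f\|_2\,2\sinh(\|\bm f\|_2/2),
\]
with $\|\bm\omega_\mathrm{geo}\|_2=\sqrt{\sum_r j_r^+j_r^-}$. For the lower bound, \Cref{lemma:lower bounds of separable EPR} additionally requires $(\psi^\ast)'$ to be convex. On $[0,\infty)$, where the argument $|f|$ lives, $\sinh(f/2)$ has second derivative $\tfrac14\sinh(f/2)\ge0$, so it is convex there. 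That lemma then gives
\[
\dot\Sigma\ge \min_r\sqrt{j_r^+j_r^-}\;\|\bm f\|_2\,2\sinh\!\bigl(\|\bm f\|_2/(2\sqrt M)\bigr).
\]
The equality cases ($M\ge2$ iff $\bm f=\bm 0$; $M=1$ always) are inherited directly from the lemma.

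\textbf{Step 3: the convexity assumptions.} To upgrade to \Cref{asm: convexity of EPR upper bound,asm: convexity of EPR lower bound}, apply \Cref{lemma:sufficient condition to convexity of bounds}, again using the convexity of $(\psi^\ast_\mathrm{hyp})'$. As a sanity check I would verify directly that $u\mapsto 2\sqrt u\sinh(\sqrt u/(2c))$, for $c=1$ or $c=\sqrt M$, is convex on $u\ge0$. Its power series is $\sum_k a_k u^{k+1}$ with $a_k\ge0$, which is convex.

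\textbf{Expected obstacle: notation.} I would read $\up{\B}$ as $|f|\,(\psi^\ast)'(|f|)$ rather than $|f|\,\psi^\ast(|f|)$, and the activity in the statement as $\bm\omega_\mathrm{geo}$ rather than $\bm\omega_{\log}$. Both readings are consistent with the explicit $\sinh$ and $\sqrt{j^+j^-}$ formulas in the statement, so I would state them explicitly. I would also note that convexity of $(\psi^\ast)'$ is only needed on the nonnegative half-line, since the lemmas use it only at nonnegative arguments.
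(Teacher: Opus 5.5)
Your proposal is correct and follows the paper's route: the corollary is a specialization of \Cref{lemma:bounds of separable EPR} (the upper bound uses separability alone, the lower bound uses convexity of $(\psi^\ast_\mathrm{hyp})' = 2\sinh(\cdot/2)$) and of \Cref{lemma:sufficient condition to convexity of bounds} to the hyperbolic dissipation. Your readings of $\up{\B}$ as $|f|\,(\psi^\ast)'(|f|)$ and of the activity as $\bm\omega_\mathrm{geo}$ are the right ones, and restricting the convexity requirement to $[0,\infty)$ is actually necessary rather than cosmetic: $\sinh$ is not convex on all of $\R$, and the lemmas only ever evaluate $(\psi^\ast)'$ at nonnegative arguments.
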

We state the corollary of \Cref{thm:lower bound under locally LC assumption,thm:upper bound under locally PL assumption} with respect to $\rho^\phi_{\bm x_\mathrm{eq}}(\bm x)$ for hyperbolic dissipation.
We can explicitly write the deformed logarithm with respect to $\up{\B}_\mathrm{hyp} (\sqrt{\cdot}) = 2 \sqrt{\cdot } \sinh(\sqrt{\cdot }/2)$ as
\begin{align}
    \label{eq:ln_hyp}
    \begin{aligned}
            \ln_\mathrm{hyp}(u) :=&\, \ln_{\up{\B}_\mathrm{hyp}(\sqrt{\cdot})}(u) 
        = \int_1^u \frac{\dd u'}{2\sqrt{u'} \sinh(\sqrt{u'}/2)}
        % =&\, 2\int_1^u \frac{\dd (\sqrt{u'}/2)}{\sinh(\sqrt{u'}/2)}
        = 2 \ln \qty(\tanh\qty(\frac{\sqrt{u}}{4})) - C.
    \end{aligned}
\end{align}
Here, we use the integral $\int \frac{1}{\sinh y} \dd y = \ln \qty(\tanh\qty(\frac{y}{2}))$ \cite
[\href{https://dlmf.nist.gov/4.40.E4}{(4.40.4)}]{822801}.
$C$ is a constant: $C :=  2\ln \qty(\tanh\qty(\frac{1}{4}))$.
The deformed exponential function is also calculated as
\begin{align}
    \label{eq:exp_hyp}
    \exp_\mathrm{hyp}(v)
    := \exp_{\up{\B}_\mathrm{hyp} (\sqrt{\cdot})}(v) = \qty(4 \tanh^{-1} \qty(\exp \qty(\frac{v + C}{2})))^2.
\end{align}
% The $\ln_{\low{\B}_\mathrm{hyp} (\sqrt{\cdot })}$ and $\exp_{\low{\B}_\mathrm{hyp} (\sqrt{\cdot })}$
The deformed functions for $\low{\B}_\mathrm{hyp}  (\sqrt{\cdot }) = 2 \sqrt{\cdot} \sinh(\sqrt{\cdot }/2\sqrt{M})$ are also calculated as 
\begin{align}
    \label{eq:ln_hyp^m}
    \ln_\mathrm{hyp}^M(u) &:= \ln_{\low{\B}_\mathrm{hyp} (\sqrt{\cdot})}
    = 2 \sqrt{M} \ln \qty(\tanh\qty(\frac{\sqrt{u}}{4\sqrt{M}})) - C',\\
    \label{eq:exp_hyp^m}
    \exp_\mathrm{hyp}^M(v)
    &:=\exp_{\low{\B}_\mathrm{hyp} (\sqrt{\cdot})}
    = \qty(4 \sqrt{M} \tanh^{-1} \qty(\exp \qty(\frac{v + C'}{2\sqrt{M}})))^2,
\end{align}
where $C'$ is a constant. 

\begin{corollary}[Explicit Bounds of KL divergence in Mass-Action CRNs with Hyperbolic Dissipation]
    \label{cor:bounds in hyperbolic mass action CRNs}
    Assume \Cref{asm:invariance of positive orthant} in an equilibrium mass-action CRN.
    Then, from \Cref{lemma:global convexity of Bregman divergence on solution orbit}, \Cref{thm:bounds under locally convexity assumption} and \Cref{cor:EPR bounds for hyperbolic dissipation}, the divergence $D_\mathrm{KL}(\cdot \| \bm x_\mathrm{eq})$ is $\low{\rho^{\phi_\mathrm{KL}}_{\bm x_\mathrm{eq}}}$-PL and $\up{\rho^{\phi_\mathrm{KL}}_{\bm x_\mathrm{eq}}}$-LC on $\mathcal{O}_{T}(\bm x_0)$ for some $\low{\rho^{\phi_\mathrm{KL}}_{\bm x_\mathrm{eq}}}, \up{\rho^{\phi_\mathrm{KL}}_{\bm x_\mathrm{eq}}}$, and $\low{D}_\mathrm{hyp, loc}(\up{\Omega}^\mathrm{KL}_{\mathrm{geo}}(\bm x_t)) \leq D_\mathrm{KL}(\bm x_t \| \bm x_\mathrm{eq}) \leq \up{D}_\mathrm{hyp, loc}(\low{\Omega}^\mathrm{KL}_{\mathrm{geo}}(\bm x_t))$ holds for the upper and lower bounds,
    \begin{align}
        \label{eq:hyp_loc_upper}
                \up{D}_\mathrm{hyp, loc}(\low{\Omega}_{\mathrm{geo}}^\mathrm{KL}) &:= \frac{\exp_\mathrm{hyp} \qty[ \ln_\mathrm{hyp} (2\sigma_{\rk(S)}^2 \low{\rho^{\phi_\mathrm{KL}}_{\bm x_\mathrm{eq}}}  D_\mathrm{KL}(\bm x_0 \| \bm x_{\mathrm{eq}}) ) - 2\sigma_{\rk(S)}^2 \low{\rho^{\phi_\mathrm{KL}}_{\bm x_\mathrm{eq}}} \low{\Omega}_{\mathrm{geo}}^\mathrm{KL}]}{2 \sigma_{\rk(S)}^2 \low{\rho^{\phi_\mathrm{KL}}_{\bm x_\mathrm{eq}}}},\\
        \label{eq:hyp_loc_lower}
            \low{D}_\mathrm{hyp, loc}(\up{\Omega}_{\mathrm{geo}}^\mathrm{KL}) &:= \frac{\exp_\mathrm{hyp} \qty[ \ln_\mathrm{hyp} (2\sigma_1^2 \up{\rho^{\phi_\mathrm{KL}}_{\bm x_\mathrm{eq}}}  D_\mathrm{KL}(\bm x_0 \| \bm x_{\mathrm{eq}}) ) - 2\sigma_1^2 \up{\rho^{\phi_\mathrm{KL}}_{\bm x_\mathrm{eq}}} \up{\Omega}_{\mathrm{geo}}^\mathrm{KL}]}{2\sigma_1^2 \up{\rho^{\phi_\mathrm{KL}}_{\bm x_\mathrm{eq}}} }.
    \end{align}
    Here, $\low{\Omega}_{\mathrm{geo}}^\mathrm{KL}(\bm x_{0:T})$ and $\up{\Omega}_{\mathrm{geo}}^\mathrm{KL}(\bm{x}_{0:T})$ are defined as
    \begin{align}
            \low{\Omega}_{\mathrm{geo}}^\mathrm{KL}(\bm x_{0:T}) := \int_0^T \frac{\rho^{\phi_\mathrm{KL}}_{\bm x_\mathrm{eq}}(\bm x_\tau^\mathrm{eff})}{\low{\rho^{\phi_\mathrm{KL}}_{\bm x_\mathrm{eq}}}} \cdot \low{\omega}_\mathrm{geo}(\bm x_\tau) \dd \tau,\,
            \up{\Omega}_{\mathrm{geo}}^\mathrm{KL}(\bm x_{0:T}) := \int_0^T \frac{\rho^{\phi_\mathrm{KL}}_{\bm x_\mathrm{eq}}(\bm x_\tau)}{\up{\rho^{\phi_\mathrm{KL}}_{\bm x_\mathrm{eq}}}} \cdot  \up{\omega}_\mathrm{geo}(\bm x_\tau) \dd \tau,       
    \end{align}
    respectively.
\end{corollary}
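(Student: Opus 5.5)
The corollary is a specialization, so the plan is to check the hypotheses of \Cref{thm:bounds under locally convexity assumption} for the equilibrium mass-action CRN with hyperbolic dissipation. We then substitute the explicit functions from \Cref{cor:EPR bounds for hyperbolic dissipation} and the deformed functions \cref{eq:ln_hyp,eq:exp_hyp}.

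\textbf{Step 1: the equilibrium CRN is a generalized gradient flow.} Under the Wegscheider condition there is $\bm{\hat x}$ with $\bm f(\bm x)=S^\top \ln(\bm x/\bm{\hat x})$. For the hyperbolic dissipation we have $\omega_{\mathrm{geo},r}\,2\sinh(f_r/2)=\sqrt{j_r^+j_r^-}\,(\sqrt{j_r^+/j_r^-}-\sqrt{j_r^-/j_r^+})=j_r^+-j_r^-$. Hence \cref{eq:CRE} coincides with \cref{eq:generalized gradient flow} for $\phi=\phi_\mathrm{KL}$. Under \Cref{asm:invariance of positive orthant}, the point $\bm x_\mathrm{eq}$ is the unique element of $\Stoich(\bm x_0)\cap\partial\phi^\ast[\Equib(\ln\bm{\hat x})]$.

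\textbf{Step 2: constants from the orbit.} \Cref{lemma:global convexity of Bregman divergence on solution orbit} gives positive $\low{\rho^{\phi_\mathrm{KL}}_{\bm x_\mathrm{eq}}}$ and $\up{\rho^{\phi_\mathrm{KL}}_{\bm x_\mathrm{eq}}}$ bounding $\rho^{\phi_\mathrm{KL}}_{\bm x_\mathrm{eq}}$ on $\mathcal{O}_\infty(\bm x_0)\setminus\{\bm x_\mathrm{eq}\}$. Here we use that $\partial^2\phi_\mathrm{KL}^\ast(\bm\mu)=\mathrm{diag}(e^{\bm\mu})$ has eigenvalues that are continuous and positive on the compact set $\M(\bm x_0,\bm x_\mathrm{eq})$.

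We then choose $m_{\bm x_\mathrm{eq}}(\cdot)=L_{\bm x_\mathrm{eq}}(\cdot)=\rho^{\phi_\mathrm{KL}}_{\bm x_\mathrm{eq}}(\cdot)$. With this choice the local PL and local LC inequalities hold with equality wherever $\bm x\neq\bm x_\mathrm{eq}$. The upper bound $L\le\up{\rho}$ holds on $\mathcal{O}_T(\bm x_0)$.

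\textbf{Main obstacle.} The PL part of \Cref{thm:upper bound under locally PL assumption} is required on the effective-concentration set $\bm x^\mathrm{eff}(\mathcal{O}_T(\bm x_0),\bm x_\mathrm{eq})$, not on the orbit itself. So I must show $\rho(\bm x^\mathrm{eff})\ge\low{\rho}$ there. I would redo the compactness argument of \Cref{lemma:global convexity of Bregman divergence on solution orbit} with $\bm\mu=\ln\bm x_\mathrm{eq}+P_{\Equib(\bm 0)^\perp}\ln(\bm x/\bm x_\mathrm{eq})$. This map is continuous in $\bm x$, so its image of the compact orbit, joined by segments to $\ln\bm x_\mathrm{eq}$, is again compact. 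This gives a positive lower bound, which after possibly shrinking the constant I take as $\low{\rho^{\phi_\mathrm{KL}}_{\bm x_\mathrm{eq}}}$. I would also have to handle the points where $\bm x^\mathrm{eff}=\bm x_\mathrm{eq}$, i.e.\ where $\bm f=\bm 0$, which happens only at equilibrium. At those points the integrand can be defined by the bound itself, since the corresponding EPR contribution vanishes.

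\textbf{Step 3: substitution.} \Cref{cor:EPR bounds for hyperbolic dissipation} supplies \Cref{asm: convexity of EPR lower bound,asm: convexity of EPR upper bound}, with $\low{\omega}_\mathrm{geo},\low{\B}_\mathrm{hyp}$ and $\up{\omega}_\mathrm{geo},\up{\B}_\mathrm{hyp}$ respectively. Plugging these into \Cref{thm:upper bound under locally PL assumption,thm:lower bound under locally LC assumption} turns the integrated activities into $\low{\Omega}^\mathrm{KL}_\mathrm{geo}$ and $\up{\Omega}^\mathrm{KL}_\mathrm{geo}$. Writing the deformed functions via \cref{eq:ln_hyp,eq:exp_hyp} then yields \cref{eq:hyp_loc_upper,eq:hyp_loc_lower}. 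For the upper bound the statement's $\exp_\mathrm{hyp}$ and $\ln_\mathrm{hyp}$ should literally be the $M$-versions \cref{eq:ln_hyp^m,eq:exp_hyp^m}, so I would note this identification.
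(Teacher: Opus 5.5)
Your proposal is correct and follows the same direct specialization the paper intends. The orbit lemma supplies the constants, the local-convexity theorem is applied with $m_{\bm x_\mathrm{eq}}=L_{\bm x_\mathrm{eq}}=\rho^{\phi_\mathrm{KL}}_{\bm x_\mathrm{eq}}$, and the hyperbolic EPR corollary supplies $\underline{\omega}_\mathrm{geo},\underline{\mathcal{B}}_\mathrm{hyp},\overline{\omega}_\mathrm{geo},\overline{\mathcal{B}}_\mathrm{hyp}$.

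Your extra compactness argument on $\bm x^\mathrm{eff}(\mathcal{O}_T(\bm x_0),\bm x_\mathrm{eq})$ is genuinely needed, a step the paper's one-line justification skips. The orbit lemma only controls $\rho^{\phi_\mathrm{KL}}_{\bm x_\mathrm{eq}}$ on the orbit itself, and the values at effective concentrations can fall below the orbit's infimum; this already happens for a single reaction $X_1\rightleftharpoons X_2$. You are also right that the upper bound produced by the theorem involves $\ln^M_\mathrm{hyp}$ and $\exp^M_\mathrm{hyp}$ rather than $\ln_\mathrm{hyp}$ and $\exp_\mathrm{hyp}$. Finally, the case $\bm x^\mathrm{eff}=\bm x_\mathrm{eq}$ that you worry about never arises on $\mathcal{O}_T(\bm x_0)$ when $\bm x_0\neq\bm x_\mathrm{eq}$, by uniqueness of solutions.
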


\subsubsection{Explicit KL-divergence Bounds in CRNs with Quadratic Dissipation}
\Cref{lemma:bounds of separable EPR} can also be applied to the quadratic dissipation case, while stricter bounds can be derived for this setting.
\begin{lemma}[EPR Bounds for Quadratic Dissipation]
    \label{lemma:bounds of quadratic EPRs}
    For the quadratic dissipation function with the logarithmic mean activity $\bm\omega_\mathrm{log}(\bm x)$ and the quadratic scalar dissipation function $\psi_\mathrm{quad}^\ast(f)$ \cref{eq:quadratic dissipation},  \Cref{asm: activity-force decomposability of EPR upper bound} is satisfied by 
\begin{align}
    \up{\omega}_\mathrm{log}(\bm x) &:= \| \bm \omega_{\log} (\bm x) \|_\infty = \max_{1\leq r\leq M} \frac{j^+_r(\bm x) - j^-_r(\bm x)}{ \ln j^+_r(\bm x) - \ln j^-_r(\bm x)},\\
    \up{\B}_\mathrm{quad}(\abs{f}) &:= \abs{f} \cdot \psi^\ast_\mathrm{quad}(\abs{f}) = 2\abs{f}^2,
\end{align} and \Cref{asm: activity-force decomposability of EPR lower bound} is satisfied by 
\begin{align}
    \low{\omega}_\mathrm{log}(\bm x) &:= \| \bm \omega_{\log} (\bm x) \|_{-\infty} = \min_{1\leq r\leq M} \frac{j^+_r(\bm x) - j^-_r(\bm x)}{ \ln j^+_r(\bm x) - \ln j^-_r(\bm x)},\\
    \low{\B}_\mathrm{quad}(\abs{f}) &:= \abs{f} \cdot \psi^\ast_\mathrm{quad}\qty(\abs{f}) = 2\abs{f}^2.
\end{align}
    Especially, the following inequalities hold for $\bm x \in \X$ and $\bm f \in \F^\mathrm{eq}$,
    \begin{align}
        \norm{\bm \omega_\mathrm{log}(\bm x)}_{-\infty} \frac{\norm{\bm f}_2^2}{\sqrt{M}} \leq \low{\dot\Sigma}_\mathrm{quad}(\bm x, \bm f)
        \leq \Sigma(\bm x, \bm f)
        \leq \up{\dot\Sigma}_\mathrm{quad}(\bm x, \bm f)
        \leq \norm{\bm \omega_\mathrm{log}(\bm x)}_{2} {\norm{\bm f}_2^2},
    \end{align}
    where the bounds $\low{\dot\Sigma}_\mathrm{quad}, \up{\dot\Sigma}_\mathrm{quad}$ are defined as
    \begin{align}
        \label{eq:quadratic EPR bounds}
        \low{\dot\Sigma}_\mathrm{quad}(\bm x, \bm f) := \low{\omega}_\mathrm{log}(\bm x) \cdot \low{\B}_\mathrm{quad}(\norm{\bm f}_2),\
        \up{\dot\Sigma}_\mathrm{quad}(\bm x, \bm f) := \up{\omega}_\mathrm{log}(\bm x) \cdot \up{\B}_\mathrm{quad}(\norm{\bm f}_2).
    \end{align}
    The second and third inequalities hold if and only if $\omega_r(\bm x)$ takes the same value for $r \in \mathrm{supp}(\bm f)$.
\end{lemma}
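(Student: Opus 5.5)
The plan is to exploit that for quadratic dissipation the EPR is an explicit weighted quadratic form in the force, so we can bypass the convexity argument of \Cref{lemma:lower bounds of separable EPR} and get sharper constants. Since $\psi^\ast_\mathrm{quad}(f)=\tfrac12 f^2$, we have $(\psi^\ast_\mathrm{quad})'(f)=f$. Hence by \cref{eq:gradient of separable}
\begin{align}
    \dot\Sigma(\bm x,\bm f)=\expval{\bm\omega_\mathrm{log}(\bm x)\circ \bm f,\bm f}=\sum_{r=1}^M (\omega_\mathrm{log})_r(\bm x)\, f_r^2 = \norm{\bm f}_{\bm\omega_\mathrm{log}(\bm x)}^2 .
\end{align}
All four inequalities will be read off from this identity.

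For the two inner inequalities, I would bound each weight $(\omega_\mathrm{log})_r(\bm x)$ between $\norm{\bm\omega_\mathrm{log}(\bm x)}_{-\infty}$ and $\norm{\bm\omega_\mathrm{log}(\bm x)}_{\infty}$ and pull it out of the sum. This leaves $\sum_r f_r^2=\norm{\bm f}_2^2$, giving $\low{\omega}_\mathrm{log}(\bm x)\norm{\bm f}_2^2\le\dot\Sigma\le\up{\omega}_\mathrm{log}(\bm x)\norm{\bm f}_2^2$. I will match these with $\low{\omega}_\mathrm{log}\low{\B}_\mathrm{quad}$ and $\up{\omega}_\mathrm{log}\up{\B}_\mathrm{quad}$, using the normalization of $\B_\mathrm{quad}$ fixed in the statement. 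I will check that the multiplicative constant in $\B_\mathrm{quad}$ is consistent with $\dot\Sigma=\sum_r\omega_r f_r^2$, and absorb it into the definition if needed. Both bounding functions are positive away from $0$ and vanish at $0$, and $\B_\mathrm{quad}(\sqrt{u})$ is linear in $u$, hence convex. So \Cref{asm: activity-force decomposability of EPR bounds}, and even \Cref{asm: convexity of EPR lower bound,asm: convexity of EPR upper bound}, hold.

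The equality case is the only point needing care. The inequality $\sum_r\omega_r f_r^2\ge\min_r\omega_r\sum_r f_r^2$ is tight iff $\omega_r=\min_s\omega_s$ for every $r$ with $f_r\neq0$. Symmetrically, the upper bound is tight iff $\omega_r=\max_s\omega_s$ on $\mathrm{supp}(\bm f)$. I will phrase this as ``$\omega_r(\bm x)$ is constant on $\mathrm{supp}(\bm f)$ and equal to the extremal value,'' which is the intended reading of the stated condition. I expect this to be the main (minor) obstacle: making the equality statement precise, since a common value on the support does not by itself force it to equal the global minimum or maximum over all $r$.

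The outer inequalities are elementary. The left one follows from $1/\sqrt M\le 1$ (so the \Cref{lemma:lower bounds of separable EPR}-type bound is weaker). The right one follows from $\norm{\bm\omega}_\infty\le\norm{\bm\omega}_2$, which recovers \Cref{lemma:upper bounds of separable EPR} specialized to $(\psi^\ast)'(f)=f$. This shows the quadratic-specific bounds dominate the generic ones.
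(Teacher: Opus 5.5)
Your proposal is correct and matches the paper's proof: write $\dot\Sigma(\bm x,\bm f)=\sum_r(\omega_\mathrm{log})_r(\bm x)\,f_r^2$, pull out $\norm{\bm\omega_\mathrm{log}(\bm x)}_{-\infty}$ or $\norm{\bm\omega_\mathrm{log}(\bm x)}_{\infty}$, and obtain the outer inequalities from $1\le\sqrt M$ and $\norm{\cdot}_\infty\le\norm{\cdot}_2$. The normalization you defer must resolve to $\low{\B}_\mathrm{quad}(\abs{f})=\up{\B}_\mathrm{quad}(\abs{f})=\abs{f}\,(\psi^\ast_\mathrm{quad})'(\abs{f})=\abs{f}^2$: with the stated $2\abs{f}^2$ the lower bound fails whenever all activities coincide and $\bm f\neq\bm 0$, so this is a correction to the statement rather than something to absorb; and your sharpened equality condition (constant on $\mathrm{supp}(\bm f)$ and equal to the global minimum, resp.\ maximum) is the correct one, more precise than the paper's ``same value on $\mathrm{supp}(\bm f)$''.
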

\begin{proof}[Proof of \Cref{lemma:bounds of quadratic EPRs}]
    The first and fourth inequalities follow from $1 \leq \sqrt{M}$ and the monotonic decreasing property of $\norm{\cdot}_p$ with respect to $1\leq p \leq \infty$.
    The second and third inequalities are obtained as
    \begin{align}
        \expval{\bm \omega_\mathrm{log} (\bm x) \circ \bm f, \bm f}
        &= \sum_{r \in \mathrm{supp}(\bm f)} (\omega_\mathrm{log})_r(\bm x)) \abs{f_r}^2
        \geq \norm{\bm \omega_\mathrm{log}(\bm x)}_{-\infty} \sum_{r \in \mathrm{supp}(\bm f)} \abs{f_r}^2,\\
        \expval{\bm \omega_\mathrm{log} (\bm x) \circ \bm f, \bm f}
        &= \sum_{r \in \mathrm{supp}(\bm f)} (\omega_\mathrm{log})_r(\bm x) \abs{f_r}^2
        \leq \norm{\bm \omega_\mathrm{log}(\bm x)}_{\infty} \sum_{r \in \mathrm{supp}(\bm f)} \abs{f_r}^2.
    \end{align}
    Each $\abs{f_r}$ is positive for $r \in\mathrm{supp}(\bm f)$, thus the equalities hold iff $\omega_r(\bm x)$ takes the same value for $r \in \mathrm{supp}(\bm f)$.
\end{proof}
We state the corollary of \Cref{thm:bounds under locally convexity assumption} with respect to $\rho^\phi_{\bm x_\mathrm{eq}}(\bm x)$;
\begin{corollary}[Explicit Bounds of KL divergence in Mass-Action CRNs with Quadratic Dissipation]
    \label{cor:bounds in quadratic mass action CRNs}
    Assume \Cref{asm:invariance of positive orthant} in an equilibrium mass-action CRN.
     Then, from \Cref{lemma:global convexity of Bregman divergence on solution orbit}, \Cref{thm:bounds under locally convexity assumption} and \Cref{lemma:bounds of quadratic EPRs}, the divergence $D_\mathrm{KL}(\cdot \| \bm x_\mathrm{eq})$ is $\low{\rho^{\phi_\mathrm{KL}}_{\bm x_\mathrm{eq}}}$-PL and $\up{\rho^{\phi_\mathrm{KL}}_{\bm x_\mathrm{eq}}}$-LC on $\mathcal{O}_{T}(\bm x_0)$ for some $\low{\rho^{\phi_\mathrm{KL}}_{\bm x_\mathrm{eq}}}, \up{\rho^{\phi_\mathrm{KL}}_{\bm x_\mathrm{eq}}}$, and $\low{D}_\mathrm{quad, loc}(\up{\Omega}^\mathrm{KL}_{\mathrm{log}}(\bm x_t)) \leq D_\mathrm{KL}(\bm x_t \| \bm x_\mathrm{eq}) \leq \up{D}_\mathrm{quad, loc}(\low{\Omega}^\mathrm{KL}_{\mathrm{log}}(\bm x_t))$ holds for the upper and lower bounds,
    \begin{align}
        \label{eq:quad_loc_upper}
                \up{D}_\mathrm{quad, loc}(\low{\Omega}_{\mathrm{log}}^\mathrm{KL}(\bm x_{0:T}))
                % & \frac{\exp \qty[ \ln (2\sigma_{\rk(S)}^2 \low{\rho^{\phi_\mathrm{KL}}_{\bm x_\mathrm{eq}}}  D_\mathrm{KL}(\bm x_0 \| \bm x_{\mathrm{eq}}) ) - 2\sigma_{\rk(S)}^2 \low{\rho^{\phi_\mathrm{KL}}_{\bm x_\mathrm{eq}}} \low{\Omega}_{\mathrm{log}}^\mathrm{KL}(\bm{x}_{0:T}) ]}{2 \sigma_{\rk(S)}^2 \low{\rho^{\phi_\mathrm{KL}}_{\bm x_\mathrm{eq}}}}\\
        &:=  D_\mathrm{KL}(\bm x_0 \| \bm x_{\mathrm{eq}}) \cdot \exp \qty[  - 2\sigma_{\rk(S)}^2 \low{\rho^{\phi_\mathrm{KL}}_{\bm x_\mathrm{eq}}} \low{\Omega}_{\mathrm{log}}^\mathrm{KL}(\bm{x}_{0:T}) ],\\
        \label{eq:quad_loc_lower}
            \low{D}_\mathrm{quad, loc}(\up{\Omega}_{\mathrm{log}}^\mathrm{KL}(\bm x_{0:T}))
            % & \frac{\exp \qty[ \ln (2\sigma_1^2 \up{\rho^{\phi_\mathrm{KL}}_{\bm x_\mathrm{eq}}}  D_\mathrm{KL}(\bm x_0 \| \bm x_{\mathrm{eq}}) ) - 2\sigma_1^2 \up{\rho^{\phi_\mathrm{KL}}_{\bm x_\mathrm{eq}}} \up{\Omega}_{\mathrm{log}}^\mathrm{KL}(\bm{x}_{0:T}) ]}{2\sigma_1^2 \up{\rho^{\phi_\mathrm{KL}}_{\bm x_\mathrm{eq}}} }\\
        &:=  D_\mathrm{KL}(\bm x_0 \| \bm x_{\mathrm{eq}}) \cdot  \exp \qty[  - 2\sigma_1^2 \up{\rho^{\phi_\mathrm{KL}}_{\bm x_\mathrm{eq}}} \,\up{\Omega}_{\mathrm{log}}^\mathrm{KL}(\bm{x}_{0:T}) ].
    \end{align}
    Here, $\low{\Omega}_{\mathrm{log}}^\mathrm{KL}(\bm x_{0:T})$ and $\up{\Omega}_{\mathrm{log}}^\mathrm{KL}(\bm{x}_{0:T})$ are defined as
    \begin{align}
            \low{\Omega}_{\mathrm{log}}^\mathrm{KL}(\bm x_{0:T}) := \int_0^T \frac{\rho^{\phi_\mathrm{KL}}_{\bm x_\mathrm{eq}}(\bm x_\tau^\mathrm{eff})}{\low{\rho^{\phi_\mathrm{KL}}_{\bm x_\mathrm{eq}}}} \cdot \low{\omega}_\mathrm{log}(\bm x_\tau) \dd \tau, \,
            \up{\Omega}_{\mathrm{log}}^\mathrm{KL}(\bm x_{0:T}) := \int_0^T \frac{\rho^{\phi_\mathrm{KL}}_{\bm x_\mathrm{eq}}(\bm x_\tau)}{\up{\rho^{\phi_\mathrm{KL}}_{\bm x_\mathrm{eq}}}} \cdot  \up{\omega}_\mathrm{log}(\bm x_\tau) \dd \tau,       
    \end{align}
    respectively.
\end{corollary}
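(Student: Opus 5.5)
The plan is to specialize \Cref{thm:bounds under locally convexity assumption} to $\phi=\phi_\mathrm{KL}$ with quadratic dissipation, and then compute the resulting deformed exponential explicitly.

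First, the convexity parameters. Under \Cref{asm:invariance of positive orthant}, \Cref{lemma:global convexity of Bregman divergence on solution orbit} gives constants $\low{\rho^{\phi_\mathrm{KL}}_{\bm x_\mathrm{eq}}},\up{\rho^{\phi_\mathrm{KL}}_{\bm x_\mathrm{eq}}}>0$ that bound $\rho^{\phi_\mathrm{KL}}_{\bm x_\mathrm{eq}}(\cdot)$ on $\mathcal{O}_\infty(\bm x_0)\setminus\{\bm x_\mathrm{eq}\}$. Hence $D_\mathrm{KL}(\cdot\|\bm x_\mathrm{eq})$ is PL and LC on $\mathcal{O}_T(\bm x_0)$ with these constants.

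For the local theorems we take $m_{\bm x_\mathrm{eq}}(\cdot)=L_{\bm x_\mathrm{eq}}(\cdot)=\rho^{\phi_\mathrm{KL}}_{\bm x_\mathrm{eq}}(\cdot)$; by the Remark this function is an admissible local PL/LC parameter. For the upper bound we also need $\rho(\bm x^\mathrm{eff})\ge\low{\rho}$ on $\bm x^\mathrm{eff}(\mathcal{O}_T(\bm x_0),\bm x_\mathrm{eq})$. I would get this by repeating the compactness argument of \Cref{lemma:global convexity of Bregman divergence on solution orbit} on the set of effective concentrations. This set is the image of the compact orbit under the continuous map $\bm x\mapsto\bm x^\mathrm{eff}$, so it is compact. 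The minimum is then taken over that enlarged compact set, which fixes the constant $\low{\rho}$ accordingly.

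Next, the EPR side. By \Cref{lemma:bounds of quadratic EPRs}, $\low{\B}_\mathrm{quad}(\sqrt{u})$ and $\up{\B}_\mathrm{quad}(\sqrt{u})$ are linear in $u$, hence convex. Therefore \Cref{asm: convexity of EPR lower bound,asm: convexity of EPR upper bound} hold with $\low{\omega}_\mathrm{log}$ and $\up{\omega}_\mathrm{log}$.

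The remaining computation is the deformed logarithm. Since $\low{\B}(\sqrt{u})=cu$, we get $\ln_{\B(\sqrt\cdot)}(u)=c^{-1}\ln u$ and $\exp_{\B(\sqrt\cdot)}(v)=e^{cv}$. Substituting into the bound of \Cref{thm:upper bound under locally PL assumption}, the factor $2\sigma_{\rk(S)}^2\low{\rho}$ inside the logarithm cancels the denominator. The result is $D_\mathrm{KL}(\bm x_0\|\bm x_\mathrm{eq})\exp[-c\cdot 2\sigma_{\rk(S)}^2\low{\rho}\,\low{\Omega}^\mathrm{KL}_\mathrm{log}]$. The lower bound from \Cref{thm:lower bound under locally LC assumption} is handled symmetrically, with $\sigma_1$ and $\up\rho$.

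The main obstacle is the normalization constant $c$. It depends on which convention for $\B_\mathrm{quad}$ is intended: the paper writes $2|f|^2$, while $\langle\bm\omega\circ\bm f,\bm f\rangle$ suggests $|f|^2$. To match the stated exponent, which has no extra factor, I would use $\low{\B}(\sqrt u)=u$ directly from the proof of \Cref{lemma:bounds of quadratic EPRs}, i.e.\ $\dot\Sigma\ge\low{\omega}_\mathrm{log}\|\bm f\|_2^2$, so that $c=1$. Finally, the integrated activities $\low{\Omega}^{m}$ and $\up{\Omega}^{L}$ of the theorem become exactly $\low{\Omega}^\mathrm{KL}_\mathrm{log}$ and $\up{\Omega}^\mathrm{KL}_\mathrm{log}$.
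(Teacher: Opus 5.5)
Your proposal is correct and follows the route the paper intends: specialize \Cref{thm:bounds under locally convexity assumption} with $m_{\bm x_\mathrm{eq}}=L_{\bm x_\mathrm{eq}}=\rho^{\phi_\mathrm{KL}}_{\bm x_\mathrm{eq}}$ and $\low{\B}(\sqrt{u})=\up{\B}(\sqrt{u})=u$, so the deformed exponential reduces to the ordinary one and the prefactor $2\sigma_{\rk(S)}^2\low{\rho^{\phi_\mathrm{KL}}_{\bm x_\mathrm{eq}}}$ cancels. You are right that the factor $2$ in $\B_\mathrm{quad}$ is a slip, since only $\low{\B}(s)=\up{\B}(s)=s^2$ reproduces the stated exponent; your extra compactness step on $\bm x^\mathrm{eff}(\mathcal{O}_T(\bm x_0),\bm x_\mathrm{eq})$ also correctly repairs the mismatch between where \Cref{lemma:global convexity of Bregman divergence on solution orbit} controls $\rho^{\phi_\mathrm{KL}}_{\bm x_\mathrm{eq}}$ and where the theorem needs the lower bound, and because $\low{\B}(\sqrt{\cdot})$ is linear, the constant $\low{\rho^{\phi_\mathrm{KL}}_{\bm x_\mathrm{eq}}}$ drops out of the exponent entirely, so the stated bound holds for any admissible choice of it.
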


\subsection{Proofs of the Main result}
% \subsection{Proofs of the Main result \Cref{thm:upper bound of Bregman divergence in generalized equilibirum flow}}
\label{sec:proof of main results}
In this section, we provide the proof of \Cref{thm:upper bound of Bregman divergence in generalized equilibirum flow}, which provide upper bounds for the Bregman divergence. The proofs of remaining results, \Cref{thm:lower bound of Bregman divergence in generalized equilibirum flow,thm:bounds under locally convexity assumption}, are largely analogous to those of the upper bounds, except for the force-norm bounds. 

First, we sketch the idea of the proof: using the EPR bound \Cref{asm: activity-force decomposability of EPR lower bound}, the force bound \Cref{lemma:lower bounds of thermodynamic force norm}, and the convexity condition, we bound the time derivative of the Bregman divergence by the product of a $D_\phi$-only term and an activity term.
By separating variables and integrating them, the resulting integral can be expressed via a deformed logarithm; applying its inverse yields an explicit bound.

Let us begin with the lemma on the force norm bounds.
We only provide a proof of \Cref{lemma:lower bounds of thermodynamic force norm} for the lower bound in the main text; that for the upper bound is elementary and is deferred to \KS{\Cref{sec:proof for lemma:upper bounds of thermodynamic force norm}}.
One nontrivial point, compared with the upper-bound case, is that we orthogonally decompose the gradient of the divergence to remove its $\Ker S^\top$ component, which is used to obtain a nonzero lower bound.

\begin{lemma}[Bounds of separable EPR by force norm]
    \label{lemma:bounds of thermodynamic force norm}
    Consider the state $\bm x \in \Stoich(\bm x_\mathrm{eq})$ and the force $\bm f(\bm x) = S^\top \partial D_\phi(\bm x \| \bm x_\mathrm{eq})$ defined in \cref{equilibirum force with x_eq}.
    \begin{sublems}
        \item \label[lemma]{lemma:lower bounds of thermodynamic force norm}
        The following inequality hold:
        \begin{align}
        \label{eq:lower bound of thermodynamic force norm by gradient}
            \sigma_{\rk(S)} \|\partial_{\bm x} D_\phi(\bm{x}^\mathrm{eff} \| \bm x_\mathrm{eq})\|_2 \leq \|\bm f(\bm x)\|_2,
        \end{align}
        where $\bm{{x}}^\mathrm{eff} = \bm{{x}}^\mathrm{eff}(\bm x, \bm x_\mathrm{eq}) = \partial_{\bm \mu} \phi^\ast [ \partial\phi(\bm x_\mathrm{eq}) + P_{\Equib(\bm{0})^\perp} (\partial_{\bm x} D_\phi (\bm{x} \| \bm x_\mathrm{eq}) ) ]$ and $P_{\Equib(\bm{0})^\perp}$ is the orthogonal projection from $\R^N$ to $\Equib(\bm{0})^\perp = \Im S$.
        The equality in \cref{eq:lower bound of thermodynamic force norm by gradient} holds iff $\sigma_1 = \cdots = \sigma_{\rk(S)}$.

        If $\bm x \in \Stoich(\bm x_\mathrm{eq})$ and $D_\phi(\cdot \| \bm x_\mathrm{eq})$ is $m_{\bm x_\mathrm{eq}}(\cdot)$-locally PL at $\bm x^\mathrm{eff}$, then the following inequality hold:
        \begin{align}
        \label{eq:lower bound of thermodynamic force norm by convexity}
            \sqrt{2 \sigma_{\rk(S)}^2 m_{\bm x_\mathrm{eq}}(\bm{x}^\mathrm{eff})  D_\phi(\bm{x} \|\bm x_\mathrm{eq}) } \leq \|\bm f(\bm x)\|_2.
        \end{align}
        The equality in \cref{eq:lower bound of thermodynamic force norm by convexity} holds iff $\sigma_1 = \cdots = \sigma_{\rk(S)}$, $\partial D_\phi(\bm x \| \bm x_\mathrm{eq}) \in (\Ker S^\top)^\perp$, and $m_{\bm x_\mathrm{eq}}(\cdot)$-locally PL equality holds at $\bm x^\mathrm{eff}$.
        \item \label[lemma]{lemma:upper bounds of thermodynamic force norm}
        The following inequality hold:
        \begin{align}
        \label{eq:upper bound of thermodynamic force norm by gradient}
             \|\bm f(\bm x)\|_2 \leq \sigma_1 \|\partial_{\bm x} D_\phi(\bm{x} \| \bm x_\mathrm{eq})\|_2.
        \end{align}
        The equality holds iff $\sigma_1 = \cdots = \sigma_{\rk(S)}$.

        If $\bm x \in \Stoich(\bm x_\mathrm{eq})$ and $D_\phi(\cdot \| \bm x_\mathrm{eq})$ is $L_{\bm x_\mathrm{eq}}(\cdot)$-locally LC at $\bm x$, then the following inequality hold:
        \begin{align}
        \label{eq:upper bound of thermodynamic force norm by convexity}
            \|\bm f(\bm x)\|_2\leq \sqrt{2 \sigma_{1}^2 L_{\bm x_\mathrm{eq}}(\bm{x})  D_\phi(\bm{x} \|\bm x_\mathrm{eq}) }.
        \end{align}
        The equality holds iff $\sigma_1 = \cdots = \sigma_{\rk(S)}$ and $L_{\bm x_\mathrm{eq}}(\cdot)$-local LC equality
        % in \cref{eq:local Lipschitz continuity} 
        holds at $\bm x$.
    \end{sublems}
\end{lemma}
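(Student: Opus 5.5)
The plan is to work throughout with the gradient $\bm g := \partial_{\bm x} D_\phi(\bm x \| \bm x_\mathrm{eq}) = \partial\phi(\bm x) - \partial\phi(\bm x_\mathrm{eq})$. I would split it orthogonally as $\bm g = P_{\Im S}\bm g + P_{\Ker S^\top}\bm g$. Since $S^\top$ annihilates $\Ker S^\top$, we get $\bm f(\bm x) = S^\top \bm g = S^\top P_{\Im S}\bm g$.

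\textbf{First inequality in \Cref{lemma:lower bounds of thermodynamic force norm}.}
\begin{itemize}
\item Expand $P_{\Im S}\bm g = \sum_{i\le \rk(S)} c_i \bm{\hat v}_i$ in the SVD basis.
\item By \cref{eq:basis by stoichiometric matrix}, $\|S^\top P_{\Im S}\bm g\|_2^2 = \sum_{i\le\rk(S)} \sigma_i^2 c_i^2 \ge \sigma_{\rk(S)}^2 \|P_{\Im S}\bm g\|_2^2$.
\item By the definition of $\bm x^\mathrm{eff}$ and $\partial\phi\circ\partial\phi^\ast=\mathrm{id}$, we have $\partial\phi(\bm x^\mathrm{eff}) - \partial\phi(\bm x_\mathrm{eq}) = P_{\Im S}\bm g$, that is, $\partial_{\bm x} D_\phi(\bm x^\mathrm{eff}\|\bm x_\mathrm{eq}) = P_{\Im S}\bm g$. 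This gives \cref{eq:lower bound of thermodynamic force norm by gradient}.
\item The equality case is read off from $\sum_i(\sigma_i^2-\sigma_{\rk(S)}^2)c_i^2=0$. I would state it as the paper does (all $\sigma_i$ equal), noting that this is the condition under which equality holds for every $\bm x$.
\end{itemize}

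\textbf{Upper bound \cref{eq:upper bound of thermodynamic force norm by gradient}.} This is the elementary mirror argument: $\|S^\top\bm g\|_2 \le \sigma_1\|P_{\Im S}\bm g\|_2 \le \sigma_1\|\bm g\|_2$. Then \cref{eq:upper bound of thermodynamic force norm by convexity} follows directly from local LC at $\bm x$, which reads $\|\bm g\|_2^2 \le 2L_{\bm x_\mathrm{eq}}(\bm x) D_\phi(\bm x\|\bm x_\mathrm{eq})$.

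\textbf{Second inequality \cref{eq:lower bound of thermodynamic force norm by convexity}.} Local PL at $\bm x^\mathrm{eff}$ gives
\[
\|P_{\Im S}\bm g\|_2^2 \ge 2 m_{\bm x_\mathrm{eq}}(\bm x^\mathrm{eff})\, D_\phi(\bm x^\mathrm{eff}\|\bm x_\mathrm{eq}).
\]
Chaining this with the first inequality reduces everything to the comparison $D_\phi(\bm x\|\bm x_\mathrm{eq}) \le D_\phi(\bm x^\mathrm{eff}\|\bm x_\mathrm{eq})$.

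\textbf{Main obstacle: the comparison $D_\phi(\bm x\|\bm x_\mathrm{eq}) \le D_\phi(\bm x^\mathrm{eff}\|\bm x_\mathrm{eq})$.} I would attack it with the information-geometric orthogonality from \Cref{subsubsec:information-geometric structure between concentration and potential spaces}. Two structural facts are available:
\begin{itemize}
\item $\partial\phi(\bm x^\mathrm{eff}) - \partial\phi(\bm x) = -P_{\Ker S^\top}\bm g \in \Ker S^\top$, so $\bm x^\mathrm{eff}$ lies on the equilibrium submanifold $\partial\phi^\ast[\Equib(\partial\phi(\bm x))]$ through $\bm x$.
\item $\bm x_\mathrm{eq}\in\Stoich(\bm x)$.
\end{itemize}
Hence $\bm x$ is exactly the point $\bm x^\dagger$ at which $\Stoich(\bm x_\mathrm{eq})$ meets the equilibrium submanifold through $\bm x^\mathrm{eff}$.

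The first thing I would try is the three-point identity with $a=\bm x^\mathrm{eff}$, $b=\bm x$, $c=\bm x_\mathrm{eq}$, or its dual form in $\M$. The aim is to show that the cross term vanishes or has the right sign, so that $D_\phi(\bm x^\mathrm{eff}\|\bm x_\mathrm{eq}) = D_\phi(\bm x^\mathrm{eff}\|\bm x) + D_\phi(\bm x\|\bm x_\mathrm{eq}) \ge D_\phi(\bm x\|\bm x_\mathrm{eq})$, by the generalized Pythagorean relation with the roles of $\bm x_1,\bm x_2$ chosen so that the dagger point is $\bm x$. For the Pythagorean form to hold, the cross term $\langle \bm g, \bm x^\mathrm{eff}-\bm x\rangle$ must vanish, and it is not obviously zero because $\bm x^\mathrm{eff}-\bm x$ need not lie in $\Im S$. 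The order of the arguments therefore needs care.

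If this fails, I would fall back on the dual representation $D_\phi(\bm x\|\bm x_\mathrm{eq}) = D_{\phi^\ast}(\bm\mu_\mathrm{eq}\|\bm\mu)$ and compare along the segment from $\bm\mu$ to $\bm\mu_\mathrm{eq}+P_{\Im S}\bm g$.

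\textbf{Equality case.} Equality in the chain requires three things: the singular values are equal, $P_{\Ker S^\top}\bm g=0$ (so $\bm x^\mathrm{eff}=\bm x$), and the local PL inequality holds with equality.
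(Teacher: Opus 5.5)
You prove the first inequality in (L) and part (U) correctly, by the paper's argument: the SVD expansion together with $\partial_{\bm x}D_\phi(\bm x^{\mathrm{eff}}\|\bm x_\mathrm{eq})=P_{\Im S}\bm g$. Your chain $\|S^\top\bm g\|_2\le\sigma_1\|P_{\Im S}\bm g\|_2\le\sigma_1\|\bm g\|_2$ also shows that equality in \cref{eq:upper bound of thermodynamic force norm by gradient} requires $P_{\Ker S^\top}\bm g=\bm 0$, which the stated equality condition omits.

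The genuine gap is the step you flagged and left open, $D_\phi(\bm x\|\bm x_\mathrm{eq})\le D_\phi(\bm x^{\mathrm{eff}}\|\bm x_\mathrm{eq})$. Neither the three-point attack nor the dual-segment fallback can close it, because this comparison is false in general. With it, \cref{eq:lower bound of thermodynamic force norm by convexity} fails too.

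Your caution about argument order is exactly the point. Since $\partial\phi(\bm x)-\partial\phi(\bm x^{\mathrm{eff}})=P_{\Ker S^\top}\bm g\in\Ker S^\top$ and $\bm x_\mathrm{eq}-\bm x\in\Im S$, the Pythagorean identity whose cross term vanishes is
\[
D_\phi(\bm x_\mathrm{eq}\|\bm x^{\mathrm{eff}})=D_\phi(\bm x_\mathrm{eq}\|\bm x)+D_\phi(\bm x\|\bm x^{\mathrm{eff}}).
\]
This is the dagger relation with $\bm x_1=\bm x_\mathrm{eq}$ and $\bm x_2=\bm x^{\mathrm{eff}}$, and it only controls the reversed divergence. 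The paper's proof instead asserts $D_\phi(\bm x^{\mathrm{eff}}\|\bm x_\mathrm{eq})=D_\phi(\bm x^{\mathrm{eff}}\|\bm x)+D_\phi(\bm x\|\bm x_\mathrm{eq})$ and calls it equivalent to \cref{eq:orthogonality of effective concentration}. It is not equivalent: its cross term is $\langle\bm g,\bm x^{\mathrm{eff}}-\bm x\rangle$, which has no definite sign. So the step \cref{eq:3rd of local bound} is unjustified.

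Here is a concrete counterexample. Take the single reaction $X_1\rightleftharpoons X_2$ with $k^+=4$ and $k^-=1$, so $S=(1,-1)^\top$, $\phi=\phi_{\mathrm{KL}}$, $\bm x_\mathrm{eq}=(1,4)$, and $\bm x=(3,2)\in\Stoich(\bm x_\mathrm{eq})$. Then $\bm g=(\ln 3,-\ln 2)$, $P_{\Im S}\bm g=\tfrac12\ln 6\,(1,-1)$, $\bm x^{\mathrm{eff}}=(\sqrt6,4/\sqrt6)$, and
\[
D_{\mathrm{KL}}(\bm x\|\bm x_\mathrm{eq})=3\ln 3-2\ln 2\approx 1.910,\qquad
D_{\mathrm{KL}}(\bm x^{\mathrm{eff}}\|\bm x_\mathrm{eq})=5+\frac{\ln 6-10}{\sqrt6}\approx 1.649.
\]
The cross term is $\approx-0.350$. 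Because $\rk(S)=1$, the first inequality is an equality, with $\|\bm f(\bm x)\|_2=\ln 6$. Now take the admissible choice $m_{\bm x_\mathrm{eq}}(\bm x^{\mathrm{eff}})=\rho^{\phi}_{\bm x_\mathrm{eq}}(\bm x^{\mathrm{eff}})$, which is local PL with equality and is the choice made in the corollaries. Then the square of the left side of \cref{eq:lower bound of thermodynamic force norm by convexity} is $\|\bm f(\bm x)\|_2^2\cdot 1.910/1.649>\|\bm f(\bm x)\|_2^2$. The comparison does hold to second order near $\bm x_\mathrm{eq}$, which is why it looks plausible, but it fails away from equilibrium.

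What your decomposition does prove is the version where the convexity hypothesis is placed on the projected gradient at $\bm x$ itself. Suppose $D_\phi(\bm x\|\bm x_\mathrm{eq})\le\frac{1}{2m}\|P_{\Im S}\bm g\|_2^2$, i.e. a PL inequality for the restriction of $D_\phi(\cdot\|\bm x_\mathrm{eq})$ to $\Stoich(\bm x_\mathrm{eq})$. This holds with some $m>0$ on a compact orbit when $\partial^2\phi$ is positive definite, by strong convexity of that restriction on the orbit's convex hull. Then $\|\bm f(\bm x)\|_2^2\ge\sigma_{\rk(S)}^2\|P_{\Im S}\bm g\|_2^2\ge 2\sigma_{\rk(S)}^2m\,D_\phi(\bm x\|\bm x_\mathrm{eq})$ follows directly, with no comparison between $\bm x$ and $\bm x^{\mathrm{eff}}$.
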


\begin{proof}[Proof of \Cref{lemma:lower bounds of thermodynamic force norm}]
    The potential space $\M - \bm{\tilde{\mu}}$ shifted by a fixed vector $\bm{\tilde\mu}$ is also a linear subspace. It is orthogonally decomposed as 
$\M - \bm{\tilde{\mu}}  =  \Equib(\bm {0})^\perp \oplus \Equib(\bm {0}) =  (\Ker S^\top)^\perp \oplus (\mathrm{Ker\,} S^\top)$ with the standard inner product $\expval{\cdot, \cdot}$.
Denote the orthogonal projection from $\M - \bm{\tilde{\mu}}$ to $\Equib(\bm {0})^\perp$ as $P_{\Equib(\bm {0})^\perp}$.
Using the orthogonal basis $\qty{\bm{\hat v}_i \mid i = 1, \cdots, N}$ of $\M$, defined by the stoichiometric matrix $S$ in \Cref{subsubsec:strucural properties of CRNs}, the image of $P_{\Equib(\bm {0})^\perp}$ is written as 
\begin{align}
    P_{\Equib(\bm {0})^\perp} \qty(\bm \mu - \bm{\tilde{\mu}}) = \sum_{j=1}^{\rk(S)} \expval{\bm \mu - \bm{\tilde{\mu}}, \bm{\hat v}_i}  \bm{\hat v}_i, \, \bm \mu = \sum_{j=1}^{N} \expval{\bm \mu - \bm{\tilde{\mu}}, \bm{\hat v}_i} \in \M.
\end{align}
Note that the following relation holds for any $\bm \mu \in \M$:
\begin{align}
    \begin{aligned}
        \norm{ \bm \mu - \bm{\tilde{\mu}} }_2^2 
        = \norm{ P_{\Equib(\bm {0})^\perp} \qty(\bm \mu - \bm{\tilde{\mu}}) }_2^2 + \norm{ \bm \mu - \bm{\tilde{\mu}} - P_{\Equib(\bm {0})^\perp} \qty(\bm \mu - \bm{\tilde{\mu}}) }_2^2 
        \geq \norm{ P_{\Equib(\bm {0})^\perp} \qty(\bm \mu - \bm{\tilde{\mu}}) }_2^2.
    \end{aligned}
\end{align}

We first prove the inequality \cref{eq:lower bound of thermodynamic force norm by gradient}.
The 2-norm of $\bm f(\bm x) = S^\top \partial D_\phi(\bm x \| \bm x_\mathrm{eq})$ is bounded from below by the norm of $\partial D_\phi(\bm x^\mathrm{eff} \| \bm x_\mathrm{eq})$, as
\begin{align}
    % \begin{aligned}
        \label{eq:1st}
        \norm{\bm f(\bm x)}_2^2 
        &= \norm{ S^\top \qty(  \sum_{j=1}^{N} \expval{\partial D_\phi(\bm x \| \bm x_\mathrm{eq}), \bm{\hat v}_i} \bm{\hat v}_i) }_2^2
        % \label{eq:2nd}
        = \norm{  \sum_{j=1}^{\rk(S)} \expval{\partial D_\phi(\bm x \| \bm x_\mathrm{eq}), \bm{\hat v}_i} \sigma_i \bm{\hat u}_i }_2^2\\
        \label{eq:3rd}
        &= \sum_{j=1}^{\rk(S)} \expval{\partial D_\phi(\bm x \| \bm x_\mathrm{eq}), \bm{\hat v}_i}^2 \sigma_i^2\\
        \label{eq:4th}
        &\geq \sigma_{\rk(S)}^2 \sum_{j=1}^{\rk(S)} \expval{\partial D_\phi(\bm x \| \bm x_\mathrm{eq}), \bm{\hat v}_i}^2\\
        \label{eq:5th}
        &= \sigma_{\rk(S)}^2 \norm{ P_{\Equib(\bm {0})^\perp} \qty[\partial D_\phi(\bm x \| \bm x_\mathrm{eq})] }_2^2
        % \label{eq:6th}
        = \sigma_{\rk(S)}^2 \norm{\partial D_\phi(\bm x^\mathrm{eff} \| \bm x_\mathrm{eq})}_2^2.
    % \end{aligned}
\end{align}
The transformation from \cref{eq:1st} to \cref{eq:3rd} follows from \cref{eq:basis by stoichiometric matrix}.
The transformation from \cref{eq:3rd} to \cref{eq:4th} is because $\sigma_{\rk(S)}$ is the minimum positive singular value of $S$, and the equality holds iff $\sigma_1 = \cdots = \sigma_{\rk(S)}$. The transformation from \cref{eq:4th} to \cref{eq:5th} follows from the orthonormality of $\qty{\bm{\hat v}_i}$.

We note the information-geometric relation among $\bm x$, $\bm x_\mathrm{eq}$, and $\bm x^\mathrm{eff}$, which is used in the next paragraph.
For $\bm x^\mathrm{eff}$, the following equality holds:
\begin{align}
    \label{eq:orthogonality of effective concentration}
    \expval{\partial\phi(\bm x) - \partial\phi(\bm{ x}^\mathrm{eff}), \bm x - \bm x_\mathrm{eq}}
    = \expval{\partial D_\phi(\bm{x} \| \bm x_\mathrm{eq}) - P_{\Equib(\bm{0})^\perp}[\partial D_\phi(\bm x \| \bm x_\mathrm{eq})], \bm x_t - \bm x_\mathrm{eq}}=0.
\end{align}
The last equation holds because $\partial D_\phi(\bm{x} \| \bm x_\mathrm{eq}) - P_{\Equib(\bm{0})^\perp}[\partial D_\phi(\bm x \| \bm x_\mathrm{eq})] \in \Equib(\bm 0) = \Ker S^\top$ and $\bm x - \bm x_\mathrm{eq} \in \Im S$.
The generalized Pythagorean relation holds for the Bregman divergence $D_\phi$ as
\begin{align}
    \label{eq:Pythagorean relation of effective concentration}
    D_\phi(\bm{x}^\mathrm{eff} \| \bm{x}_\mathrm{eq}) = D_\phi(\bm{x}^\mathrm{eff} \| \bm{x}) + D_\phi(\bm{x} \| \bm{x}_\mathrm{eq}),
\end{align}
which is equivalent to \cref{eq:orthogonality of effective concentration}.
Therefore, $\qty{\bm x} = \Stoich(\bm x_\mathrm{eq}) \cap \partial\phi^\ast(\Equib(\partial\phi(\bm x^\mathrm{eff})))$ holds for $\bm x$, $\bm x_\mathrm{eq}$, and $\bm x^\mathrm{eff}$ (see also \Cref{fig:information-geometric relation}).

% The 2-norms of the gradients relate as
% \begin{align}
%     \norm{\partial D_\phi(\bm{x}_t \| \bm x_\mathrm{eq})}_2^2 = \norm{\partial D_\phi(\bm{x}_t \| \bm{\tilde x}_t)}_2^2 + \norm{\partial D_\phi(\bm{\tilde x}_t \| \bm x_\mathrm{eq})}_2^2.
% \end{align}

\begin{figure}
    \centering
    \includegraphics[width=.8\linewidth]{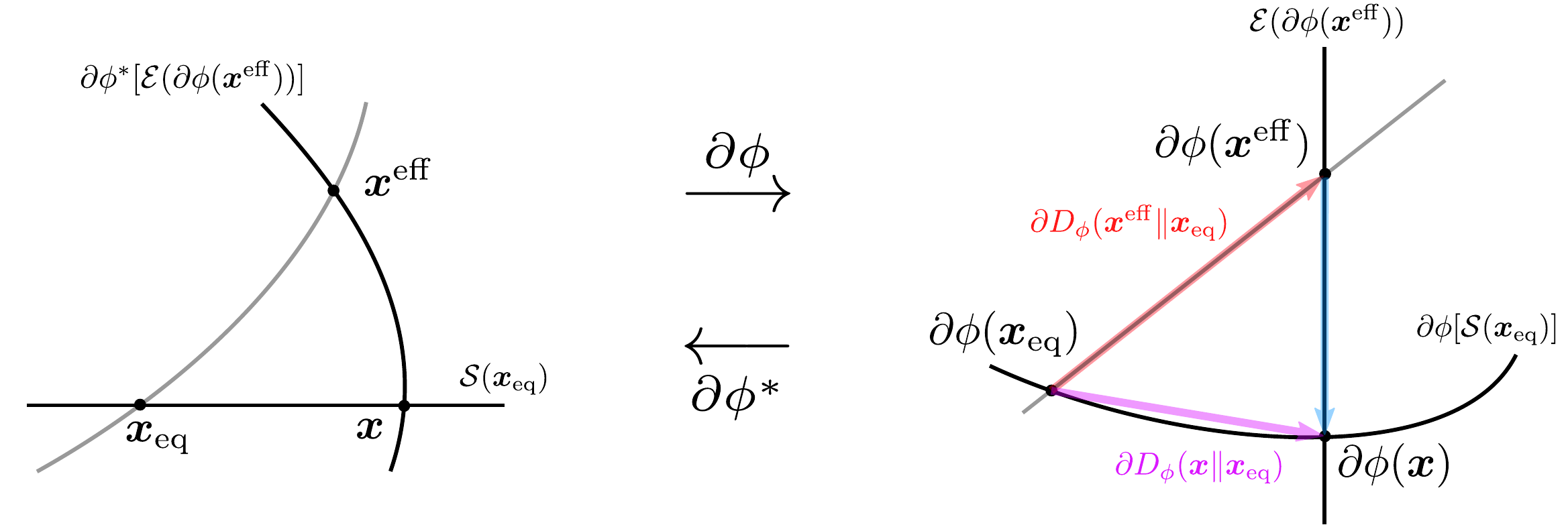}
    \caption{The diagram of the information-geometric relation between $\bm x$, $\bm x_\mathrm{eq}$, and $\bm x^\mathrm{eff}$. The intersection of the stoichiometric subspace $\Stoich(\bm x_\mathrm{eq})$ and the equilibrium submanifold $\partial\phi^\ast[\Equib(\partial \phi(\bm x^\mathrm{eff}))]$ contains the unique point $\bm x$. The same relation between the stoichiometric submanifold $\partial\phi[\Stoich(\bm x_0)]$ and the equilibrium subspace $\Equib(\partial\phi(\bm x^\mathrm{eff}))$ also holds in the potential space $\M$.
    Both $\partial D_\phi(\bm x\| \bm x_\mathrm{eq})$ (purple vector) and $\partial D_\phi(\bm x^\mathrm{eff}\| \bm x_\mathrm{eq})$ (red vector) are mapped to $f(\bm x)$ by $S^\top$.}
    \label{fig:information-geometric relation}
\end{figure}

Then we prove the inequality \cref{eq:lower bound of thermodynamic force norm by convexity}.
If $\phi$ is locally-LC with $L_{\bm x_\mathrm{eq}}(\cdot)$, combining the definition of locally LC \cref{eq:local Lipschitz continuity} and \cref{eq:lower bound of thermodynamic force norm by gradient},
one gets a nonzero lower bound as
\begin{align}
    \label{eq:1st of local bound}
    \norm{\bm f (\bm x)}_2 &\geq \sigma_{\rk(S)} \norm{\partial D_\phi(\bm {x}^\mathrm{eff} \|\bm x_\mathrm{eq})}_2 \\
    \label{eq:2nd of local bound}
    &\geq \sqrt{2 \sigma_{\rk(S)}^2 m_{\bm x_\mathrm{eq}}(\bm x^\mathrm{eff})  D_\phi(\bm{x}^\mathrm{eff} \|\bm x_\mathrm{eq}) } \\
    \label{eq:3rd of local bound}
    &\geq \sqrt{2 \sigma_{\rk(S)}^2 m_{\bm x_\mathrm{eq}}(\bm x^\mathrm{eff})  D_\phi(\bm{x} \|\bm x_\mathrm{eq}) }.
\end{align}
The equality of \cref{eq:1st of local bound} holds iff $\sigma_1=\cdots=\sigma_{\rk S}$.
The equality of \cref{eq:2nd of local bound} holds iff the equality in locally-PL condition \cref{eq:local Polyak-Lojasiewicz} holds. The inequality \cref{eq:3rd of local bound} holds because \cref{eq:Pythagorean relation of effective concentration} and $D_\phi(\bm{x}^\mathrm{eff} \| \bm{x}) \geq 0$. The last equality holds iff $D_\phi(\bm x^\mathrm{eff} \| \bm x) = 0$, which is equivalent to $\partial D_\phi(\bm x \| \bm x_\mathrm{eq}) \in (\Ker S^\top)^\perp$.
\end{proof}

\begin{proof}[Proof of \Cref{thm:upper bound of Bregman divergence in generalized equilibirum flow}]
    For $t \in [0,T]$, the time derivative of $D_\phi(\bm x_t \| \bm x_\mathrm{eq})$  is bounded from above as
    \begin{align}
        \label{eq:global upper 1}
        \dv{}{t} D_\phi(\bm x_t \| \bm x_\mathrm{eq}) 
        &= - \expval{ \partial_{\bm{f}} \Psi_{\bm{x}_t}(\bm{f}(\bm x_t)), \bm{f}(\bm x_t) }\\
        \label{eq:global upper 2}
        &\leq - \low{\omega}(\bm x_t) \low{\B}(\norm{\bm f(\bm x_t)}_2)\\
        \label{eq:global upper 3}
        &\leq - \low{\omega}(\bm x_t) \low{\B}\qty(\sqrt{2\sigma_{\rk(S)}^2 m_{\bm x_\mathrm{eq}}  D_\phi(\bm{x}_t \|\bm x_\mathrm{eq})}).
    \end{align}
    The transformation from \cref{eq:global upper 1} to \cref{eq:global upper 2} follows from \Cref{asm: activity-force decomposability of EPR lower bound}.
    The transformation from \cref{eq:global upper 2} to \cref{eq:global upper 3} follows from \Cref{lemma:lower bounds of thermodynamic force norm}, which also holds in the globally convex setting.

    From \Cref{asm: activity-force decomposability of EPR lower bound}, $\low{\B}$ is positive in $(0,\infty)$ and $2\sigma_{\rk(S)}^2 m_{\bm x_\mathrm{eq}} D_\phi(\bm{x}_t \|\bm x^\ast)$ is positive.
    Thus, 
    \begin{align}
        \frac{2\sigma_{\rk(S)}^2 m_{\bm x_\mathrm{eq}} }{ \qty( \low{\B}(\sqrt{\cdot}) )\qty(2\sigma_{\rk(S)}^2 m_{\bm x_\mathrm{eq}}  D_\phi (\bm{x}_t \|\bm x_{\mathrm{eq}})) } \dv{}{t} D_\phi(\bm x_t \| \bm x_{\mathrm{eq}}) \leq - 2\sigma_{\rk(S)}^2 m_{\bm x_\mathrm{eq}} \low{\omega}(\bm x_t).
    \end{align}
    Integrating with respect to the time $t$ from $0$ to $T$ due to the monotonic non-increasing property of $D_\phi(\bm x_t \| \bm x_\mathrm{eq})$, we obtain:
    \begin{align}
        \begin{aligned}
            % \ln_{\low{\B} (\sqrt{\cdot})} (2\sigma_{\rk(S)}^2 m_{\bm x_\mathrm{eq}}  D_\phi(\bm x_T \| \bm x_{\mathrm{eq}}) ) - \ln_{\low{\B} (\sqrt{\cdot})} (2\sigma_{\rk(S)}^2 m_{\bm x_\mathrm{eq}}  D_\phi(\bm x_0 \| \bm x_{\mathrm{eq}}) ) 
            \qty[\ln_{\low{\B} (\sqrt{\cdot})} (2\sigma_{\rk(S)}^2 m_{\bm x_\mathrm{eq}}  D_\phi(\bm x_t \| \bm x_{\mathrm{eq}}) )]_{t = 0}^T
            &\leq -2\sigma_{\rk(S)}^2 m_{\bm x_\mathrm{eq}}  \int_0^T \low{\omega}(\bm x_t)  \dd t.
            % \\&= - 2\sigma_{\rk(S)}^2 m_{\bm x_\mathrm{eq}} \low{\Omega}(\bm{x}_{0:T}).
        \end{aligned}
    \end{align}
    % where $\ln_{\psi}(u) := \int_1^u \frac{1}{\psi(u')} \dd{u'}$ is the deformed logarithm for a positive function $\psi$ and $\Omega(\bm{x}_{0:T}):= \int_0^T \low{\omega}(\bm x_t)$ is the time-integrated lower bound of activities
    % $\ln_{\low{\B} (\sqrt{\cdot})}$ is invertible
    Since $\exp_{\low{\B} (\sqrt{\cdot})}$ is  monotonically increasing, we obtain the target inequality \cref{eq:upper bound of Bregman divergence in generalized equilibirum flow}.
    % \begin{align}
    %     D_\phi(\bm x_T \| \bm x_{\mathrm{eq}}) \leq \frac{\exp_{\low{\B} (\sqrt{\cdot})} \qty[ \ln_{\low{\B} (\sqrt{\cdot})} (2\sigma_{\rk(S)}^2 m_{\bm x_\mathrm{eq}}  D_\phi(\bm x_0 \| \bm x_{\mathrm{eq}}) ) - 2\sigma_{\rk(S)}^2 m_{\bm x_\mathrm{eq}} \low{\Omega}(\bm{x}_{0:T}) ]}{2\sigma_{\rk(S)}^2 m_{\bm x_\mathrm{eq}} }.
    % \end{align}
\end{proof}

\section{Simulation Results: Characterizing Slow Relaxation in Catalytic CRNs}
\label{sec:convergence analysis and characterization of slow relaxation in mass-action reaction systems}

%% Awazu&Kaneko model
We consider specific CRNs to evaluate the performance of the bounds given in \Cref{thm:bounds of Bregman divergence in generalized equilibirum flow,thm:bounds under locally convexity assumption}.
Additionally, we characterize the slow relaxation in equilibrium CRNs through our bounds of the thermodynamic quantities.
% and divergence.

We use the catalytic CRNs \cite{Awazu2009}, which involve the 2nd-order catalytic reactions,
\begin{align}
    X_i + X_c \rightleftharpoons X_j + X_c,
\end{align}
and follow mass-action kinetics.
% $X_i + X_c \rightleftharpoons X_r + X_c \, (1\leq i,j,c \leq n)$.
It is reported by Awazu and Kaneko \cite{Awazu2009} that the deviation from the equilibrium, defined as $C(t) := \frac{\expval{\bm x_t - \bm x_\mathrm{eq}, \bm x_0 - \bm x_\mathrm{eq}} }{\|\bm x_0 - \bm x_\mathrm{eq}\|^2}$, exhibits some plateaus over in the time courses. 
% The paper \cite{Awazu2009} states that the emergence of the plateaus is attributed to the kinetic constraints, particularly negative correlations between chemical species.
We utilize the CRN1,
\begin{align}
\begin{aligned}
    \label{eq:CRN1}
    X_0 + X_4 \rightleftharpoons X_1 + X_4,\\
    X_0 + X_1 \rightleftharpoons X_2 + X_1,\\
    X_0 + X_2 \rightleftharpoons X_3 + X_2,\\
    X_0 + X_1 \rightleftharpoons X_4 + X_1.
\end{aligned}
\end{align}
% and the CRN2,
% \begin{align}
% \begin{aligned}
%     \label{eq:CRN2}
%     X_0 + X_2 \rightleftharpoons X_1 + X_2,\\
%     X_0 + X_4 \rightleftharpoons X_2 + X_4,\\
%     X_0 + X_1 \rightleftharpoons X_3 + X_1,\\
%     X_0 + X_1 \rightleftharpoons X_4 + X_1,   
% \end{aligned}
% \end{align}
% both of 
The properties noted in this section are also observed for the other CRN, which has the same stoichiometric matrix as CRN1 \cref{eq:CRN1}, with different reaction catalysts. See \KS{(\Cref{sec:simulation results for the other plateau-exhibiting CRN})}.
% The time evolutions of the concentration in CRN1 are shown in \Cref{fig:numerical simulation for slow relation} (a).

%% observation of divergence and bounds
We use the KL divergence $D_\mathrm{KL}(\bm x_t \| \bm x_\mathrm{eq})$ instead of $C(t)$ to quantify how close the concentration $\bm x_t$ is to the equilibrium state $\bm x_\mathrm{eq}$.
In CRN1 \ref{eq:CRN1}, the KL divergence is also observed to exhibit plateaus over time (\Cref{fig:numerical simulation for slow relation}(a)). 
The upper bounds ($\up{D}_\mathrm{quad, glob}$, $\up{D}_\mathrm{quad, loc}$, $\up{D}_\mathrm{hyp, glob}$, and $\up{D}_\mathrm{hyp, loc}$) are $10^0-10^3$ larger than the divergence.
On the other hand, compared to the upper bounds, the lower bounds under the local-LC assumption are observed to rapidly decrease in the $\log$-$\log$ plot (\Cref{fig:numerical simulation for slow relation}(b)).

In this system, the upper bound is likely to be close to the original divergence because of the following two reasons:
\begin{itemize}
    \item We show the time series of the force norm and its bounds in \Cref{fig:numerical simulation for slow relation}(c). The lower bound observed to be closer to the original norm than the upper bound. One potential reason is that the singular values of the CRN1 \cref{eq:CRN1}, $\sigma_1 = \sqrt{5}$ and $\sigma_2 = \sigma_3 = \sigma_4 = 1$, are biased toward the minimal singular value $1$.
    \item The EPR bounds as functions of $\bm \omega(\bm x)$ and $\bm f(\bm x)$ are observed in \Cref{fig:numerical simulation for slow relation}(d).
    The lower bounds are closer to the original EPR than the upper bounds.
\end{itemize}

We can asymptotically explain that the lower bound become too small in reaction systems with varying timescales. 
More generally, for the quadratic dissipation, the ratio between the upper and lower divergence bounds depends exponentially on $\up{\Omega}_\mathrm{log} - \low{\Omega}_\mathrm{log} = \int_0^t [\max_r \omega_r - \min_r \omega_r] \dd \tau$ as $t \to \infty$: $\up{D}_\mathrm{quad}/\low{D}_{\mathrm{quad}} \propto \exp(-(\low{\Omega}_\mathrm{log} - \up{\Omega}_\mathrm{log})) = \exp(\up{\Omega}_\mathrm{log} - \low{\Omega}_\mathrm{log})$.
Therefore, if the orders of the activities differ significantly, both bounds should not be close to the original divergence.

%% quadratic vs. hyperbolic bounds
We compare the hyperbolic and quadratic bounds in the CRN.
The quadratic upper bounds outperform the hyperbolic ones in the global and local cases.
On the other hand, more quantitatively, their differences are less than one-tenth of one percent.
% For the lower bound, while their decreasing speeds are different, both bounds rapidly become small to the divergence.
Both the quadratic and hyperbolic lower bounds decrease rapidly compared to the original divergence.

%%convexity and plateaus
Finally, we note the relationship between convexity and slow relaxation.
Under local convexity conditions, plateaus also emerge in the time evolution of the upper bounds.
In contrast, no plateau is observed for the upper bounds under global convexity.
This result indicates that it is necessary to consider the state-dependent convexity of the thermodynamic function in order to capture plateaus.
Furthermore, recalling the form of the upper bound under locally PL, it can be said that the information on the minimum eigenvalue of $S$, the minimum activities, and the local convexity is sufficient to explain the formation of plateaus, at least for this system.

\begin{figure}[htbp]
    \centering
    \includegraphics[width=.8\linewidth]{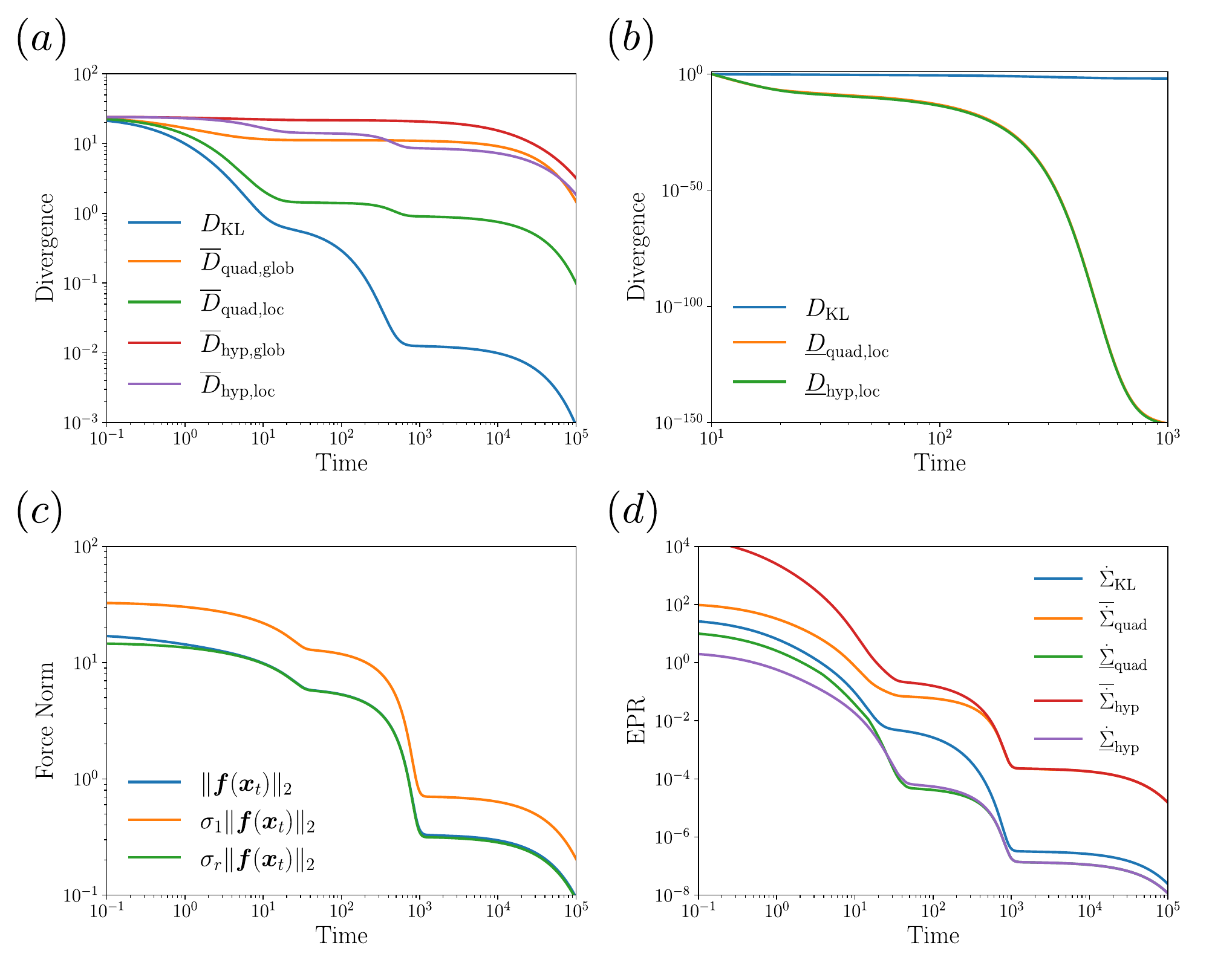}
    \caption{The time courses of the generalized KL divergence and the corresponding bounds in the CRN1 \cref{eq:CRN1} $(0\leq t\leq T = 10^5)$. (a) The time courses of the divergence $D_\mathrm{KL}(\bm x_t \| \bm x_\mathrm{eq})$ and its bounds. We show the local upper bounds $\up{D}_\mathrm{hyp,loc}$ \cref{eq:hyp_loc_upper}, $\up{D}_\mathrm{quad,loc}$ \cref{eq:quad_loc_upper} for $\rho_{\bm x_\mathrm{eq}}^\mathrm{KL}(\bm x)$, and the global upper bounds $\up{D}_\mathrm{hyp,glob}, \up{D}_\mathrm{quad,glob}$ for $\low{\rho}_{\bm x_\mathrm{eq}}^\mathrm{KL} = \min_{\bm x\in \mathcal{O}_{10^5}(\bm x_0)} \rho_{\bm x_\mathrm{eq}}^\mathrm{KL}(\bm x)$. 
    % We exploit locally-PL and locally-LC conditions with respect to $m_{\bm x_\mathrm{eq}}(\bm x) = L_{\bm x_\mathrm{eq}}(\bm x) = \rho_{\bm x_\mathrm{eq}}^\mathrm{KL}(\bm x)$, and the global convexity parameters $\low{m_{\bm x_\mathrm{eq}}} = \min_{\bm x\in \mathcal{O}_{10^5}(\bm x_0)} \rho_{\bm x_\mathrm{eq}}^\mathrm{KL}(\bm x)$ and $\up{L_{\bm x_\mathrm{eq}}}  = \max_{\bm x\in \mathcal{O}_{10^5}(\bm x_0)} \rho_{\bm x_\mathrm{eq}}^\mathrm{KL}(\bm x)$, respectively.
    (b) The time courses of the KL divergence, the local lower bounds $\low{D}_\mathrm{hyp,loc} \cref{eq:hyp_loc_lower}, \low{D}_\mathrm{quad,loc}$ \cref{eq:hyp_loc_lower} for $\rho_{\bm x_\mathrm{eq}}^\mathrm{KL}(\bm x)$. 
    (c) The time courses of the force norm $\norm{\bm f (\bm x_t)}_2$, its upper bound \cref{eq:upper bound of thermodynamic force norm by gradient} and lower bound \cref{eq:lower bound of thermodynamic force norm by gradient}. 
    (d) The time courses of the EPR $\dot\Sigma(\bm x_t, \bm f(\bm x_t))$, the quadratic upper bound $\up{\dot \Sigma}_\mathrm{quad}$ and lower bound  $\low{\dot \Sigma}_\mathrm{quad}$ \cref{eq:quadratic EPR bounds}, and the hyperbolic upper bound $\up{\dot \Sigma}_\mathrm{hyp}$ and lower bound $\low{\dot \Sigma}_\mathrm{hyp}$ \cref{eq:hyperbolic EPR bounds}.}
    \label{fig:numerical simulation for slow relation}
\end{figure}

\section{Conclusion and Discussion}
\label{sec:Conclusion and Discussion}
In this paper, we have analytically derived upper/lower bounds, characterized by the stoichiometric matrix, thermodynamic functions, and dissipative structure of the CRN dynamics, and applied them to the slowly relaxing CRNs.
To evaluate the divergence from above and below, we have exploited generalized gradient flow structure of equilibrium CRNs and quantified the convexity of thermodynamic functions.
% In particular, our upper bounds not only provide quantitative limits for relaxation but also reveal the importance of local convexity in thermodynamic functions for relaxation plateaus.
In particular, our upper bounds not only provide quantitative limits for relaxation but also clarify the role of local convexity in thermodynamic functions in the emergence of relaxation plateaus. To the best of our knowledge, this is the first quantitative analytical characterization of such plateaus within an \KS{analytic} framework, marking a pioneering step toward an analytical theory of slow relaxation.
% \KS{[stressed important results of the plateaus]}

% A notable difference in the convergence rate of CRNs compared to convex optimization methods, such as Gradient Descent \cite{garrigos2024} or Gradient Flow \cite{ang2020convergence}, is that the time $t$ is replaced by the time-integrated activities. 
Unlike the convergence rate of convex optimization methods like Gradient Descent \cite{Nesterov2018,Polyak2021,Boyd_Vandenberghe_2004}, the CRN convergence depends on time-integrated activities instead of time.
This difference comes from the multi-timescale nature of chemical reactions in CRNs. Timescale separation (i.e. fast-slow separation or quasi-steady-state approximation) in multi-timescale systems treats the fast variables as constants and reduces them to the system that is closed with respect to the slow variables \cite{Strogatz2019}. This may be represented by taking the limit $\low{\Omega} \to 0$ or $\up{\Omega} \to \infty$ within our bounds. 
Consequently, for the CRNs, our lower bounds provide less information than our upper ones.

% \ch{In our method, we translate the cause of the convergence of the KL divergence to the width of Activity. Note, on the other hand, that no explanation is given as to \textbf{why} Activity varies.}

We note the potential difficulty in testing the bounds in CRNs. 
That is, the stoichiometric conditions of CRNs can be controlled in numerical experiments, but it is difficult to control their kinetics.
In general, in physically and chemically natural kinetics---such as mass-action kinetics, Michaelis-Menten kinetics \cite{michel2013}, and Hill-type kinetics \cite{Weiss1997,Hernandez2021}---the reaction rate $j$ is proportional to the stoichiometry-dependent term: $j^\pm(\bm x) = g(\bm x) \cdot \prod_i x_i^{s_{ij}^\pm}$ \cite{Kobayashi2023}.
For systems that have the same stoichiometric matrix, we can consider various kinetics by the difference in the catalytic mechanisms, which is corresponding to the change in $g(\bm x)$.
For example, we compare the two CRNs with the same stoichiometry but with the different catalysts in {\Cref{sec:simulation results for the other plateau-exhibiting CRN}}. 
Conversely, we cannot naturally set CRNs with different kinetics and the same stoichiometry, due to the stoichiometry-dependence of kinetics.

We discuss two directions for improving the performance of divergence bounds.
One is to extend our approach to a wider class of dissipative structures.
Dissipative structures are not limited to quadratic or hyperbolic forms; they can be formulated for infinitely many types of reaction activities \cite{Nagayama2025}.
% Specifically, it is necessary to explore alternative lower bounds for non-convex dissipation functions because \Cref{lemma:bounds of EPR} is not directly applicable to that class.
Another potential direction is to investigate bounds by norms with more favorable properties than the $p$-norms. For example, the Luxemburg norm and the Orlicz norm \cite{krasnoselskij1960convex} 
% --- both defined for Young functions, a class of convex functions that diverge rapidly enough at infinity --- 
are candidates for norms compatible to dissipative structures.

% \clearpage

% \newpage

% \appendix

\section*{Acknowledgments}
We are deeply grateful to Matthias Liero for valuable discussions and for sharing his expertise on convergence in reaction–diffusion systems. 
\KS{The first author used generative AI tools—ChatGPT (OpenAI) and DeepL Translator (DeepL SE)—to assist with language editing (translation, spelling, grammar, and general style).}

\bibliographystyle{siamplain}
\bibliography{reference}
\end{document}

%% file: ex_shared.tex
\usepackage{lipsum}
\usepackage{amsfonts}
\usepackage{graphicx}
\usepackage{epstopdf}
\usepackage{algorithmic}
\ifpdf
  \DeclareGraphicsExtensions{.eps,.pdf,.png,.jpg}
\else
  \DeclareGraphicsExtensions{.eps}
\fi
\usepackage{amsmath}
\usepackage{amssymb}
\usepackage{mathrsfs}
\usepackage{mathtools}
\usepackage{physics}
\usepackage{enumerate}
\usepackage[sort]{cite} 
\usepackage{bm}
\usepackage{lmodern}
\usepackage{enumitem}
\usepackage[normalem]{ulem}
\usepackage{thm-restate}
\usepackage{comment}

\usepackage{color}%

\renewcommand*{\thetheorem}{\thesection.\arabic{theorem}}
\newtheorem{assumption}{Assumption}

\newsiamremark{remark}{Remark}
\newsiamremark{hypothesis}{Hypothesis}
\crefname{hypothesis}{Hypothesis}{Hypotheses}
\newsiamthm{claim}{Claim}
\crefname{assumption}{assumption}{assumptions}

\headers{Convex Analysis of Relaxation Dynamics in CRN}{K. Sugie, D. Loutchko, T. J. Kobayashi.}

\title{Convex Analysis of Relaxation Dynamics in Chemical Reaction Networks and Generalized Gradient Flows\thanks{Submitted to the editors \today.
\funding{\KS{This work was supported by JST CREST (Grant Numbers JPMJCR2011 and JPMJCR25Q2), and JSPS KAKENHI (Grant Number  25H01365). The first author is financially supported by JST SPRING (Grant Number  JPMJSP2108).}}}}

\author{Keisuke Sugie\thanks{Department of Mathematical Informatics, Graduate School of Information Science and Technology, The University of Tokyo, 7-3-1 Hongo, Bunkyo-ku, Tokyo 113-8656, Japan
  (\email{sgekisk@sat.t.u-tokyo.ac.jp}).}
\and Dimitri Loutchko\thanks{Institute of Industrial Science, The University of Tokyo, 4-6-1, Komaba, Meguro-ku, Tokyo 153-8505 Japan
  (\email{dimitri@sat.t.u-tokyo.ac.jp}, \email{tetsuya@sat.t.u-tokyo.ac.jp}).}
\and Tetsuya J. Kobayashi\footnotemark[3]}

\usepackage{amsopn}

\newcommand{\X}{\mathcal{X}}
\newcommand{\M}{\mathcal{M}}

\newcommand{\F}{\mathcal{F}}
\newcommand{\Stoich}{\mathcal{S}}
\newcommand{\Equib}{\mathcal{E}}
\newcommand{\Spc}{\mathcal{I}}

\newcommand{\Z}{\mathbb{Z}}
\newcommand{\R}{\mathbb{R}}

\newcommand{\B}{\mathcal{B}}

\DeclareMathOperator{\rk}{rk}
\DeclareMathOperator{\Ker}{Ker}

\newcommand{\low}[1]{\underline{#1}}
\newcommand{\up}[1]{\overline{#1}}

\newcommand{\KS}[1]{{#1}}
\newcommand{\DL}[1]{{#1}}

\DeclareRobustCommand{\UL}[1]{\ifodd\numexpr #1\relax U\else L\fi}

\newenvironment{subasms}{%
    \begin{enumerate}[label=(\UL{\theenumi}), ref=\theassumption(\UL{\theenumi}), font=\normalfont\bfseries, leftmargin=*, labelindent=0pt, labelsep=.5em]
}{%
    \end{enumerate}
}%
\newenvironment{subthms}{%
    \begin{enumerate}[label=(\UL{\theenumi}), ref=\thetheorem(\UL{\theenumi}), font=\normalfont\bfseries, leftmargin=*, labelindent=0pt, labelsep=.5em]
}{%
    \end{enumerate}
}%
\newenvironment{sublems}{%
    \begin{enumerate}[label=(\UL{\theenumi}), ref=\thelemma(\UL{\theenumi}), font=\normalfont\bfseries, leftmargin=*, labelindent=0pt, labelsep=.5em]
}{%
    \end{enumerate}
}%

\makeatletter
\newcommand*{\addFileDependency}[1]{% argument=file name and extension
  \typeout{(#1)}% latexmk will find this if $recorder=0 (however, in that case, it will ignore #1 if it is a .aux or .pdf file etc and it exists! if it doesn't exist, it will appear in the list of dependents regardless)
  \@addtofilelist{#1}% if you want it to appear in \listfiles, not really necessary and latexmk doesn't use this
  \IfFileExists{#1}{}{\typeout{No file #1.}}% latexmk will find this message if #1 doesn't exist (yet)
}
\makeatother

\newcommand*{\myexternaldocument}[1]{%
    \externaldocument{#1}%
    \addFileDependency{#1.tex}%
    \addFileDependency{#1.aux}%
}